\definecolor{light-gray}{gray}{0.70}
\definecolor{dark-gray}{gray}{0.40}
\definecolor{very-light-gray}{gray}{0.90}
\theoremstyle{plain}
\newtheorem{theorem}{Theorem}%[section]
\newtheorem{lemma}[theorem]{Lemma}
\newtheorem{remark}[theorem]{Remark}
\theoremstyle{definition}
\newtheorem*{definition}{Definition}
\newcounter{step} 
\newcommand{\proofstep}{\par\refstepcounter{step} \noindent {\bf Step~\thestep.\space }\ignorespaces}
\DeclareMathOperator*{\esssup}{ess\,sup}
\def\bq{\begin{eqnarray}}
\def\eq{\end{eqnarray}}
\def\bqq{\begin{align*}}
\def\eqq{\end{align*}}
\def\nn{\nonumber}
\def\minus {\backslash}
\def\eps{\varepsilon}
\def\wto{\rightharpoonup}
\newcommand{\norm}[1]{\left\lVert #1 \right\rVert}
\renewcommand{\Re}{\operatorname{Re}}
\newcommand\1{{\ensuremath {\mathds 1} }}
\newcommand*\dotv{{}\cdot{}}
\def\R {\mathbb{R}}
\def\N {\mathcal{N}}
\def\cE {\mathcal{E}}
\def\R {\mathbb{R}}
\def\N {\mathbb{N}}
\def\sS{\mathbb{S}}
\def\d{{\, \rm d}}
\newcommand{\abs}[1]{\lvert#1\rvert}
\newcommand{\blue}{\ignorespaces}
\title[Nonexistence in Thomas-Fermi-Dirac-von Weizs\"acker theory] {Nonexistence in Thomas-Fermi-Dirac-von Weizs\"acker theory with small nuclear charges}
\author[P.T. Nam]{Phan Th\`anh Nam}
\address{Institute of Science and Technology Austria, Am Campus 1, 3400 Klosterneuburg, Austria} 
\email{pnam@ist.ac.at}
\author[H. Van Den Bosch]{Hanne Van Den Bosch}
\address{Instituto de F\'{i}sica, Pontificia Universidad Cat\'olica de Chile, Av. Vicu\~na Mackenna 4860, Santiago, Chile} 
\email{hannevdbosch@fis.puc.cl}
\begin{document}

\begin{abstract} 
We study the ionization problem in the Thomas-Fermi-Dirac-von Weizs\"acker theory 
for atoms and molecules. We prove the nonexistence of minimizers for the energy functional when the number of electrons is large and the total nuclear charge is small. This nonexistence result also applies to external potentials decaying faster {\blue than} the Coulomb potential. In the case of arbitrary nuclear charges, we obtain the nonexistence of stable minimizers and radial minimizers.
\end{abstract}

\date{\today}

\maketitle

\setcounter{tocdepth}{1}
\tableofcontents
%\addcontentsline{toc}{section}{Contents}

\section{Introduction}

It is a well-known experimental fact that highly negative ions do not exist: in fact, a neutral atom can bind at most one or two additional electrons. Heuristically, if a neutral atom has too many extra electrons, then the outermost electron has no electrostatic favor to stay together with the rest of the system{\blue,} which will have a negative net charge. However, deriving this fact rigorously from the first principles of quantum mechanics is a longstanding open problem, often referred to as the {\em ionization conjecture}, see for example \cite[Problem 9]{Simon-00} and \cite[Chapter 12]{LieSei-10}. We refer to \cite{Sigal-82,Ruskai-82,Lieb-84,LieSigSimThi-88,FefSec-90,SecSigSol-90,Nam-12} for partial results in the full many-body quantum theory.  The ionization problem has {\blue also been} studied (and solved) in several approximate models such as the Thomas-Fermi theory \cite{LieSim-77b}, the Thomas-Fermi-Dirac theory \cite{Benguria-79, Lieb-81b}, the Thomas-Fermi-von Weizs\"acker theory \cite{BenBreLie-81,BenLie-84} and the Hartree-Fock theory \cite{Solovej-91,Solovej-03}.

In this paper, we study the ionization problem in the Thomas-Fermi-Dirac-von Weizs\"acker (TFDW) theory. We consider the variational problem
\bq \label{eq:variational-problem}
I_V(m)=\inf \left\{ \cE_V(u)\,:\, u\in H^1(\R^3), \int_{\R^3} |u(x)|^2 \d x = m  \right\}
\eq
where $\cE_V$ is the TFDW energy functional
\begin{multline*}
\cE_V(u) = \int_{\R^3} \left( c_{\rm TF} |u(x)|^{10/3} - c_{\rm D} |u(x)|^{8/3} + c_{\rm W} |\nabla u(x)|^2 +V(x)|u(x)|^2 \right) \d x \\ + \frac{1}{2} \iint_{\R^3\times \R^3} \frac{|u(x)|^2 |u(y)|^2}{|x-y|} \d x \d y 
\end{multline*}
and
\bq \label{eq:def-V}
V(x)= -\sum_{j=1}^J \frac{Z_j}{|x-{\bf r}_j|}.
\eq
The functional $\cE_V(u)$  models the energy of a system of $m$ quantum electrons interacting with $J$ classical nuclei fixed at positions $\{{\bf r}_j\}_{j=1}^J \subset \R^3$. Here{\blue ,} $|u(x)|^2$ is interpreted as the electron density. The nuclear charges $\{Z_j\}$ and the total number of electrons $m$ are nonnegative numbers, which are not necessarily integers. The constants $c_{\rm TF}, c_{\rm D}$ and $c_{\rm W}$ are positive numbers, whose precise values are not important in our analysis. 

The first and the last terms in $\cE_V(u)$ are the semiclassical approximations of the kinetic energy and the self-interaction energy of electrons. They were introduced independently by Thomas \cite{Thomas-27} and Fermi \cite{Fermi-27} in 1927 in their celebrated theory for the electron distribution in atoms and molecules. Although the Thomas-Fermi theory captures the precise leading order of the ground state energy of large systems \cite{LieSim-77b}, it has some serious defects, most notably Teller's no-binding theorem \cite{Teller-62} and the absence of negative ions. Therefore, further corrections are necessary. In 1930, Dirac \cite{Dirac-30} proposed the correction $-c_{\rm D}\int |u|^{8/3}$ which models the exchange energy and in 1935, von Weizs\"acker suggested the correction $c_{\rm W} \int |\nabla u|^2$ to the kinetic energy. When only one of these two corrections is taken into account, we are left with the Thomas-Fermi-Dirac and the Thomas-Fermi-von Weizs\"acker theories. The simultaneous appearance of two corrections makes the TFDW theory more precise but also more difficult to analyze than its ancestors. We refer to \cite{Lieb-81b} for a pedagogical introduction to these density functional theories and their connections to many-body quantum mechanics.

One of the most fundamental {\blue questions} in the TFDW theory is to estimate the values of the parameters for which the variational problem  \eqref{eq:variational-problem} has minimizers. In 1987, Lions \cite{Lions-87} proved the existence of minimizers when 
$$m \le \sum_{j=1}^J Z_j$$
(see also \cite{Bris-93}). However, the ionization problem, which corresponds to the \emph{nonexistence} of minimizers when $m$ is large,  remains mostly open. In fact, the nonexistence when $V\equiv 0$ is already surprisingly delicate and has been solved recently by Lu and Otto \cite{LuOtto-14}.  Their proof is based crucially on the translation-invariance of $\cE_0(u)$ and does not apply to the general case. 

In this paper, we will establish the nonexistence of minimizers for  \eqref{eq:variational-problem} with an external potential $V$. Our main result is the following.

\begin{theorem}[Nonexistence with small nuclear charges] \label{thm:small-Z} There exist constants $Z_{\rm c}>0$ and $M_{\rm c}>0$ such that the variational problem $I_V(m)$ in \eqref{eq:variational-problem} with $V$ in \eqref{eq:def-V} has no minimizers when
$$
\sum_{j=1}^J Z_j \le Z_{\rm c} \quad \text{and} \quad m \ge M_{\rm c}. 
$$
\end{theorem}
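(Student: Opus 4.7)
The plan is to argue by contradiction: assume a nonnegative minimizer $u = u_m$ of $I_V(m)$ exists for some pair $(m, Z)$ with $m \ge M_{\rm c}$ and $Z := \sum_j Z_j \le Z_{\rm c}$, and produce a competing trial function of the same $L^2$-mass with strictly smaller energy. The construction---classical in the ionization analyses of Thomas--Fermi--Dirac and Hartree--Fock theories \cite{Benguria-79, Solovej-91}---is to cut off a small outer tail of $u$ and translate it to spatial infinity: the core piece has net charge $\approx m-Z > 0$, so the translated tail sheds more Coulomb self-interaction than the nuclear attraction it loses. As preparation, I would derive the Euler--Lagrange equation
\[
(-c_{\rm W}\Delta + V_{\rm eff})\,u = \mu\,u,\qquad V_{\rm eff} := \tfrac{5}{3}c_{\rm TF}|u|^{4/3} - \tfrac{4}{3}c_{\rm D}|u|^{2/3} + V + |u|^2 \ast |x|^{-1},
\]
and observe $V_{\rm eff}(x) \to (m-Z)/|x| \ge 0$ as $|x|\to\infty$; a standard Agmon-type argument then yields $|u(x)|^2 + |\nabla u(x)|^2 \le C e^{-\alpha|x|}$ with constants $\alpha, C > 0$ uniform in the allowed parameter range.

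With decay in hand, fix a smooth IMS-type partition $\chi_R^2 + \eta_R^2 = 1$ satisfying $\chi_R \equiv 1$ on $B_R$, $\chi_R \equiv 0$ outside $B_{R+1}$, and $|\nabla\chi_R|+|\nabla\eta_R| \le C$. Set $m_2 := \int |u\eta_R|^2\,dx$ and $m_1 := m - m_2$. For $L \gg R$ the supports of $u\chi_R$ and $(u\eta_R)(\cdot - Le_1)$ are disjoint, so
\[
\tilde u_L(x) := (u\chi_R)(x) + (u\eta_R)(x - Le_1)
\]
is admissible for $I_V(m)$, and $\cE_V(\tilde u_L) \to \cE_V(u\chi_R) + \cE_0(u\eta_R)$ as $L\to\infty$ because the Coulomb cross term and the $V$-coupling of the translated piece both decay like $1/L$. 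Writing $D(f,g) := \iint f(x)g(y)|x-y|^{-1}\,dx\,dy$, the heart of the argument is the identity
\begin{multline*}
\cE_V(u) - \cE_V(u\chi_R) - \cE_0(u\eta_R) = D(|u\chi_R|^2,|u\eta_R|^2) + \int V|u\eta_R|^2\,dx \\ + c_{\rm TF} E_1 - c_{\rm D} E_2 - c_{\rm W} E_3,
\end{multline*}
where $E_1, E_2, E_3 \ge 0$ are cutoff errors localized in the shell $\{R < |x| < R+1\}$ (from the superadditivity of $t\mapsto t^{5/3}$ and $t\mapsto t^{4/3}$, and from the IMS gradient identity). By Newton's theorem $D(|u\chi_R|^2,|u\eta_R|^2) \gtrsim m_1 m_2/R$, while $\int V|u\eta_R|^2\,dx \ge - Zm_2/(R - \max_j|{\bf r}_j|)$, so the productive contribution is $\sim (m_1 - Z)m_2/R$.

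The main obstacle is the tightness of this comparison: using the decay estimate, both $m_2$ and the principal cutoff error $c_{\rm W}E_3 \lesssim \int_{R<|x|<R+1}|u|^2\,dx$ are of order $R^2 e^{-\alpha R}$, so the productive gain $\sim (m-Z)\,R\,e^{-\alpha R}$ beats the error only by the polynomial factor $(m-Z)/R$. This forces combining quantitative exponential decay of $u$ and $\nabla u$ in the shell, a uniform lower bound $\mu \le -\delta < 0$ on the chemical potential across the admissible $(m, Z)$-range, and a choice of $R$ as a sufficiently large constant depending only on $\alpha$, $C$, and $\max_j|{\bf r}_j|$. The remaining cutoff errors $c_{\rm D}E_2 \lesssim \int_{\rm shell}|u|^{8/3}\,dx$ and $c_{\rm TF}E_1 \lesssim \int_{\rm shell}|u|^{10/3}\,dx$ are of strictly smaller exponential order ($e^{-4\alpha R/3}$ and $e^{-5\alpha R/3}$ respectively) and cause no difficulty. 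The thresholds $M_{\rm c}$ and $Z_{\rm c}$ are the quantitative expression of these uniformity requirements; once they are in force, one concludes $\cE_V(u\chi_R) + \cE_0(u\eta_R) < \cE_V(u) = I_V(m)$, whence $I_V(m) \le \lim_{L\to\infty} \cE_V(\tilde u_L) < I_V(m)$, a contradiction.
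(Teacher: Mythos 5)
Your proposal takes a route that is entirely different from the paper's, and it has a genuine gap at its foundation. The paper does \emph{not} construct a better competitor by cutting a tail off the minimizer; instead it shows (Lemmas \ref{lem:rm-am}, \ref{lem:Rm>m3/2}) that the energy of $u_m$ nearly splits as $I_V(a_m)+I_0(m-a_m)$ with $a_m\le Z+Cm^{-1/2}$ and an error $-C/m$, and then derives the contradiction from a \emph{quantitative strict binding inequality}, $I_V(a_m)+I_0(m-a_m)-I_V(m)\ge \lambda_0 Z-CZa_m$, whose positivity is exactly where the smallness of $Z$ enters. Your argument, by contrast, makes no essential use of $Z\le Z_{\rm c}$ at all: if it worked it would prove nonexistence for $m>Z+C$ with arbitrary $Z$, which is the much stronger theorem of \cite{FraNamBos-16}, proved there by completely different and harder means. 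That mismatch is a symptom of the gap rather than a bonus.

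The gap is the uniform pointwise decay $|u(x)|^2+|\nabla u(x)|^2\le Ce^{-\alpha|x|}$ with $C,\alpha$ independent of $m$. First, such a bound would give $m=\int|u|^2\le 8\pi C/\alpha^3$, i.e.\ it already \emph{is} the theorem; so it cannot follow from a ``standard Agmon-type argument'' unless that argument secretly contains the whole proof. Second, the Agmon scheme does not close here: you need a region where $V_{\rm eff}-\mu\ge\epsilon>0$ uniformly, but (a) the uniform bound $\mu\le-\delta<0$ on the Lagrange multiplier is asserted, not proved, and is not available a priori (this is precisely the delicacy that forced Lu--Otto's argument even for $V\equiv 0$), and (b) $V_{\rm eff}$ contains $-\tfrac43 c_{\rm D}|u|^{2/3}$, so $V_{\rm eff}\ge 0$ only where $|u|$ is already small --- a circularity that cannot be broken without first locating the mass. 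The paper's radius estimates show that if a minimizer exists for large $m$, its mass concentrates in an annulus of radius $R_m\ge m^2/C_V^3$; this is incompatible with decay from the origin at any fixed rate and is why the paper works with integrated annulus/slab localization estimates rather than pointwise decay. Without Step~1, your shell-error bookkeeping ($m_2$ and $E_3$ both of order $e^{-\alpha R}$, gain $\sim(m-Z)m_2/R$) has nothing to stand on, and the choice ``$R$ a sufficiently large constant'' cannot be reconciled with controlling $\int_{R<|x|<R+1}|u|^2$ for a minimizer whose mass lives at distance $\sim m^2$.
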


Our overall strategy is to show that if $I_V(m)$ has a minimizer with $m$ sufficiently large, then all but $a \le \sum_{j=1}^J Z_j$ electrons must escape to infinity. The energy of {\blue the} electrons that {\blue stay} is not smaller than $I_V(a)$, while the energy of {\blue the} electrons that {\blue escape} is not smaller than $I_0(m-a)$. Since this splitting of the energy will only hold up to an error term, we need a quantitative version of the strict binding inequality $I_V(m)<I_V(a)+I_0(m-a)$ in order to obtain a contradiction. In this last step we have to assume the smallness of the nuclear charges. We hope to be able to remove this technical assumption in the future.  

As by-products of our proof, we obtain some related results. First, the nonexistence can be extended easily to any external potential which decays faster than the Coulomb potential. 

\begin{theorem}[Nonexistence with short-range potentials] \label{thm:short-range} Assume that
$$ V\in L_{\rm loc}^{3/2}(\R^3),\quad V(x) \le 0 \quad \text{and}\quad \lim_{|x|\to \infty} |x| V(x) =0.$$
Then there exists $M_{\rm c}'> 0$ such that the variational problem $I_V(m)$ in \eqref{eq:variational-problem} has no minimizer when $m \geq M_{\rm c}'$.
\end{theorem}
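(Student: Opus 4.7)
The plan is to argue by contradiction along the same lines as Theorem~\ref{thm:small-Z}: we assume $u_m$ is a minimizer of $I_V(m)$ for arbitrarily large $m$ and derive a contradiction. The role played in Theorem~\ref{thm:small-Z} by the smallness of $\sum_j Z_j$ is taken over here by the short-range decay $|x|V(x)\to 0$, which ensures that $V$ behaves at infinity like a Coulomb potential with arbitrarily small effective charge.

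Fix $R\gg 1$ and a smooth radial partition of unity $\eta_R^2+\xi_R^2=1$ with $\eta_R\equiv 1$ on $\{|x|\le R\}$, $\supp\eta_R\subset\{|x|\le 2R\}$, and $|\nabla\eta_R|+|\nabla\xi_R|=O(1/R)$. Set $a_R=\int\eta_R^2|u_m|^2$. Combining the IMS identity for the gradient, the pointwise subadditivity of the Thomas--Fermi and Dirac densities (whose errors are supported in the annulus $\{R\le|x|\le 2R\}$ and controlled by uniform decay estimates on $u_m$), the positivity of the Coulomb cross term $D((\eta_R u_m)^2,(\xi_R u_m)^2)\ge 0$, and the short-range bound $\big|\int V\,\xi_R^2|u_m|^2\big|=o(1)$ (which follows from $|V(x)|\le\varepsilon_R/|x|$ on $\{|x|\ge R\}$ with $\varepsilon_R\to 0$), we obtain a splitting lower bound
\begin{equation*}
  \mathcal{E}_V(u_m)\ \ge\ \mathcal{E}_V(\eta_R u_m)+\mathcal{E}_0(\xi_R u_m)+\mathrm{Err}(R),
\end{equation*}
with $\mathrm{Err}(R)\to 0$ as $R\to\infty$. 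In particular, $I_V(m)\ge I_V(a_R)+I_0(m-a_R)-o_{R\to\infty}(1)$.

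Next we must control the bound mass $a_R$. Using uniform a priori estimates on TFDW minimizers (in particular, a uniform $L^\infty$-bound and polynomial decay at infinity, both depending only on the functional constants and $V$), we pick $R=R(m)$ such that $a_{R(m)}$ equals a fixed level $a^\ast<\infty$; this $R(m)$ tends to infinity with $m$. It then remains to invoke a quantitative strict binding inequality
\begin{equation*}
  I_V(m)\ \le\ I_V(a)+I_0(m-a)-c_0,\qquad 0\le a\le a^\ast,\ m\ge M_c',
\end{equation*}
with some $c_0>0$ independent of $m$ and $a$; combined with the splitting lower bound and $R(m)\to\infty$, this yields $c_0\le 0$, the desired contradiction.

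The main obstacle is the quantitative strict binding inequality. For Theorem~\ref{thm:small-Z} its proof is exactly where the assumption $\sum_j Z_j\le Z_c$ is used, since the escaping electrons still feel the Coulomb tail of $V$. In our setting, the short-range decay $|x|V(x)\to 0$ makes the trial-function interaction terms vanish as the escaping mass is placed at distance $R'\to\infty$, so the strict gap $c_0>0$ follows essentially for free from the fixed energy advantage of binding a small amount of mass under the negative potential $V$ (and from the Lu--Otto nonexistence theorem when $V\equiv 0$) --- in other words, the whole argument becomes easier precisely because $V$ has no Coulomb tail.
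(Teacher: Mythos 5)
Your overall skeleton (localize, split the energy, contradict a quantitative strict binding inequality) is the paper's strategy, but two steps that you treat as routine are exactly where the real work lies, and as written they do not go through. First, the localization error. The error in Lemma~\ref{lem:basic-loc} is $C_2\int_\Omega|u_m|^2$ with $C_2\ge c_{\rm D}^2/(4c_{\rm TF})$: the Dirac/Thomas--Fermi contribution to the error is proportional to the \emph{mass in the transition region} with a constant that does not improve when you spread the cutoff out, so $|\nabla\eta_R|=O(1/R)$ buys you nothing there. To make this error vanish you must exhibit a cut radius whose transition annulus carries $o(1)$ mass, and your appeal to ``uniform $L^\infty$-bounds and polynomial decay at infinity'' cannot supply this: no such decay holds uniformly in $m$, since the minimizer's mass concentrates in an annulus at distance $R_m\to\infty$ (Lemmas~\ref{lem:Rm>>1}--\ref{lem:Rm>m}). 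The paper instead proves $\int_{|x|\le R_m/2}|u_m|^2\le C_V$ and $R_m\ge m/C$, and then finds $r_m\in[R_mm^{-1/2},2R_mm^{-1/2}]$ by pigeonhole with annulus mass $O(m^{1/2}/R_m)\to0$. The same radius lower bound is also needed for your potential-energy error: $\bigl|\int V\xi_R^2|u_m|^2\bigr|\le\eps_R\,m/R$ is only $o(1)$ if $R\gtrsim m$, which your choice ``$a_{R(m)}=a^\ast$'' does not guarantee. Finally, note that the cross term $D(|\chi u_m|^2,|\eta u_m|^2)$ is not merely ``positive'': in Lemma~\ref{lem:rm-am} it is the screening mechanism that forces the bound mass $a_m$ to be at most $\sup_{|x|\ge r_m}|xV(x)|+o(1)\to0$; discarding it leaves you with no control on the bound mass at all.

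Second, the quantitative strict binding inequality $I_V(m)\le I_V(a)+I_0(m-a)-c_0$ uniformly for $a\in[0,a^\ast]$ is not ``essentially for free,'' and for a fixed, not necessarily small $a^\ast$ it is doubtful (Theorem~\ref{thm:no-stable-minimizers}~(i) shows that strict binding can fail at some masses even for short-range $V$). The paper's route is different and uses that $a_m\to0$: one shows $I_V(m)-I_0(m)\to0$ from the splitting, and separately that $I_0(m)-I_V(m)\ge\delta_V>0$ for all $m\ge M_0$, where $\delta_V=\inf_{a\in[M_0/2,M_0]}(I_0(a)-I_V(a))$. Establishing $\delta_V>0$ requires the mass decomposition of Lemma~\ref{lem:mass-decomp-I0}~(ii), the Lu--Otto nonexistence result (to place the decomposition piece in the compact window $[M_0/2,M_0]$), the translation of an $I_0$-minimizer into the potential well to get strict inequality $I_0(a)>I_V(a)$, and continuity in $a$. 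None of this is free, and your proposal contains no substitute for it. You would need to import the radius estimates of Section~\ref{sec:radius} and the $\delta_V$ argument essentially verbatim to close these gaps.
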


Next, for general Coulomb-type external potentials, including the molecular form \eqref{eq:def-V} with arbitrary nuclear charges, we obtain two weak forms of the nonexistence. 

\begin{theorem} [Nonexistence of stable and radial minimizers] 
\label{thm:no-stable-minimizers} Assume that 
$$ V\in L_{\rm loc}^{3/2}(\R^3), \quad V(x)\in \mathbb{R} \quad \text{and}\quad \limsup_{|x|\to \infty} |x V(x)| <\infty.$$
\noindent {\rm (i)} There exist $a>0$ and $b>0$ such that both $I_V(a)$ and $I_0(b)$ have minimizers, and 
\[
 I_V(a+b) = I_V(a) + I_0(b).
\]

\noindent {\rm (ii)} There exists a constant $M_{\rm r}>0$ such that for all $m \ge M_{\rm r}$, a minimizer of the variational problem $I_V(m)$ in \eqref{eq:variational-problem}, if it exists, is not radially symmetric. 
\end{theorem}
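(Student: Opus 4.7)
The plan is to run Lions' concentration--compactness on a minimizing sequence $(u_n)$ of $I_V(m)$ at a large mass $m$, which is the same splitting step that drives Theorem~\ref{thm:small-Z}. Coercivity of $\cE_V$ on $H^1$ (via the kinetic and Thomas--Fermi terms) rules out vanishing, and for $m$ past the Lu--Otto \cite{LuOtto-14} threshold one checks that full compactness is inconsistent with $I_0$'s nonexistence at large mass, forcing dichotomy. Iterating on the escaping piece and using $|V(x)|=O(1/|x|)$ to kill cross interactions at infinity yields
\[
  u_n \;=\; v + \sum_{j=1}^{J}w_j(\,\cdot - y_n^{(j)}) + r_n,\qquad
  I_V(m) \;=\; \cE_V(v) + \sum_{j=1}^{J}\cE_0(w_j),
\]
with $|y_n^{(j)}|\to\infty$ pairwise separating, $r_n\to 0$ strongly in $L^2\cap L^{10/3}$, $v$ a minimizer of $I_V(\|v\|_2^2)$, and each $w_j$ a minimizer of $I_0(\|w_j\|_2^2)$. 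Setting $a:=\|v\|_2^2$ and $b:=\|w_1\|_2^2$, I then combine the binding inequality $I_V(a+b)\le I_V(a)+I_0(b)$ with the trivial upper bound $I_V(m)\le I_V(a+b)+\sum_{j\ge 2}I_0(b_j)$ and the displayed energy identity to force $I_V(a+b)=I_V(a)+I_0(b)$.

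\textbf{Part (ii).} Given $(a,b)$ from (i), set $M_{\rm r}:=a+b$ and suppose for contradiction that $u_m$ is a radial minimizer of $I_V(m)$ for some $m\ge M_{\rm r}$. Choose $R>0$ with $\int_{|x|\le R}|u_m|^2=a$ and decompose $u_m=v+w$ through a smooth radial cut-off so that $v$ has inner mass $a$ and $w$ has outer mass $m-a$. Since $|v|^2$ and $|w|^2$ are both radial with (essentially) disjoint supports, Newton's theorem collapses the Coulomb cross interaction to $a\int |w(y)|^2/|y|\,dy$ up to a cut-off error, and every other term of $\cE_V$ splits cleanly, yielding
\[
  I_V(m) = \cE_V(u_m) \;\ge\; I_V(a) + I_0(m-a)
     + \int\!\Big(V(y)+\tfrac{a}{|y|}\Big)|w(y)|^2\,dy \;-\; o_R(1).
\]
Under the hypothesis $\limsup|xV(x)|<\infty$, once $a$ exceeds this constant (if the direct output of (i) is too small, I would iterate (i) to aggregate more mass into the ``stay-home'' part, or run the construction at a larger starting $m$ to harvest a larger $a$), the integrand is $\ge \eta/|y|$ at large $|y|$. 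Thus the extra term is strictly positive, which contradicts the binding inequality $I_V(m)\le I_V(a)+I_0(m-a)$.

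\textbf{Main obstacle.} The genuinely delicate step is turning the shell penalty $\int(V+a/|y|)\rho_w$ into a \emph{quantitative} lower bound. Since $w$ is radial with support in $|x|\ge R$, one has $\int\rho_w/|y|\le (m-a)/R$, so the contradiction demands that $R$ not be allowed to escape to infinity as $m$ grows. The plan is to combine the Euler--Lagrange equation for $u_m$ with Newton's theorem: the outer radial shell feels an effective long-range potential $\sim (a-Z_{\rm eff})/|y|$, which becomes repulsive once $a$ exceeds the effective nuclear charge and forces a universal, $m$-independent decay rate for $u_m$, thereby bounding $R$. This is morally the classical ``outermost-electron'' argument in the spirit of Benguria--Brezis--Lieb and Solovej.
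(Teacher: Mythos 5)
Both parts of your proposal contain genuine gaps, and in both cases the gap sits exactly where the paper's real work is.

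For part (i), the pivotal claim that ``full compactness is inconsistent with $I_0$'s nonexistence at large mass, forcing dichotomy'' is not correct. Compactness of a minimizing sequence for $I_V(m)$ simply produces a minimizer of $I_V(m)$, and this is in no tension whatsoever with the nonexistence of minimizers for $I_0$ at large mass; indeed, ruling out minimizers of $I_V(m)$ for large $m$ is precisely the content of Theorem~\ref{thm:small-Z}, which the paper can only establish for small nuclear charges. So your profile decomposition may return no escaping bubble $w_1$ at all, and then there is no $b$. The paper's proof is accordingly split into two cases: if $I_V(m)$ has no minimizer for some $m$, then concentration--compactness (Lemma~\ref{lem:mass-decomp-IV} together with Lemma~\ref{lem:mass-decomp-I0}) yields the splitting much as you describe; but if $I_V(m)$ has a minimizer for \emph{every} $m$, one must work with the minimizer $u_m$ itself and use the localization and radius estimates of Sections~\ref{sec:radius}--\ref{sec:proof-small-Z} (in particular Lemma~\ref{lem:rm-am} and \eqref{eq:main-proof-IV-IV-I0-lower}--\eqref{eq:def-bm}) to show that $I_V(m)-I_V(a_m)-I_0(b_m)-I_0(m-a_m-b_m)\to 0$ with $a_m$ bounded and $b_m\in[M_0/2,M_0]$, and then pass to the limit. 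That second case is the genuinely new input of the theorem and is absent from your argument.

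For part (ii), your Newton's-theorem/outermost-electron strategy is essentially that of \cite{FraNamBos-16} (see the note added in proof), and you have correctly located where it fails: the shell gain is at most of order $(m-a)/R$ with $R\sim R_m$, which may grow like $m^2$, while the localization errors are of order a constant, so without an $m$-independent decay estimate there is no contradiction with \eqref{eq:non-strict-bind-ineq} --- and no such estimate is supplied (obtaining one is the hard Sommerfeld-type analysis of \cite{FraNamBos-16}). Moreover, the $a$ produced by part (i) satisfies $a\le\limsup|xV(x)|$ in the paper's construction (see \eqref{eq:am<=Z}), so it need not exceed the effective nuclear charge, and ``iterating (i) to aggregate more mass into the stay-home part'' has no mechanism behind it: the binding inequality only lets mass escape, not accumulate. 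The paper's actual proof of (ii) is entirely different and elementary: Lemma~\ref{lem:ux>=} shows that the Dirac term forces $|u_m|\ge 1/C$ somewhere in $\{|x|\ge R_m/2\}$, and for a \emph{radial} profile $v_m$ this is incompatible with the mass identity $4\pi\int_0^\infty|v_m|^2r^2\,{\rm d}r=m$ and the kinetic bound $4\pi\int_0^\infty|v_m'|^2r^2\,{\rm d}r\le C_V^2m$ once $r_1\ge R_m/2\ge m/C$ (Lemma~\ref{lem:Rm>m}): a radial bump of height $1/C$ at radius $r_1\gtrsim m$ either carries mass $\gg m$ in a thick shell or pays kinetic energy $\gg m$ to decay. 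You would need either to adopt that argument or to supply the missing quantitative decay estimate for radial minimizers.
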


Theorem \ref{thm:no-stable-minimizers} (i) corresponds to the nonexistence of stable minimizers. From the physical point of view, the equality $I_V(a+b) = I_V(a) + I_0(b)$ implies that the minimizers of $I_V(a+b)$, if they exist, are not stable because $b$ particles may escape to infinity without increasing the energy.

Theorem \ref{thm:no-stable-minimizers} (ii) is related to the fact that in the atomic case, $V(x)=-Z/|x|$,  all minimizers are  radially symmetric decreasing if $m\le Z$, see  \cite[Theorem 8.6]{Lieb-81b}. Indeed, in the atomic case, it was conjectured in \cite{Lieb-81b} that any minimizer, if it exists, must be radially symmetric. This remains an open problem. 

\subsection*{Note added in proof} In the atomic case, $V(x)=-Z/|x|$, it has been proved very recently in \cite{FraNamBos-16} that \eqref{eq:variational-problem} has no minimizer if $m>Z+C$, where $Z>0$ is arbitrary and $C>0$ is independent of $Z$.  This result is much stronger than our result in the present paper. However, the method we presented below is different from that of \cite{FraNamBos-16} and it works for a more general class of external potentials $V$ (including the molecular case). Therefore, we hope that the approach in the present paper still has some independent interest.

%In the following result, we show that there are no radially symmetric minimizers if $m$ is sufficiently large. 
%
%
%
%\begin{theorem}[Nonexistence of radial minimizers] \label{thm:no-radial-minimizers} Assume that 
%$$ V\in L_{\rm loc}^{3/2}(\R^3) \quad \text{and}\quad \limsup_{|x|\to \infty} |x V(x)| <\infty.$$
%Then there exists a constant $M_{\rm r}>0$ such that for all $m \ge M_{\rm r}$, a minimizer of the variational problem $I_V(m)$ in \eqref{eq:variational-problem}, if it exists, is not radially symmetric. 
%\end{theorem}
%
%
%Note that the result in Theorem \ref{thm:no-radial-minimizers} is not a contradiction to the conjecture in \cite[Theorem 8.6]{Lieb-81b} because $I_V(m)$ may (indeed, is expected to) have no minimizers when $m$ is large. It sis an interesting open problem to determine wether minimizers exist for $m > M_{\rm r}$, since this would correspond to a spontaneous breaking of the radial symmetry.

\subsection*{Outline of the paper} In Section~\ref{sec:general} we establish some basic properties of the energy functional and its minimizers. Then we quickly revisit the existence and nonexistence results for $I_0(m)$ in Section~\ref{sec:I0} -- the proof of the existence part is deferred to the appendix. The main new part of the paper starts with Section~\ref{sec:radius} where we provide a detailed study of the \emph{radius} of the minimizers. The obtained estimates will be used in Section \ref{sec:short-range} to prove Theorem \ref{thm:short-range} for short-range potentials. In Section \ref{sec:improved}, we establish an improved radius estimate, which is essential to deal with Coulomb-type potentials. The proofs of Theorems \ref{thm:small-Z} and \ref{thm:no-stable-minimizers} are presented in Sections \ref{sec:proof-small-Z} and \ref{sec:instability}, respectively.

%in Sections  and \ref{sec:symmetry}, where we prove Theorems~\ref{thm:short-range} and \ref{thm:no-radial-minimizers} respectively. 

\subsection*{Acknowledgements} 
We would like to thank Rafael Benguria for motivating discussions. We thank Jianfeng Lu and Felix Otto for helpful correspondence on the translation-invariant case
and for pointing out the reference \cite{SanSol-04b} to us. P.T. Nam is supported by the Austrian Science Fund (FWF) under Project Nr. P 27533-N27. H. Van Den Bosch acknowledges support from CONICYT (Chile) through CONICYT--PCHA/Doctorado Nacional/2014 and through Fondecyt Project \# 112-0836, and from the Iniciativa Cient\'ifica Milenio (Chile) through the Millenium Nucleus RC--120002 ``F\'isica Matem\'atica''.

\section{General estimates} \label{sec:general}
The goal of this section is to establish some properties of the energy functional and its minimizers that hold independently of the detailed properties of the external potential $V$. We will assume throughout this section that
$$
 V\in L_{\rm loc}^{3/2}(\R^3), \quad  V \le 0 \quad \text{ and} \quad \lim_{|x|\to \infty} V(x)=0 .
$$
The Coulomb interaction energy will be written as
\[
D(f,g) = \frac{1}{2} \iint_{\R^3 \times \R^3} \frac{   \overline{f(x)} g(y)}{\abs{x-y}} \d x \d y.
\]
We start with some basic properties of the energy functional. 
 
\begin{lemma}[Basic energy estimate] \label{lem:basic-energy} For all $u\in H^1(\R^3)$ we have 
\bq \label{eq:kinetic-estimate}
\cE_V(u) + C_1\int |u|^2  \ge \int \Big(  \frac{c_{\rm TF}}{2}  |u|^{10/3}+ \frac{c_{\rm W}}{2}  |\nabla u|^2 +  |V||u|^2 \Big) 
+ D(\abs{u}^2, \abs{u}^2).
\eq
with $$C_1=\frac{c_{\rm D}^2}{2 c_{\rm TF}} - \frac{1}{2} \inf {\rm spec}\left(-c_{\rm W} \Delta - 4|V| \right) .$$
%As a consequence, $I_V(m)\ge -C_1m$. 
In the molecular case, where $V$ is given by \eqref{eq:def-V}, we have 
$$C_1 \le \frac{c_{\rm D}^2}{2 c_{\rm TF}} +  \frac{2}{c_{\rm W}}\Bigl(\sum_{j=1}^J Z_j \Bigr)^2.$$ 
\end{lemma}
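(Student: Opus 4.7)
The plan is to isolate the two negative contributions in $\cE_V(u)$---the Dirac exchange term $-c_{\rm D}|u|^{8/3}$ and the attractive potential $V|u|^2=-|V||u|^2$---and to absorb each of them into half of a corresponding positive term. The only nontrivial ingredient is then a spectral lower bound for the one-body Schr\"odinger operator $-c_{\rm W}\Delta-4|V|$.

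First, I would dispose of the Dirac term by the pointwise Young inequality
\[
c_{\rm D}|u|^{8/3}=\bigl(\sqrt{c_{\rm TF}}\,|u|^{5/3}\bigr)\cdot \tfrac{c_{\rm D}}{\sqrt{c_{\rm TF}}}|u|\;\le\; \tfrac{c_{\rm TF}}{2}|u|^{10/3}+\tfrac{c_{\rm D}^2}{2c_{\rm TF}}|u|^2.
\]
Integrating and inserting into $\cE_V(u)$ already produces the first summand of $C_1$ and leaves exactly $\tfrac{c_{\rm TF}}{2}|u|^{10/3}$ on the right. What is left of the kinetic-potential part is $\int c_{\rm W}|\nabla u|^2-\int |V||u|^2$. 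Splitting off $\tfrac{c_{\rm W}}{2}|\nabla u|^2+|V||u|^2$ for the right-hand side of \eqref{eq:kinetic-estimate}, what must be controlled is
\[
\tfrac12\bigl\langle u,(-c_{\rm W}\Delta-4|V|)u\bigr\rangle \;\ge\; \tfrac12\,\inf\mathrm{spec}\bigl(-c_{\rm W}\Delta-4|V|\bigr)\int|u|^2,
\]
which is just the spectral theorem (together with $V\in L^{3/2}_{\loc}$, $V\to 0$ at infinity, so that $-c_{\rm W}\Delta-4|V|$ is self-adjoint and bounded below). Rearranging yields \eqref{eq:kinetic-estimate} with the advertised value of $C_1$.

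For the molecular bound on $C_1$ I would start from the one-atom identity
\[
\inf\mathrm{spec}\bigl(-c_{\rm W}\Delta - \gamma/|x|\bigr)=-\gamma^2/(4c_{\rm W}),\qquad \gamma\ge 0,
\]
obtained from the hydrogen eigenvalue by scaling; with $\gamma=4Z$ this reads $-4Z^2/c_{\rm W}$. To extend this to $|V|=\sum_j Z_j/|x-{\bf r}_j|$, I would split the Laplacian by a convex combination. Setting $Z:=\sum_k Z_k$ (if $Z=0$ the statement is trivial) and $\lambda_j:=Z_j/Z$, one has the operator identity
\[
-c_{\rm W}\Delta-4|V|=\sum_j \lambda_j\left(-c_{\rm W}\Delta-\frac{4Z_j/\lambda_j}{|x-{\bf r}_j|}\right),
\]
and the $j$-th bracket has spectral infimum $-4Z_j^2/(\lambda_j^2 c_{\rm W})$. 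Multiplying by $\lambda_j$ and summing gives $-\tfrac{4}{c_{\rm W}}\sum_j Z_j^2/\lambda_j = -4Z^2/c_{\rm W}$, the weights $\lambda_j\propto Z_j$ being precisely the Cauchy-Schwarz minimizer of $\sum_j Z_j^2/\lambda_j$ on the simplex $\sum_j\lambda_j=1$. Substituting in the formula for $C_1$ gives $C_1\le c_{\rm D}^2/(2c_{\rm TF})+2Z^2/c_{\rm W}$.

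The only conceptually nonroutine step is choosing the convex splitting of $-\Delta$ with weights matched to the nuclear charges $Z_j$; everything else is a pointwise Young inequality followed by an application of the spectral theorem.
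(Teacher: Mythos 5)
Your proof is correct and follows essentially the same route as the paper: the pointwise Young inequality is exactly the paper's completing-the-square for the Dirac term, and the remaining step is the same form lower bound via $\inf\operatorname{spec}(-c_{\rm W}\Delta-4|V|)$. Your convex splitting of the Laplacian with weights $\lambda_j=Z_j/Z$ merely spells out the detail the paper compresses into ``by using the hydrogen bound,'' and it yields the identical constant $-4Z^2/c_{\rm W}$.
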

\begin{proof} Since $V\in L^{3/2}(\R^3)+ L^\infty(\R^3)$, by Sobolev's inequality we find that the Schr\"odinger operator $- c_{\rm W}\Delta - 4|V|$ is bounded from below on $L^2(\R^3)$, see \cite[Section 11.3, Eq. (15)]{LieLos-01}. Using
$$\frac{c_{\rm TF}}{2}|u|^{10/3}-c_{\rm D} |u|^{8/3} = \left(\sqrt{\frac{c_{\rm TF}}{2}} |u|^{5/3}- \frac{c_{\rm D} |u|}{\sqrt{2 c_{\rm TF}}}\right)^2 - \frac{c_{\rm D}^2}{2 c_{\rm TF}} |u|^2 \ge - \frac{c_{\rm D}^2}{2 c_{\rm TF}} |u|^2$$
we obtain \eqref{eq:kinetic-estimate} with 
$$C_1=\frac{c_{\rm D}^2}{2 c_{\rm TF}} - \frac{1}{2} \inf {\rm spec}\left(-c_{\rm W} \Delta - 4|V| \right) .$$
In the molecular case \eqref{eq:def-V}, by using the hydrogen bound $-\Delta-s/|x| \ge -s^2/4$, we obtain 
$$
\inf {\rm spec}\left(-c_{\rm W} \Delta - 4|V| \right) \ge - \frac{4}{c_{\rm W}}\Bigl(\sum_{j=1}^J Z_j \Bigr)^2,
$$
and hence
$$C_1 \le \frac{c_{\rm D}^2}{2 c_{\rm TF}} + \frac{2}{c_{\rm W}}\Bigl(\sum_{j=1}^J Z_j \Bigr)^2.$$
\end{proof}

 Lemma \ref{lem:basic-energy} implies that $I_V(m)\ge -C_1 m$ and it also leads to some a-priori estimates on minimizers.  Refined estimates on minimizers will be obtained from the following binding inequality, which is a typical ingredient of the concentration-compactness method \cite{Lions-84,Lions-84b}.

\begin{lemma}[Binding inequality] \label{lem:binding} For all $m>0$ we have 
\bq
\label{eq:non-strict-bind-ineq} I_V(m) \le I_V(m')+ I_0(m-m'),\quad \forall 0 \le m' \le m. 
\eq
Moreover, $I_V(m)\le I_0(m)<0$ for all $m>0$ and the function $m\mapsto I_V(m)$ is strictly decreasing and continuous.
\end{lemma}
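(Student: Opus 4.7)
The plan is to address the three assertions in turn---binding inequality, $I_0(m)<0$, and strict decrease together with continuity---and then flag the single step I expect to be delicate.

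For the binding inequality $I_V(m) \le I_V(m') + I_0(m-m')$, I would use the standard concentration-compactness trial function. Choose compactly supported $u, v \in H^1(\R^3)$ with $\|u\|_2^2 = m'$ and $\|v\|_2^2 = m-m'$, and set $v_R(x) = v(x - R e_1)$ with $R$ large enough that $\supp u$ and $\supp v_R$ are disjoint. Then $u + v_R$ has $L^2$-mass $m$, and disjointness of supports makes all local terms split additively, leaving only two cross contributions to control: $\int V |v_R|^2 \to 0$ because $V$ vanishes at infinity and $|v_R|^2$ is supported far away, and $2 D(|u|^2, |v_R|^2) \le C m'(m-m')/R \to 0$. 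Hence $\cE_V(u + v_R) \to \cE_V(u) + \cE_0(v)$. Since compactly supported functions are dense in $H^1(\R^3)$ and $\cE_V$ is continuous on $H^1$ (Sobolev embeddings for the power terms, Hardy--Littlewood--Sobolev for the Coulomb term), taking the infimum over $u$ and $v$ yields the claim.

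Next I would show $I_V(m) \le I_0(m) < 0$. The first inequality is immediate from $V \le 0$. For $I_0(m) < 0$, I would use the mass-preserving scaling $u_\lambda(x) = \lambda^{3/2} u(\lambda x)$, which yields
\[
\cE_0(u_\lambda) = \lambda^2 \bigl( c_{\rm TF} \|u\|_{10/3}^{10/3} + c_{\rm W} \|\nabla u\|_2^2 \bigr) + \lambda \bigl( -c_{\rm D} \|u\|_{8/3}^{8/3} + D(|u|^2,|u|^2) \bigr).
\]
For $\lambda$ small, the linear-in-$\lambda$ part dominates, so it suffices to exhibit $u$ of mass $m$ for which $D(|u|^2,|u|^2) < c_{\rm D}\|u\|_{8/3}^{8/3}$. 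I would take $u$ to be a sum of $N$ disjoint translates of a fixed profile, each carrying mass $m/N$, placed sufficiently far apart. A routine computation gives $\|u\|_{8/3}^{8/3} \sim N^{-1/3}$ while $D(|u|^2,|u|^2) \sim N^{-1}$ once the inter-bump distance is taken large enough to absorb the off-diagonal Coulomb pair contributions. Choosing $N$ large, the bracketed term becomes negative, so $\cE_0(u_\lambda) < 0$ for small $\lambda$, proving $I_0(m) < 0$.

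Strict monotonicity is then immediate: for $m > m'$ the binding inequality and $I_0(m-m')<0$ give $I_V(m) \le I_V(m') + I_0(m-m') < I_V(m')$. For continuity at $m_0>0$, upper semicontinuity follows by scaling a near-optimal $u$ for $I_V(m_0)$ to $\sqrt{m/m_0}\,u$, a valid trial for $I_V(m)$ whose energy depends continuously on $m$. For the reverse, I would invoke Lemma~\ref{lem:basic-energy} to bound approximate minimizers of $I_V(m)$ uniformly in $H^1 \cap L^{10/3}$ for $m$ in a neighborhood of $m_0$ (using monotonicity to control $\cE_V(u_m)$ from above); rescaling such a near-optimum back to mass $m_0$ changes each energy term by $O(|m-m_0|)$, giving $I_V(m_0) \le I_V(m) + o(1)$ as $m \to m_0$. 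The main obstacle I anticipate is the construction behind $I_0(m)<0$ for \emph{every} $m>0$: a single-bump scaling is inadequate when $m$ is large, since then the Coulomb self-repulsion dominates the Dirac gain, and one really has to spread the mass into many distant clusters and exploit that $D/\|u\|_{8/3}^{8/3} \sim N^{-2/3}$ to defeat the Coulomb repulsion.
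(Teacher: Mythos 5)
Your proposal is correct and follows essentially the same route as the paper: splitting mass via widely separated trial functions for the binding inequality, the scaling $u_\lambda=\lambda^{3/2}u(\lambda\cdot)$ for $I_0(m)<0$, and deducing strict monotonicity by combining the two. In fact you supply two details the paper only asserts --- the multi-bump construction of a test function with $D(|u|^2,|u|^2)<c_{\rm D}\int|u|^{8/3}$ for arbitrary $m$, and the rescaling argument for continuity --- and both are carried out correctly.
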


\begin{proof} Let us take two smooth functions $v_1,v_2$ with compact supports such that $\int |v_1|^2=m'$ and $\int |v_2|^2=m-m'$. For any vector $x_0\in \R^3\minus \{0\}$, one has  
$$ I_V(m) \le \lim_{R\to \infty} \cE_V\bigl(v_1 (.) + v_2 (. + R x_0) \bigr) = \cE_V(v_1) + \cE_0(v_2).$$
Here we have used the fact that $V(x)$ and the Coulomb potential $|x|^{-1}$ vanish at infinity. Optimizing the right-hand-side over $v_1$ and $v_2$ gives \eqref{eq:non-strict-bind-ineq}. 

From \eqref{eq:non-strict-bind-ineq}, we have immediately that $I_V(m)\le I_0(m)$. The strict inequality $I_0(m)<0$ can be seen by choosing a test function $v$ such that 
$$
\int |v|^2=m, \quad  D(\abs{v}^2, \abs{v}^2) < c_{\rm D} \int |v|^{8/3}.
$$ 
Then, consider the trial function $v_\ell=\ell^{3/2}v(\ell \dotv)$.
Since the kinetic terms scale as $\ell^2$ and the electrostatic terms scale as $\ell$, the energy becomes strictly negative when $\ell>0$ is small enough. 

Combining \eqref{eq:non-strict-bind-ineq} with $I_0(m) < 0$ when $m>0$, we obtain that  $m \mapsto I_V(m)$ is strictly decreasing. The continuity of $m\mapsto I_V(m)$ follows from a standard argument based on the variational principle and appropriate trial states. 
\end{proof}

Using the previous lemma, we obtain a key estimate on minimizers. 

\begin{lemma}[Basic localization estimate] \label{lem:basic-loc} Assume that $I_V(m)$ has a minimizer $u_m$. Consider a partition of unity consisting of smooth functions $\chi,\eta:\R^3\to [0,1]$ satisfying $\chi^2+\eta^2 \equiv 1$ on $\R^3$, and define $\Omega \equiv \{x \in \R^3 | \chi(x) \in (0,1)\}$. Then,
\bq \label{eq:loc}
 2 D(|\chi u_m|^2, |\eta u_m|^2)   \le - \int_{\R^3} V|\eta u_m|^2  + C_2\int_{\Omega} |u_m|^2,
\eq
where
$$
C_2= \frac{c_{\rm D}^2}{4c_{\rm TF}} + c_{\rm W} \Big( \|\nabla \chi\|^2_{L^\infty}+ \|\nabla \eta\|^2_{L^\infty} \Big) .
$$
\end{lemma}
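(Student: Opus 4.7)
The plan is to use $u_m$ as a minimizer against the obvious trial configuration in which the two localized pieces $\chi u_m$ and $\eta u_m$ are separated to infinity from each other. Concretely, set $a = \int |\chi u_m|^2$ and $b = \int |\eta u_m|^2$, so $a+b=m$. By the binding inequality \eqref{eq:non-strict-bind-ineq} together with the variational principle applied to $\chi u_m$ and $\eta u_m$,
\[
\mathcal{E}_V(u_m) = I_V(m) \le I_V(a) + I_0(b) \le \mathcal{E}_V(\chi u_m) + \mathcal{E}_0(\eta u_m).
\]
So the whole proof reduces to computing the difference $\mathcal{E}_V(\chi u_m) + \mathcal{E}_0(\eta u_m) - \mathcal{E}_V(u_m)$ term by term and showing that it equals the right-hand side of \eqref{eq:loc} minus $2 D(|\chi u_m|^2, |\eta u_m|^2)$.

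The term-by-term accounting is routine. For the gradient term, the IMS identity built on $\chi^2+\eta^2\equiv 1$ (so that the cross term $\chi\nabla\chi + \eta\nabla\eta = \tfrac12\nabla(\chi^2+\eta^2)$ vanishes) gives the clean error $c_W\int(|\nabla\chi|^2+|\nabla\eta|^2)|u_m|^2$, which is bounded by $c_W(\|\nabla\chi\|_\infty^2+\|\nabla\eta\|_\infty^2)\int_\Omega |u_m|^2$ since the gradients are supported in $\Omega$. The external-potential contribution produces exactly $-\int V|\eta u_m|^2$ (only the $\mathcal{E}_V$ piece carries $V$, and $\chi^2-1 = -\eta^2$). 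For the Coulomb term, using $|u_m|^2 = |\chi u_m|^2+|\eta u_m|^2$ and bilinearity of $D$, the difference is exactly $-2D(|\chi u_m|^2,|\eta u_m|^2)$, which is what we want to isolate.

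The one nontrivial step, which is the main obstacle, is to control the combined TF+Dirac correction
\[
-c_{\rm TF}\,\alpha(x)\,|u_m|^{10/3} + c_{\rm D}\,\beta(x)\,|u_m|^{8/3},
\qquad \alpha := 1-\chi^{10/3}-\eta^{10/3},\ \ \beta := 1-\chi^{8/3}-\eta^{8/3},
\]
where both $\alpha,\beta \in [0,1]$ and are supported in $\Omega$. Viewed as a quadratic in $|u_m|^{5/3}$, completing the square yields the pointwise bound
\[
-c_{\rm TF}\alpha|u_m|^{10/3} + c_{\rm D}\beta|u_m|^{8/3} \le \frac{c_{\rm D}^2\beta^2}{4 c_{\rm TF}\alpha}|u_m|^2.
\]
Here the key observation is that $\beta \le \alpha$ on $\Omega$: since $0\le \chi,\eta\le 1$ and $10/3 > 8/3$, we have $\chi^{10/3}\le \chi^{8/3}$ and $\eta^{10/3}\le \eta^{8/3}$, i.e., $1-\alpha \le 1-\beta$. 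Hence $\beta^2/\alpha \le \beta \le 1$, so the correction is bounded pointwise by $\tfrac{c_{\rm D}^2}{4c_{\rm TF}}|u_m|^2\mathbf{1}_\Omega$.

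Putting these four pieces together and using $\mathcal{E}_V(\chi u_m)+\mathcal{E}_0(\eta u_m)-\mathcal{E}_V(u_m)\ge 0$, the terms rearrange directly into
\[
2 D(|\chi u_m|^2,|\eta u_m|^2) \le -\int V|\eta u_m|^2 + \Bigl(\frac{c_{\rm D}^2}{4c_{\rm TF}} + c_W(\|\nabla\chi\|_\infty^2+\|\nabla\eta\|_\infty^2)\Bigr)\int_\Omega |u_m|^2,
\]
which is exactly \eqref{eq:loc} with the stated constant $C_2$.
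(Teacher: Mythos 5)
Your proposal is correct and follows essentially the same route as the paper: the binding inequality plus the variational principle reduce everything to the term-by-term comparison of $\cE_V(\chi u_m)+\cE_0(\eta u_m)$ with $\cE_V(u_m)$, with the IMS identity, the bilinearity of $D$, and the inequality $1-\chi^{8/3}-\eta^{8/3}\le 1-\chi^{10/3}-\eta^{10/3}\le \1_\Omega$ handling the respective pieces. Your treatment of the TF--Dirac term (completing the square with the mixed weights $\alpha,\beta$ and then using $\beta^2/\alpha\le\beta\le 1$) is only a cosmetic reordering of the paper's argument, which first replaces $\beta$ by $\alpha$ and then completes the square.
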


\begin{proof} Since $u_m$ is a minimizer for $I_V(m)$, by applying \eqref{eq:non-strict-bind-ineq}, we find that
$$
\cE_V(u_m)  - \cE_V(\chi u_m) - \cE_0(\eta u_m) \le 0.
$$
It remains to show that 
\begin{align} \label{eq:loc-0}
\cE_V(u_m)  - \cE_V(\chi u_m) - &\cE_0(\eta u_m) \ge  2 D(|\chi u_m|^2, |\eta u_m|^2)    \\
&  +  \int_{\R^3} V(x)|\eta (x) u_m(x)|^2 \d x - C_2 \int_{\Omega} |u_m(x)|^2 \d x .\nn
\end{align}
Indeed, by using 
$$1- \chi^2(x) \chi^2(y) - \eta^2(x) \eta^2(y)=\chi^2(x) \eta^2(y)+ \eta^2(x)\chi^2(y)$$
and interchanging the variables $x$ and $y$, we get
\begin{align}
&\frac{1}{2} \iint_{\R^3\times \R^3}  \frac{|u_m(x)|^2|u_m(y)|^2}{|x-y|} \Big( 1- \chi^2(x) \chi^2(y) - \eta^2(x) \eta^2(y) \Big) \d x \d y \nn\\
&= \iint_{\R^3\times \R^3}  \frac{|\chi(x) u_m(x)|^2|\eta(y) u_m(y)|^2}{|x-y|} \d x \d y . \label{eq:loc-1}
\end{align}
Next, from the IMS localization formula 
$$
\abs{\nabla u}^2 - \abs{\nabla (\chi u)}^2- \abs{\nabla (\eta u)}^2 = -\bigl(\abs{\nabla \chi}^2+\abs{\nabla \eta}^2\bigr) \abs{u}^2 ,
$$
it follows that
\begin{align} \label{eq:loc-2}
&c_{\rm W}\int_{\R^3} \Big(  |\nabla u_m|^2 - | \nabla (\chi u_m) |^2 - | \nabla (\eta u_m) |^2  \Big) \nn\\
&=  - c_{\rm W} \int_{\R^3} \Big(  |\nabla \chi|^2 + |\nabla \eta|^2\Big) |u_m|^2 \nn\\
&\ge - c_{\rm W} \Big(\|\nabla \chi\|^2_{L^\infty}+ \|\nabla \eta\|^2_{L^\infty} \Big) \int_{\Omega} |u_m|^2.
\end{align}
Finally, using
\[ 
0 \le 1- \chi^{8/3}-\eta^{8/3} \le 1- \chi^{10/3}-\eta^{10/3} \le \1_{\Omega} 
\]
and 
\begin{equation}
 \label{eq:complete_square}
c_{\rm TF} |u_m|^{10/3} - c_{\rm D} |u_m|^{8/3}  \ge - \frac{c_{\rm D}^2}{4c_{\rm TF}}|u_m|^2,
\end{equation}
we can estimate
\begin{align} \label{eq:loc-3}
 &\int_{\R^3} \Big( c_{\rm TF}  (1-\chi^{10/3}-\eta^{10/3}) |u_m|^{10/3} - c_{\rm D}  (1-\chi^{8/3}-\eta^{8/3}) |u_m|^{8/3} \Big)  \nn \\
 &\ge \int_{\R^3} (1- \chi^{10/3}-\eta^{10/3}) \Big( c_{\rm TF} |u_m|^{10/3} - c_{\rm D} |u_m|^{8/3} \Big) \ge - \frac{c_{\rm D}^2}{4c_{\rm TF}} \int_{\Omega} |u_m|^2. 
\end{align}
Putting \eqref{eq:loc-1}, \eqref{eq:loc-2} and \eqref{eq:loc-3} together, we obtain \eqref{eq:loc-0} and complete the proof. 
\end{proof}

We will frequently apply the localization estimate in Lemma \ref{lem:basic-loc} to separate the energy contribution inside and outside a ball centered at the origin. The corresponding localization functions are defined below.

\begin{definition}[Standard localization functions]
 Fix two smooth functions $f, g:\R \to [0,1]$ such that 
\bq \label{eq:def-fg}
f^2+g^2 \equiv 1 \text{ on } \R, \quad f(t)=1 \text{ when } t\le 0, \quad f(t)=0 \text{ when }  t\ge 1. 
\eq
We can choose $f$, $g$ such that $\abs{f'} \leq 2$, $\abs{g'} \leq 2$. For every $R>0$, we define 
\bq \label{eq:def-chiR-etaR}
\chi_R (x)= f(|x|-R),\quad \eta_R (x)=g (|x|-R).
\eq
Then $\chi_R^2+\eta_R^2\equiv 1$ on $\R^3$, $\chi_R(x) =1$ when $|x|\le R$ and $\chi_R(x)=0$ when $|x| \ge R+1$. Moreover, $\|\nabla \chi_R\|_{L^\infty} \le 2$ and $\|\nabla \eta_R\|_{L^\infty}\le 2$. 
\end{definition}

Using these localization functions in Lemma~\ref{lem:basic-loc}, we obtain

\begin{lemma}[Annulus estimate] \label{lem:annulus-est} Assume that $I_V(m)$ has a minimizer $u_m$. Then for all $R \ge 1$,
\begin{align} \label{eq:annulus-estimate-1}
 \int\limits_{|x|\le R}|u_m(x)|^2 \d x  \int\limits_{|y|\ge 2R}|u_m(y)|^2 \d y  \le 12   \int\limits_{|x|\ge R} \left( C_3+ |xV(x)| \right) |u_m(x)|^2 \d x,
\end{align}
where $C_3 = 8 c_{\rm w} + c_{\rm D}^2/(4c_{\rm TF})$.
\end{lemma}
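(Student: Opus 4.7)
My plan is to apply Lemma~\ref{lem:basic-loc} to the minimizer $u_m$ with the standard localization pair $(\chi_R,\eta_R)$ from \eqref{eq:def-chiR-etaR}. Since $\|\nabla\chi_R\|_{L^\infty}^2+\|\nabla\eta_R\|_{L^\infty}^2\le 8$, the constant $C_2$ of Lemma~\ref{lem:basic-loc} satisfies $C_2\le c_{\rm D}^2/(4c_{\rm TF})+8c_{\rm w}=C_3$; the overlap set $\Omega=\{R<|x|<R+1\}$ lies inside $\{|x|\ge R\}$; and since $R\ge 1$ we have $2R\ge R+1$, so $\chi_R\equiv 1$ on $\{|x|\le R\}$ and $\eta_R\equiv 1$ on $\{|y|\ge 2R\}$.

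I would first process the right-hand side of Lemma~\ref{lem:basic-loc}. Using $V\le 0$ and the pointwise bound $|V(x)|\le|xV(x)|/|x|\le|xV(x)|/R\le|xV(x)|$ on $\{|x|\ge R\}$ (the last step valid because $R\ge 1$), together with $\Omega\subset\{|x|\ge R\}$, both error terms can be bundled into a single integral of $(C_3+|xV(x)|)|u_m|^2$ over $\{|x|\ge R\}$, producing the clean upper bound
\[
 2D(|\chi_R u_m|^2,|\eta_R u_m|^2)\le\int_{|x|\ge R}\bigl(C_3+|xV(x)|\bigr)|u_m(x)|^2\,\d x.
\]

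I would next extract a matching lower bound on $2D$ in terms of the product $A\cdot B$, with $A:=\int_{|x|\le R}|u_m|^2\,\d x$ and $B:=\int_{|y|\ge 2R}|u_m|^2\,\d y$. To this end, I would restrict the double integral defining $D$ to the product set $\{|x|\le R\}\times\{|y|\ge 2R\}$, on which both localization factors equal $1$, and use the elementary inequality $|x-y|\le|x|+|y|\le R+|y|\le \tfrac{3}{2}|y|$ (the last step since $R\le|y|/2$ on this region), giving $1/|x-y|\ge 2/(3|y|)$ and hence a $|y|^{-1}$-weighted product lower bound for $2D$.

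The main obstacle will be the final step of converting this $|y|^{-1}$-weighted lower bound into the unweighted product $A\cdot B$ with the explicit constant $12$. Since $1/|y|$ has no uniform pointwise lower bound on $\{|y|\ge 2R\}$, a single pointwise distance estimate will not suffice; I expect the conversion to require a multi-scale argument, e.g.\ a dyadic decomposition of $\{|y|\ge 2R\}$ into shells $\{2^k R\le|y|\le 2^{k+1}R\}$ on which $1/|y|$ is comparable to an explicit constant, combined with the upper-bound step applied at each dyadic scale. Summing the resulting shell-wise inequalities and carefully tracking the numerical factors $\tfrac{2}{3}$ (from the Coulomb kernel), $C_2\le C_3$, and the combinatorial gains from the multi-scale argument should produce the stated constant $12$; combining the product-form lower bound with the upper bound then completes the proof.
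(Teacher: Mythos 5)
Your setup is the right one and your treatment of the right-hand side is fine: applying Lemma~\ref{lem:basic-loc} with $(\chi_R,\eta_R)$ gives $C_2\le C_3$, and for $|x|\ge R\ge 1$ the pointwise bound $|V(x)|=|xV(x)|/|x|\le|xV(x)|$ lets you absorb the potential term into $\int_{|x|\ge R}(C_3+|xV(x)|)|u_m|^2$. The genuine gap is exactly where you flag it: a single application of the localization estimate can only ever produce the $|y|^{-1}$-weighted quantity $A\int_{|y|\ge 2R}|u_m(y)|^2|y|^{-1}\,\d y$ on the left, and there is no way to convert this into $\tfrac{1}{12}AB$ pointwise (indeed $\tfrac{2}{3|y|}\ge\tfrac{1}{12}$ fails as soon as $|y|>8$, so the claimed single-shot lower bound is simply false for large $R$). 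Your proposed repair --- dyadic shells $\{2^kR\le|y|\le 2^{k+1}R\}$ with the estimate applied at each scale --- does not close the gap: at scale $2^kR$ the Coulomb gain per unit of mass in the corresponding shell is of order $(2^kR)^{-1}$, so the shell-wise bound reads $B_k\lesssim 2^kR\cdot(\mathrm{RHS})/A$, and summing over $k$ diverges geometrically. The multi-scale structure you need is additive, not multiplicative.

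The paper's mechanism is to sum the inequality \eqref{eq:annulus-0} over all \emph{unit-shifted} radii $R+k$, $k=0,1,2,\dots$. The point is that $\sum_{k\ge 0}\eta_{R+k}^2(y)\ge\tfrac12(|y|-R)\1(|y|\ge R+1)$ grows linearly in $|y|$, which exactly cancels the Coulomb weight: $|y|^{-1}\sum_k\eta_{R+k}^2(y)\ge\tfrac14$ on $\{|y|\ge 2R\}$, producing the unweighted product $\tfrac{1}{12}AB$. On the right-hand side the same summation is harmless because the unit annuli $\{R+k\le|x|\le R+k+1\}$ are disjoint (so the gradient errors sum to $C_3\int_{|x|\ge R}|u_m|^2$) and $\sum_k\eta_{R+k}^2(x)\le|x|\1(|x|\ge R)$ converts $|V(x)|$ into $|xV(x)|$. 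This translation-summation idea is the essential ingredient missing from your argument; without it, or a genuine substitute, the lemma is not proved.
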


\begin{proof} 
For every $R\ge 1$, we apply \eqref{eq:loc} with $\chi=\chi_R$ and $\eta=\eta_R$. Using the triangle inequality, $|x-y|\le |x|+|y| \le 3|y|$ when $|y| \ge \max\{1, |x|-1\}$, so we have
\begin{align} \label{eq:annulus-0}
& \frac{1}{3} \left( \int_{\R^3} |\chi_R(x)u_m(x)|^2 \d x \right) \left( \int_{\R^3} \frac{|\eta_R (y)u_m(y)|^2}{|y|}   \d y \right)  \nn \\
&\le  - \int_{\R^3} V(x) |\eta_R(x) u_m(x)|^2 \d x + C_3 \int_{R\le |x| \le R+1} |u_m(x)|^2  \d x.
\end{align}
By replacing $R$ with $R+k$ in \eqref{eq:annulus-0} and taking the sum over $k=0,1,2,...$, we obtain
\begin{multline} \label{eq:sumRk}
\sum_{k=0}^\infty \frac{1}{3} \left( \int_{\R^3} |\chi_{R+k}(x)u_m(x)|^2 \d x \right) \left( \int_{\R^3} \frac{|\eta_{R+k} (y)u_m(y)|^2}{|y|}   \d y \right)   \\
\le - \sum_{k=0}^\infty \int_{\R^3} V(x) |\eta_{R+k}(x) u_m(x)|^2 \d x + C_3 \int_{|x| \ge R} |u_m(x)|^2 \d x .
\end{multline}
Let us estimate the left hand side of \eqref{eq:sumRk}. For the first factor, we use the uniform bound  
$|\chi_{R+k}(x)|^2  \ge \1({\scriptstyle|x|\le R})$. For the second factor, note that
\begin{align*}
 \sum_{k=0}^\infty \eta^2_{R+k}(y) \geq \sum_{k=0}^\infty \1({\scriptstyle|y|\ge R+k+1}) \geq \frac{1}{2}(\abs{y}-R) \1({\scriptstyle\abs{y} \geq R+1}),
\end{align*}
which gives
\begin{align*}
 \frac{1}{\abs{y}}\sum_{k=0}^\infty \eta^2_{R+k}(y) \geq \frac{1}{2}\left(1- \frac{R}{\abs{y}} \right)\1({\scriptstyle\abs{y} \geq R+1})  
 \geq \frac{1}{4} \1({\scriptstyle\abs{y} \geq 2R}).
 \end{align*}
Combining these inequalities,
\begin{multline} \label{eq:sumRk-left}
\sum_{k=0}^\infty \frac{1}{3} \left( \int_{\R^3} |\chi_{R+k}(x)u_m(x)|^2 \d x \right) \left( \int_{\R^3} \frac{|\eta_{R+k} (y)u_m(y)|^2}{|y|}   \d y \right)\\
 \ge \frac{1}{12} \left( \int_{|x|\le R}|u_m(x)|^2 \d x \right) \left( \int_{|y|\ge 2R} |u_m(y)|^2 \d y\right).
\end{multline}
To bound the right hand side of \eqref{eq:sumRk}, we use
\begin{align*}
\sum_{k=0}^\infty\eta^2_{R+k}(x) &\le \sum_{k=0}^\infty \1({\scriptstyle |x|\ge R+k}) \le (\abs{x}-R+1)\1({\scriptstyle|x|\ge R}) \le \abs{x} \1({\scriptstyle|x|\ge R})
\end{align*}
and deduce that 
\begin{align}
- \sum_{k=0}^\infty \int_{\R^3} V(x)|\eta_{R+k}u_m(x)|^2 \d x \le  \int_{|x|\ge R} |xV(x)|\,|u_m(x)|^2 \d x\label{eq:sumRk-right}.
\end{align}
Substituting \eqref{eq:sumRk-left} and \eqref{eq:sumRk-right} into \eqref{eq:sumRk}, we obtain \eqref{eq:annulus-estimate-1}. 
\end{proof}

\section{Existence and nonexistence for $I_0(m)$} \label{sec:I0}

In this section, we revisit some well-known properties of the translation-invariant problem $I_0(m)$, which will be used in the rest of the paper. First, let us quickly prove the nonexistence of minimizers when $m$ is large, recovering the main result in \cite{LuOtto-14}.  

\begin{lemma}[Nonexistence for $I_0(m)$]\label{lem:nonexistence-I0} $I_0(m)$ has no minimizers when $m$ is sufficiently large. 
\end{lemma}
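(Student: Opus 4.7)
The approach is by contradiction. Suppose $u_m\in H^1(\mathbb{R}^3)$ with $\int u_m^2=m$ minimizes $\cE_0$ for $m$ arbitrarily large. The goal is to produce a thin spherical shell across which we can split $u_m=\chi_R u_m+\eta_R u_m$, then derive a conflict between the localization bound of Lemma~\ref{lem:basic-loc} (an upper bound on the inner--outer Coulomb interaction) and a direct geometric lower bound on the same quantity.

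Write $A(R)=\int_{|x|\le R}|u_m|^2$ and $T(R)=m-A(R)$. Lemma~\ref{lem:basic-energy} with $V\equiv 0$ gives $\int|u_m|^{10/3}\lesssim m$ and $D(|u_m|^2,|u_m|^2)\le C_1 m$. Using the translation invariance of $\cE_0$, center $u_m$ so that $R_*:=\min\{R:A(R)\ge m/2\}$ is well-defined; by Hölder on the $L^{10/3}$ bound, $R_*\gtrsim m^{1/3}$. Lemma~\ref{lem:annulus-est} specializes to $T(2R)\le 24C_3\, T(R)/m$ for every $R\ge R_*$, so the tail decays geometrically whenever $R$ is doubled past $R_*$. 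A further Hölder estimate in the annulus $[R_*,R_*+K]$ ensures $T(R+1)$ remains of order $m$ on a suitable range, and pigeonholing over the $\sim K$ unit sub-shells of this range produces an $R$ with $\int_{R\le|x|\le R+1}|u_m|^2\lesssim m/K$. Applying Lemma~\ref{lem:basic-loc} with $\chi=\chi_R$, $\eta=\eta_R$ then yields $2D(|\chi_R u_m|^2,|\eta_R u_m|^2)\lesssim m/K$. Conversely, since $\chi_R^2\ge\mathbf{1}_{B_R}$, $\eta_R^2\ge\mathbf{1}_{B_{R+1}^c}$, and $|x-y|\le 2|y|$ whenever $|x|\le R+1\le|y|$, one obtains $D\gtrsim A(R)\int_{|y|\ge R+1}|u_m|^2/|y|\,dy$; estimating the outer integral by the portion of $T(R+1)\gtrsim m$ concentrated in the dyadic shell $\{R+1\le|y|\le 2(R+1)\}$ (which carries all but $O(1)$ of the tail, by the geometric decay) gives $D\gtrsim m^2/R$. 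For $m$ sufficiently large, the lower bound overwhelms the upper bound and yields the contradiction.

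The main obstacle is reconciling the pigeonhole step with the Coulomb lower bound. The Hölder lower bound $R_*\gtrsim m^{1/3}$ is immediate, but the self-Coulomb inequality $D(|u_m|^2,|u_m|^2)\ge A(R_*)^2/(4R_*)$ together with $D\le C_1 m$ actually forces $R_*\gtrsim m$, so a hypothetical minimizer is spread over scale $\sim m$; this shrinks the Hölder-controlled interval $[R_*,R_*+K]$ on which $T$ stays $\gtrsim m$ to a width of order only $1/m$, so that unit sub-shells may fail to fit. Handling this highly diffuse regime is the heart of the matter and is expected to require either a Lions-type concentration-compactness/dichotomy analysis (forcing $u_m$ to split into essentially disjoint clusters, thereby reducing to a smaller-mass problem or contradicting the strict monotonicity of $m\mapsto I_0(m)$ established in Lemma~\ref{lem:binding}), or an a priori pointwise density bound $|u_m|^2\le\bar\rho$ extracted from the Euler--Lagrange equation, as in the original proof~\cite{LuOtto-14}.
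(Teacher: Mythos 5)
There is a genuine gap here, and it is exactly the one you flag in your final paragraph: your argument is organized around the median radius $R_*$, and the contradiction you aim for --- an upper bound $2D(|\chi_R u_m|^2,|\eta_R u_m|^2)\lesssim \int_{R\le|x|\le R+1}|u_m|^2$ from Lemma~\ref{lem:basic-loc} against a lower bound $\gtrsim m^2/R$ --- cannot be closed in the diffuse regime. As you correctly compute, $D(|u_m|^2,|u_m|^2)\le C_1m$ forces $R_*\gtrsim m$ (and nothing prevents $R_*\gg m$), so the lower bound $m^2/R_*$ may be $O(1)$ or smaller, while the $L^{10/3}$/H\"older control of the mass in a unit shell at radius $R_*$ is of size $R_*^{4/5}m^{3/5}\gg m$ and hence useless, and a pigeonhole over unit shells gives nothing because the interval on which both $A(R)\gtrsim m$ and $T(R)\gtrsim m$ hold may have width $O(1)$ with a single shell carrying mass $\sim m$. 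Deferring this case to ``a Lions-type dichotomy analysis'' or to the pointwise density bound of \cite{LuOtto-14} is not a proof; it is precisely the hard part.

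The paper closes this gap by abandoning the median radius altogether and working at the \emph{concentration scale} instead. Set $M_R=\sup_{y}\int_{|x-y|\le R}|u_m|^2$ and pick $r_m$ with $M_{r_m}>m^{2/3}\ge M_{r_m/2}$; after a translation one may assume $\int_{|x|\le r_m}|u_m|^2\ge m^{2/3}$, and H\"older with $\int|u_m|^{10/3}\lesssim m$ shows $r_m\ge 1$ for large $m$. Two bounds that make no reference to where the bulk of the mass sits then combine: (a) Lemma~\ref{lem:annulus-est} with $R=r_m$ and inner mass $\ge m^{2/3}$ gives $\int_{|x|\ge 2r_m}|u_m|^2\le 12C_3\,m^{1/3}$ (the Coulomb repulsion between the concentrated chunk and the far tail does the work, with no need for a lower bound on $D$ in terms of $m^2/R$); (b) covering $B(0,2r_m)$ by $C_B$ balls of radius $r_m/2$, each carrying at most $M_{r_m/2}\le m^{2/3}$, gives $\int_{|x|\le 2r_m}|u_m|^2\le C_Bm^{2/3}$. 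Adding yields $m\le (C_B+12C_3)(m^{1/3}+m^{2/3})$, so $m$ is bounded. The $m^{2/3}$ threshold is chosen so that both contributions are $o(m)$; if you want to salvage your write-up, this two-threshold concentration-function device is the ingredient to import, as the thin-shell pigeonhole at radius $R_*$ cannot be repaired.
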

\begin{proof} We will denote by $C$ a generic constant independent of $m$, the value of which may change from line to line. Assume $I_0(m)$ has a minimizer $u_m$. For every $R>0$, we define
$$ M_R  = \sup_{y\in \R^3} \int_{|x-y|\le R} |u_m(x)|^2 \d x.$$
Since $R\mapsto M_R$ is increasing and $0=M_0\le M_{\infty}=m$, there exists $r_m$ such that
$
M_{r_m} >m^{2/3} \ge M_{r_m/2} . 
$
By the definition of $M_{r_m}$, there exists $y_m\in \R^3$ such that 
\bq \label{eq:I0-rm-0}
\int_{|x-y_m|\le r_m} |u_m(x)|^2 \d x \ge m^{2/3} .
\eq
Since $\cE_0(u)$ is translation-invariant, by replacing $u_m$ with $u_m(\dotv +y_m)$ we can assume $y_m=0$. From Lemma~\ref{lem:annulus-est} and the fact that $\cE_0(u_m)=I_0(m)<0$ we find that 
$$
\int |u_m(x)|^{10/3}  \d x \le Cm.
$$
Therefore, by H\"older's inequality, we get  
\begin{align} \label{eq:Holder-rm}m^{2/3} \le \int_{|x|\le r_m} |u_m(x)|^2 \d x  & \le \Big(\int_{|x|\le r_m} 1 \d x \Big)^{2/5} \Big(\int_{|x|\le r_m} |u_m(x)|^{10/3}\d x \Big)^{3/5} 
\nn\\
&\le C r_m^{6/5} m^{3/5}.
\end{align} 
Thus $r_m \ge 1$ when $m$ is sufficiently large. By applying \eqref{eq:annulus-estimate-1} for $R=r_m$ and using \eqref{eq:I0-rm-0} (with $y_m=0$), we find that
\bq \label{eq:I0-contra-1}
\int_{|x|\ge 2r_m} |u_m(y)|^2 \d y \le 12 C_3 m^{1/3}.
\eq
On the other hand, it is easy to see that there is a universal constant $C_B$ such that the ball $B(0,2r_m)\subset \R^3$ can be covered by $C_B$ smaller balls of radius $r_m/2$. By the definition of $M_{r_m/2}$, the integral of $|u_m|^2$ over each smaller ball is smaller than $M_{r_m/2}$, and $ M_{r_m/2} \le m^{2/3}$.  Therefore, 
\bq \label{eq:I0-contra-2}
\int_{|x|\le 2r_m} |u(x)|^2 \d x \le C_B m^{2/3}.
\eq
Combining \eqref{eq:I0-contra-1} and \eqref{eq:I0-contra-2}, we find that 
$$ m= \int_{\R^3} |u(x)|^2 \d x \le (C_B + 12 C_3) (m^{1/3}+ m^{2/3}). $$
Thus $m$ is bounded by a universal constant. 
\end{proof}

We will also need the following existence result, which is a typical application of the concentration-compactness method \cite{Lions-84,Lions-84b}. Since we could not localize a precise reference, a proof will be provided in the appendix. 

\begin{lemma}[Existence for $I_0(m)$] \label{lem:mass-decomp-I0}
For all $m>0$, the followings hold true. 

\medskip

\noindent {\rm (i)} If $\{v_n\}$ is a minimizing sequence for $I_0(m)$, then up to subsequences and translations, $v_n$ converges weakly in $H^1(\R^n)$ to some $v \not\equiv 0$.

\medskip

\noindent {\rm (ii)} We can decompose 
$$m=\sum_{j=1}^\infty m_j \quad{\text and} \quad I_0(m)=\sum_{j=1}^\infty I_0(m_j),$$
where $m_j \ge 0$ and $I_0(m_j)$ has a minimizer for all $j\ge 1$.

\medskip

\noindent {\rm (iii)} There exists $m_0 > 0$ such that 
$$
I_0(m)< I_0(m')+ I_0(m-m') \quad \text{for all } \quad 0<m'< m\le m_0.
$$
Consequently, $I_0(m)$ has a minimizer for all $m\le m_0.$

\end{lemma}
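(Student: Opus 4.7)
The plan is to implement the concentration-compactness scheme of Lions, establishing (i)--(iii) in order.

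For part (i), I would combine $I_0(m)<0$ (Lemma~\ref{lem:binding}) with Lemma~\ref{lem:basic-energy} (for $V\equiv 0$) to obtain a uniform $H^1$-bound on a minimizing sequence $\{v_n\}$, and then rule out vanishing. If $\sup_{y\in\R^3}\int_{|x-y|\le R}|v_n|^2\to 0$ for every $R>0$, then Lions' vanishing lemma would force $v_n\to 0$ in $L^p(\R^3)$ for every $p\in(2,6)$, in particular for $p=8/3$ and $p=10/3$; since the kinetic and Coulomb terms are nonnegative, this would give $\liminf_n\cE_0(v_n)\ge 0$, contradicting $I_0(m)<0$. Consequently there exist $y_n\in\R^3$ and $\delta,R_0>0$ with $\int_{|x-y_n|\le R_0}|v_n|^2\ge\delta$, and a subsequence of $v_n(\cdot+y_n)$ converges weakly in $H^1$ to some $v\not\equiv 0$.

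For part (ii), the plan is to iterate (i) via a Brezis--Lieb splitting. Let $v^{(1)}$ be the weak limit from the first extraction, of mass $m_1\in(0,m]$. Applying Brezis--Lieb to each Lebesgue term, weak $L^2$-convergence of $\nabla v_n$ to the gradient term, and Hardy--Littlewood--Sobolev (noting $|v_n|^2-|v^{(1)}|^2\to 0$ in $L^{6/5}$) to the Coulomb term, I obtain
\[
  \cE_0(v_n)\;=\;\cE_0(v^{(1)})+\cE_0(v_n-v^{(1)})+o(1).
\]
Combining this with the binding inequality $I_0(m)\le I_0(m_1)+I_0(m-m_1)$ from Lemma~\ref{lem:binding}, together with $\cE_0(v^{(1)})\ge I_0(m_1)$ and $\liminf\cE_0(v_n-v^{(1)})\ge I_0(m-m_1)$, forces equality throughout, so $v^{(1)}$ minimizes $I_0(m_1)$ and $\{v_n-v^{(1)}\}$ is minimizing for $I_0(m-m_1)$. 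Iterating yields masses $m_j>0$ with $I_0(m_j)$ attained; the iteration either terminates or exhausts the mass, since any residual mass with vanishing behaviour would carry zero energy, contradicting $I_0<0$.

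For part (iii), I would establish strict subadditivity $I_0(m)<I_0(m')+I_0(m-m')$ for $0<m'<m\le m_0$; by (ii) this forces the decomposition of $I_0(m)$ to reduce to a single term, giving existence. The argument is via scaling. For a fixed profile $\varphi\in C_c^\infty(\R^3)$ with $\int|\varphi|^2=1$, the trial function $\sqrt{\alpha}\,\ell^{3/2}\varphi(\ell\,\cdot)$ has mass $\alpha$ and energy
\[
  \ell^2\bigl(c_{\rm W}\alpha\|\nabla\varphi\|_{L^2}^2+c_{\rm TF}\alpha^{5/3}\|\varphi\|_{L^{10/3}}^{10/3}\bigr)+\ell\bigl(-c_{\rm D}\alpha^{4/3}\|\varphi\|_{L^{8/3}}^{8/3}+\alpha^2 D(|\varphi|^2,|\varphi|^2)\bigr).
\]
For small $\alpha$ the $\ell$-coefficient is of order $-\alpha^{4/3}$ and the $\ell^2$-coefficient of order $\alpha$, so optimizing in $\ell$ yields $I_0(\alpha)\le -c_*\alpha^{5/3}$ with $c_*>0$. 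To upgrade this to strict subadditivity, I would show that $\alpha\mapsto I_0(\alpha)/\alpha$ is strictly decreasing near $0$: using the virial identity $\partial_\ell\cE_0(\ell^{3/2}u_\alpha(\ell\cdot))|_{\ell=1}=0$ satisfied by any small-mass minimizer $u_\alpha$, together with the a priori bound $\|\nabla u_\alpha\|_{L^2}^2\gg D(|u_\alpha|^2,|u_\alpha|^2)$ implied by the scaling computation, one checks $\tfrac{d}{ds}[\cE_0(su_\alpha)/s^2]|_{s=1}<0$. A standard monotonicity-to-subadditivity lemma then yields the required strict inequality.

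The main obstacles I anticipate are the Brezis--Lieb-type decomposition of the nonlocal Coulomb term in part~(ii) and the virial-based monotonicity argument for $I_0(\alpha)/\alpha$ in part~(iii); the latter in particular requires controlling the scaling of small-mass minimizers uniformly and handling the delicate competition between the Dirac term and the Coulomb self-energy.
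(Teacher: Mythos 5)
Parts (i) and (ii) of your proposal follow essentially the same route as the paper: ruling out vanishing via Lions' lemma combined with $I_0(m)<0$ to get a nontrivial weak limit, then iterating a Brezis--Lieb splitting of all four energy terms (with Hardy--Littlewood--Sobolev and pointwise convergence handling the Coulomb cross term). These parts are correct; the one point you state loosely is why the extracted masses exhaust $m$ --- in the paper this follows from the quantitative bound $m_{j+1}\ge \tfrac12\,\mathfrak{M}(\{v_n^{(j)}\})$, which forces $\mathfrak{M}(\{v_n^{(j)}\})\to 0$, hence nonnegative residual energy, hence $\sum_j m_j=m$ by strict monotonicity of $I_0$ --- but your sketch captures the right mechanism.

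Part (iii), however, has a genuine gap: your argument is circular as written. You derive strict subadditivity from a virial identity ``satisfied by any small-mass minimizer $u_\alpha$'', i.e.\ you differentiate $s\mapsto \cE_0(su_\alpha)/s^2$ at $s=1$ along minimizers of $I_0(\alpha)$ for all small $\alpha$. But the existence of those minimizers is precisely what part (iii) is supposed to establish; part (ii) only guarantees that $I_0$ is attained at \emph{some} masses, not at every small mass. In addition, a sign condition on the derivative at the single point $s=1$, available only at masses where minimizers exist, does not give that $\alpha\mapsto I_0(\alpha)/\alpha$ is strictly decreasing on an interval, which is what your ``standard monotonicity-to-subadditivity lemma'' needs for \emph{both} pieces $m'$ and $m-m'$. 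The paper (following S\'anchez--Soler) avoids both problems: optimizing explicitly over dilations gives $I_0(m)/m=\inf_{\int|u|^2=1}\bigl(-h_u(m^{2/3})\bigr)$ with $h_u(s)=s(C_u-sD_u)_+^2/(A_u+sB_u)$; the monotonicity of $s\mapsto h_u(s)$ for small $s$ is proved \emph{uniformly} over normalized $u$ (using $B_u\le CA_u(\int|u|^2)^{2/3}$ and $D_u\le CC_u(\int|u|^2)^{2/3}$), which yields the non-strict inequality $I_0(m-m')/(m-m')\ge I_0(m)/m$ for arbitrary masses, and strictness is extracted only for the piece $m'$ after using part (ii) to reduce to the case where $I_0(m')$ is attained. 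To repair your version you would need to (a) make that same reduction, (b) verify the sign of $\frac{d}{ds}\bigl[\cE_0(su_{m'})/s^2\bigr]$ on the whole interval $1\le s\le (m/m')^{1/2}$ rather than only at $s=1$, and (c) supply the non-strict monotonicity for the piece $m-m'$ by an argument that does not presuppose a minimizer.
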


\section{Radius estimates} \label{sec:radius}
Throughout this section we will assume that $I_V(m)$ has a minimizer $u_m$ for $m$ large. Our goal is to obtain several estimates on how the mass of $u_m$ is distributed that will be used afterwards to derive a contradiction. We will assume that
$$
 V\in L_{\rm loc}^{3/2}(\R^3), \quad V(x)\in \mathbb{R} \quad \text{ and} \quad \limsup_{|x|\to \infty} |xV(x)|<\infty.
$$

\subsection*{Notations} We will always denote by $C$ a generic (typically large) constant independent of $V$ and $m$. In addition, we will denote by $C_V$ a generic constant dependent on $V$ but independent of $m$. In the molecular case \eqref{eq:def-V}, $C_V$ can be chosen to be $C(1+\sum_{j=1}^J Z_j)$. 
%The values of these constants may change from line to line. 
In this way we keep the notations concise while keeping track of the dependence of constants on the nuclear charges, as this will be required for the proof of Theorem~\ref{thm:small-Z}.

\begin{definition}[Radius of the system]
Define $\chi_R$ and $\eta_R$ as in \eqref{eq:def-chiR-etaR}.
We define $R_m > 0$ such that
$$
\int_{\R^3} |\chi_{R_m} u_m|^2 = \int_{\R^3} |\eta_{R_m} u_m|^2 = \frac{m}{2}.
$$
Since $R\mapsto \int |\chi_R u_m|^2$ is continuous and increases from $0$ to $m$, this is always possible.
\end{definition}

\begin{lemma} \label{lem:Rm>>1} We have $R_m \ge m^{1/3}/C_V -1.$
%In the molecular case \eqref{eq:def-V}, we can choose $C_V\le C \sqrt{\sum_{j=1}^J Z_j^2 +1}.$
\end{lemma}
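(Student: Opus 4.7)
The plan is to deduce a lower bound on $R_m$ from an a priori $L^{10/3}$ estimate on $u_m$ combined with H\"older's inequality. Since by Lemma~\ref{lem:binding} we have $\mathcal{E}_V(u_m) = I_V(m) \le I_0(m) < 0$, the basic energy estimate of Lemma~\ref{lem:basic-energy} yields
\[
\frac{c_{\rm TF}}{2} \int_{\R^3} |u_m|^{10/3} \le \mathcal{E}_V(u_m) + C_1 m \le C_1 m,
\]
so that $\int |u_m|^{10/3} \le C_V m$, where the use of the molecular bound on $C_1$ gives the dependence of the implicit constant on $V$ in the prescribed form (recall that $C_V$ is chosen as $C(1+\sum_j Z_j)$ in the molecular case, which dominates $C_1$).

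Next I would exploit the definition of $R_m$. Since $\chi_{R_m}$ vanishes outside the ball of radius $R_m+1$, we have
\[
\frac{m}{2} = \int_{\R^3} |\chi_{R_m} u_m|^2 \le \int_{|x|\le R_m+1} |u_m(x)|^2 \, \d x.
\]
By H\"older's inequality with exponents $5/2$ and $5/3$,
\[
\int_{|x|\le R_m+1} |u_m|^2 \le \Bigl( \tfrac{4\pi}{3}(R_m+1)^3 \Bigr)^{2/5} \Bigl( \int |u_m|^{10/3} \Bigr)^{3/5} \le C\,(R_m+1)^{6/5}\,(C_V m)^{3/5}.
\]

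Combining the last two displays gives $(R_m+1)^{6/5} \ge c\, m^{2/5}/C_V^{3/5}$ for some universal constant $c>0$, hence $R_m + 1 \ge m^{1/3}/C_V$ after absorbing constants into $C_V$, which is the desired estimate. The argument is essentially routine once the $L^{10/3}$ bound is obtained; the only mild subtlety is to ensure that all constants produced along the way can be absorbed into a single $C_V$ of the form $C(1+\sum_j Z_j)$, which is immediate from the explicit molecular bound on $C_1$ in Lemma~\ref{lem:basic-energy}.
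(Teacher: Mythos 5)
Your argument is correct and is essentially the same as the paper's proof: the a priori bound on $\int |u_m|^{10/3}$ from Lemma~\ref{lem:basic-energy} together with $I_V(m)<0$, followed by H\"older's inequality on the ball of radius $R_m+1$ where $\chi_{R_m}$ is supported. The only quibble is a constant-tracking slip: the molecular bound gives $C_1 \le C(1+\sum_j Z_j)^2 = C_V^2$, not $C_1\le C_V$, so the intermediate estimate should read $\int |u_m|^{10/3} \le C_V^2 m$; this changes nothing in the conclusion, since after taking the powers $3/5$ and then $5/6$ one still ends with $R_m+1 \ge m^{1/3}/C_V$.
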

\begin{proof} The proof is similar to the one of \eqref{eq:Holder-rm}. Since $\cE_V(u_m) = I_V(m)<0$, It follows from \eqref{eq:kinetic-estimate} that 
\bq \label{eq:u10/3}
\int |u|^{10/3} \le  C_V^2 m.
\eq
Therefore, by H\"older's inequality,
\begin{align*}
\frac{m}{2} = \int \chi_{R_m} |u_m|^2 &\le \left( \int |\chi_{R_m}|^{5/2} \right)^{2/5} \left( \int |u_m|^{10/3} \right)^{3/5} \\
&\le \Big( \frac{4\pi}{3}  (R_m+1)^3 \Big)^{2/5}(C_V^2 m)^{3/5}
\end{align*}
and the desired estimate follows. 
\end{proof}

\begin{lemma} \label{lem:>2Rm} When $m \ge C_V^3$ we have
 \bq \label{eq:>2Rm}
\int_{R_m/2 \le |x| \le 2R_m+2}|u_m(x)|^2\d x \ge m-C_V.
\eq
\end{lemma}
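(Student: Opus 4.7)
The plan is to control the mass outside the annulus $R_m/2 \le |x| \le 2R_m+2$ by applying the annulus estimate of Lemma~\ref{lem:annulus-est} at two different scales: $R = R_m+1$ to bound the mass escaping to infinity, and $R = R_m/2$ to bound the mass concentrated near the nuclei. From the definition of $R_m$ together with the pointwise inequalities $\chi_{R_m}^2 \le \1_{|x| \le R_m+1}$ and $\eta_{R_m}^2 \le \1_{|x| \ge R_m}$, I first extract
$$ \int_{|x|\le R_m+1} |u_m|^2 \ge \frac{m}{2} \qquad \text{and} \qquad \int_{|x|\ge R_m} |u_m|^2 \ge \frac{m}{2}. $$
By Lemma~\ref{lem:Rm>>1}, the hypothesis $m\ge C_V^3$ (after enlarging $C_V$ once and for all) forces $R_m$ to be as large as needed, so that simultaneously $R_m/2\ge 1$, which is required in order to invoke Lemma~\ref{lem:annulus-est}, and $|xV(x)|\le C_V$ throughout $\{|x|\ge R_m/2\}$. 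The latter bound uses the assumption $\limsup_{|x|\to\infty}|xV(x)|<\infty$; in the molecular case \eqref{eq:def-V}, it suffices to arrange $R_m/2 \ge 2\max_j|{\bf r}_j|$, since then $|x-{\bf r}_j|\ge|x|/2$ and $|xV(x)|\le 2\sum_j Z_j \le C_V$.

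With these preliminaries in hand, I would then apply \eqref{eq:annulus-estimate-1} twice. First, with $R=R_m+1$, the lower bound $\int_{|x|\le R_m+1}|u_m|^2\ge m/2$ together with $|xV(x)|\le C_V$ on $\{|x|\ge R_m+1\}$ turns the right-hand side into at most $C_V\,m$, yielding $\int_{|y|\ge 2R_m+2}|u_m|^2\le C_V$. Second, with $R=R_m/2$, the lower bound $\int_{|y|\ge R_m}|u_m|^2\ge m/2$ plays the symmetric role on the left-hand side and produces $\int_{|x|\le R_m/2}|u_m|^2\le C_V$. Subtracting both contributions from the total mass $m$ yields \eqref{eq:>2Rm} after absorbing multiplicative constants into $C_V$. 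I do not foresee a substantive obstacle; the only bookkeeping is to verify that $m\ge C_V^3$, combined with $R_m\ge m^{1/3}/C_V-1$ from Lemma~\ref{lem:Rm>>1}, genuinely provides enough room to make $R_m$ exceed every $V$-dependent threshold that appears in the argument.
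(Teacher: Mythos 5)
Your proposal is correct and follows essentially the same route as the paper: both arguments apply the annulus estimate \eqref{eq:annulus-estimate-1} at the two scales $R=R_m/2$ and $R=R_m+1$, use the defining property of $R_m$ to bound one factor on the left-hand side below by $m/2$ in each case, and use Lemma~\ref{lem:Rm>>1} to guarantee $|xV(x)|\le C_V$ on $\{|x|\ge R_m/2\}$ so that the right-hand side is at most $C_V m$. The resulting bounds $\int_{|x|\le R_m/2}|u_m|^2\le C_V$ and $\int_{|x|\ge 2R_m+2}|u_m|^2\le C_V$ are exactly those of the paper.
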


\begin{proof} If $m$ is larger than $C_V^3$, then $R_m$ is large by Lemma~\ref{lem:Rm>>1},  and hence $|xV(x)|\le C_V$ for $|x|\ge R_m/2$ by the assumption on $V$. Applying Lemma~\ref{lem:annulus-est}, we find that
\begin{align} \label{eq:annulus-estimate-1b} 
\left( \int_{|x|\le R}|u_m(x)|^2 \d x \right) \left( \int_{|y|\ge 2R}|u_m(y)|^2 \d y\right) \le C_V m 
\end{align}
for all $R \ge R_m/2$. We use \eqref{eq:annulus-estimate-1b} with $R=R_m/2$ and observe that
$$
 \int_{|y|\ge R_m}|u_m(y)|^2\d y \ge \int_{\R^3}|\eta_{R_m}(y)u_m(y)|^2\d y = \frac{m}{2} .
$$
This gives 
\bq \label{eq:>2Rm-b}
 \int_{|x|\le R_m/2}|u_m(y)|^2\d x \le C_V.  
\eq
Similarly, using \eqref{eq:annulus-estimate-1b} with $R=R_m+1$ and noting  
$$
 \int_{|x|\le R_m+1}|u_m(y)|^2\d x \ge \int_{\R^3}|\chi_{R_m}(x)u_m(x)|^2\d x = \frac{m}{2}, 
$$
we obtain
\bq \label{eq:>2Rm-a}
 \int_{|x|\ge 2R_m+2}|u_m(x)|^2\d x \le C_V.  
\eq
Putting  \eqref{eq:>2Rm-a} and \eqref{eq:>2Rm-b} together, we obtain \eqref{eq:>2Rm}.
\end{proof}

\begin{lemma} \label{lem:Rm>m} When $m \ge C_V^3$, we have $R_m \ge m/C.$
\end{lemma}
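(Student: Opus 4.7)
The plan is to pit the electron--electron Coulomb self-repulsion of $u_m$ against the remaining, at-most-linear-in-$m$ terms of $\cE_V$. The lemma reduces to balancing a geometric lower bound $D(|u_m|^2,|u_m|^2) \gtrsim m^2/R_m$ (coming from the concentration in Lemma~\ref{lem:>2Rm}) against an energetic upper bound $D(|u_m|^2,|u_m|^2) \lesssim m$.

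For the geometric step, Lemma~\ref{lem:>2Rm} places at least $m-C_V \ge m/2$ of the mass of $u_m$ in the annulus
\[
A_{R_m} := \{x \in \R^3 : R_m/2 \le |x| \le 2R_m + 2\},
\]
provided we take $m \ge C_V^3$ large enough that $C_V \le m/2$. This annulus has diameter at most $4R_m + 4 \le 8R_m$ once $R_m \ge 1$, which is automatic by Lemma~\ref{lem:Rm>>1} for $m$ large. Writing $\rho := |u_m|^2 \1_{A_{R_m}}$ and expanding $|u_m|^2 = \rho + (|u_m|^2-\rho)$ with both summands nonnegative, the trivial positivity $D(f,g) \ge 0$ for $f,g\ge 0$ yields
\[
D(|u_m|^2, |u_m|^2) \ge D(\rho,\rho) \ge \frac{1}{16 R_m}\Bigl(\int_{\R^3}\rho\Bigr)^2 \ge \frac{m^2}{C R_m}.
\]

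For the energetic step, Lemma~\ref{lem:basic-energy} -- applied with $|V|$ in place of $-V$, which is admissible because $|V|\in L^{3/2}_{\rm loc}$ vanishes at infinity under the present hypotheses, so that $-c_{\rm W}\Delta - 4|V|$ is still bounded below -- gives
\[
D(|u_m|^2, |u_m|^2) \le \cE_V(u_m) + C_V m,
\]
and Lemma~\ref{lem:binding} furnishes $\cE_V(u_m) = I_V(m) \le I_0(m) \le 0$. Combining these, $D(|u_m|^2, |u_m|^2) \le C_V m$, and with the previous lower bound we obtain $R_m \ge m/C$.

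The key point is conceptual: a cloud of $m$ electrons confined to a region of linear size $R$ carries Coulomb self-energy of order $m^2/R$, which the linear-in-$m$ remainder of $\cE_V$ cannot absorb unless $R \gtrsim m$. No serious obstacle appears beyond the annular concentration already established in Lemma~\ref{lem:>2Rm}, which is precisely what guarantees that the effective support of $u_m$ has diameter comparable to $R_m$ rather than being artificially inflated by a long tail.
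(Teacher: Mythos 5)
Your argument is correct in its mechanism and takes a genuinely different route from the paper. The paper does not touch the full self-energy $D(|u_m|^2,|u_m|^2)$ at all: it applies the localization estimate \eqref{eq:annulus-0} once more, at $R=R_m$, so that the quantity being bounded below is the \emph{cross} term $\int|\chi_{R_m}u_m|^2\cdot\int|y|^{-1}|\eta_{R_m}u_m|^2\gtrsim m\cdot\frac{m-C_V}{R_m}$, and the quantities bounding it above are the localization error $C_2\int_\Omega|u_m|^2\le Cm$ (with $C$ universal) and the potential term, which is $\le C_Vm/R_m$ and hence suppressed by $1/R_m$. You instead pit the total Coulomb self-energy, bounded below by $m^2/(CR_m)$ via the annular concentration of Lemma~\ref{lem:>2Rm}, against the a priori bound $D(|u_m|^2,|u_m|^2)\le \cE_V(u_m)+C_1m\le C_1m$ from Lemma~\ref{lem:basic-energy}; both steps are sound, including your justification for using Lemma~\ref{lem:basic-energy} with $V$ replaced by $-|V|$ in the Section~\ref{sec:radius} setting. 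The one caveat is the quality of the constant: your $C_1$ contains $-\tfrac12\inf\spec(-c_{\rm W}\Delta-4|V|)$, i.e.\ $C_1\sim C_V^2$ in the molecular case, so you obtain $R_m\ge m/(CC_V^2)$ rather than the stated $R_m\ge m/C$ with $C$ independent of $V$ (which is the convention announced at the start of Section~\ref{sec:radius}, and which the paper's route delivers because the attractive potential only enters through the $1/R_m$-suppressed term). This weakening is harmless in every subsequent application — in Theorem~\ref{thm:small-Z} the charges are bounded by $Z_{\rm c}$ so $C_V$ is effectively universal, and in Theorems~\ref{thm:short-range} and~\ref{thm:no-stable-minimizers} the constants are allowed to depend on $V$ — but you should either state the weaker conclusion or note why the dependence can be removed. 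What your approach buys is directness (no second pass through the binding inequality); what the paper's buys is the sharper, $V$-independent constant.
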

\begin{proof}
 Let us apply \eqref{eq:annulus-0} with $R=R_m$:
\begin{multline} \label{eq:Rm>=m-1}
\frac{1}{3}\left( \int_{\R^3} |\chi_{R_m}(x) u_m(x)|^2 \d x\right)  \left( \int_{\R^3}\frac{|\eta_{R_m}(y) u_m(y)|^2}{|y|} \d y\right) 
\\ \le - \int_{\R^3} V(x) |\eta_{R_m}(x) u_m(x)|^2 \d x  + C \int_{R_m\le |x|\le R_m+1} |u_m(x)|^2 \d x.
\end{multline}
Using Lemma~\ref{lem:>2Rm}, we have 
\begin{align} \label{eq:Rm>=m-2}
\int_{\R^3}\frac{|\eta_{R_m}(y) u_m(y)|^2}{|y|} \d y &\ge \frac{1}{2R_m+2}  \int_{ |x| \le 2R_m+2} |\eta_{R_m}(y) u_m(y)|^2 \d y \nn \\
& = \frac{1}{2R_m+2} \left( \frac{m}{2} - \int_{|x|\ge 2R_m+2} |u_m(x)|^2 \d x \right)  \nn\\
& \ge \frac{m-C_V}{4R_m+4}.
\end{align}
Moreover, 
\begin{align}  \label{eq:Rm>=m-3}
 - \int_{\R^3} V(x) |\eta_{R_m}(x) u_m(x)|^2 \d x 
         & \le \sup_{|y|\ge R_m} |yV(y)| \int_{|x|\ge R_m} \frac{|u_m(x)|^2}{|x|} \d x  \nn\\
         & \le \frac{C_V m}{R_m}.
\end{align}
Here we have used again the assumption that $|xV(x)|\le C_V$ for $|x|$ large. Substituting \eqref{eq:Rm>=m-2} and \eqref{eq:Rm>=m-3} into \eqref{eq:Rm>=m-1} and using the obvious bound 
$$\int_{R_m\le |x|\le R_m+1} |u_m(x)|^2 \d x \le \int_{\R^3} |u_m|^2 = m$$
we deduce that 
\begin{align*}
  \frac{1}{3} \cdot \frac{m}{2} \cdot \frac{m-C_V}{4R_m+4} \le \frac{C_V m}{R_m} + C m \\
\end{align*}
which implies $R_m \ge m/C$.
\end{proof}

The previous estimate tells us that, if a minimizer exists for large $m$, most of the electrons are far away from the origin.
In the next lemma we use this fact to approximately separate the energy in a term coming from the electrons that remain close to the origin and a term coming from those that are far away. 

\begin{lemma} \label{lem:rm-am} When $m\ge C_V^3$, there exists $r_m \in [R_m m^{-1/2}, 2R_m m^{-1/2}]$ such that 
\bq \label{eq:am<=Z}
a_m:=\int_{\R^3} |\chi_{r_m}u_m|^2 \le  \left(  1 + Cm^{-1/2}\right) \Big( \sup_{|x|\ge r_m} |xV(x)|  + C_V m^{-1/2} \Big),
\eq
and
\bq \label{eq:local-rm-am}
\cE_V(u_m)- \cE_V(\chi_{r_m} u_m) - \cE_0(\eta_{r_m} u_m ) \ge -\sup_{|x|\ge r_m} |xV(x)| \frac{2m}{R_m} - \frac{C_V m^{1/2}}{R_m}.
\eq
\end{lemma}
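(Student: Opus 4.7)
The plan is to choose $r_m$ by an averaging (pigeonhole) argument, then derive the two estimates from Lemma~\ref{lem:basic-loc} and \eqref{eq:loc-0} respectively, exploiting the fact that $r_m$ is much smaller than $R_m$ while almost all the mass of $u_m$ is concentrated in the shell $R_m/2 \le |x| \le 2R_m+2$.

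\emph{Step 1: Choice of $r_m$.} For $m$ large, the interval $[R_m m^{-1/2}, 2R_m m^{-1/2}]$ is contained in $[0, R_m/2]$. By Fubini, $\int_{R_m m^{-1/2}}^{2R_m m^{-1/2}} \bigl(\int_{\Omega_r}|u_m|^2\bigr)\,dr \le \int_{|x|\le R_m/2}|u_m|^2 \le C_V$, where the last bound comes from Lemma~\ref{lem:>2Rm}. Since the interval has length $R_m m^{-1/2}$, the mean value theorem yields some $r_m$ in this interval with
$$\int_{\Omega_{r_m}}|u_m|^2 \;\le\; \frac{C_V m^{1/2}}{R_m}.$$

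\emph{Step 2: Bound on $a_m$.} Apply Lemma~\ref{lem:basic-loc} with $\chi=\chi_{r_m}$, $\eta=\eta_{r_m}$. To lower bound the LHS with a sharp constant, split the inner integration in $D(|\chi_{r_m} u_m|^2,|\eta_{r_m} u_m|^2)$ according to $|y|\ge R_m/2$ vs.\ $r_m\le |y|\le R_m/2$. On the first region, for any $|x|\le r_m+1 \le 2R_m m^{-1/2}+1$ one has $|x|/|y| \le C m^{-1/2}$, so $1/|x-y| \ge (1-Cm^{-1/2})/|y|$. On the second region, the mass of $u_m$ is $\le C_V$ by Lemma~\ref{lem:>2Rm}, producing an error of size $C_V m^{1/2}/R_m$. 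Combining and using $\int |\eta_{r_m} u_m|^2/|y|\,dy \ge (m-C_V)/(4R_m+4) \ge m/(4R_m)$, also from Lemma~\ref{lem:>2Rm}, yields
$$2D(|\chi_{r_m} u_m|^2,|\eta_{r_m} u_m|^2) \;\ge\; a_{r_m}(1-Cm^{-1/2})\int \frac{|\eta_{r_m} u_m|^2}{|y|}\,dy \;-\; \frac{C_V m^{1/2}}{R_m}.$$
On the RHS of Lemma~\ref{lem:basic-loc}, bound $-\int V|\eta_{r_m} u_m|^2 \le \sup_{|x|\ge r_m}|xV(x)|\int |\eta_{r_m} u_m|^2/|x|\,dx$ and use Step 1. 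Dividing by $\int |\eta_{r_m} u_m|^2/|y|\,dy \ge m/(4R_m)$ and then by $(1-Cm^{-1/2})$ gives \eqref{eq:am<=Z}.

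\emph{Step 3: Energy estimate.} Apply \eqref{eq:loc-0} with $\chi=\chi_{r_m}$, $\eta=\eta_{r_m}$ and discard the nonnegative $2D(\cdot,\cdot)$ term:
$$\cE_V(u_m)-\cE_V(\chi_{r_m} u_m)-\cE_0(\eta_{r_m} u_m) \;\ge\; \int V|\eta_{r_m} u_m|^2 \;-\; C_2 \int_{\Omega_{r_m}}|u_m|^2.$$
Splitting the integral $\int |\eta_{r_m} u_m|^2/|x|\,dx$ again into $\{r_m\le|x|\le R_m/2\}$ (where the mass is $\le C_V$ and $1/|x|\le 1/r_m \le m^{1/2}/R_m$) and $\{|x|\ge R_m/2\}$ (where $1/|x|\le 2/R_m$ and the total mass is $\le m$), we obtain $\int |\eta_{r_m} u_m|^2/|x|\,dx \le 2m/R_m + C_V m^{1/2}/R_m$. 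Hence $\int V|\eta_{r_m} u_m|^2 \ge -\sup|xV|\cdot 2m/R_m - C_V m^{1/2}/R_m$. Combining with the bound from Step 1, $-C_2 \int_{\Omega_{r_m}}|u_m|^2 \ge -C_V m^{1/2}/R_m$, proves \eqref{eq:local-rm-am}.

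The main technical obstacle is the sharp prefactor $1+Cm^{-1/2}$ (independent of $V$) in \eqref{eq:am<=Z}: one cannot afford the naive bound $|x-y|\le 3|y|$ used in Lemma~\ref{lem:annulus-est}; instead one must exploit that $r_m/R_m = O(m^{-1/2})$ and that the bulk of $|\eta_{r_m} u_m|^2$ sits at $|y|\sim R_m$ to reduce the multiplicative constant from $3$ to essentially $1$.
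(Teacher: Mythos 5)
Your argument is essentially the paper's proof: the same pigeonhole selection of $r_m\in[R_mm^{-1/2},2R_mm^{-1/2}]$ so that the transition annulus carries mass at most $C_Vm^{1/2}/R_m$, the same sharpened estimate $|x-y|\le|y|(1+Cm^{-1/2})$ for $|x|\le r_m+1\le R_m/2\le|y|$ to obtain the prefactor $1+Cm^{-1/2}$, and the same normalization by $\int_{|y|\ge R_m/2}|u_m(y)|^2|y|^{-1}\d y$, which Lemma \ref{lem:>2Rm} traps between $(m-C_V)/(4R_m+4)$ and $2m/R_m$. One bookkeeping caveat in your Step 2: in the displayed lower bound for $2D(|\chi_{r_m}u_m|^2,|\eta_{r_m}u_m|^2)$, restoring the region $r_m\le|y|\le R_m/2$ so as to write the full integral $\int|\eta_{r_m}u_m|^2|y|^{-1}\d y$ costs an error of size $a_m\cdot C_Vm^{1/2}/R_m$, not $C_Vm^{1/2}/R_m$, and $a_m$ is not yet known to be $O(1)$ at that stage; the clean route (the paper's) is to keep only the restricted integral over $|y|\ge R_m/2$ in the Coulomb lower bound and divide by that quantity, a one-line adjustment that changes nothing downstream.
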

\begin{proof} Recall that by Lemma \ref{lem:Rm>m}, $R_m m^{-1/2} \ge m^{1/2}/C$. Since
 \[
 \int_{|x| \le R_m/2} \abs{u_m(x)}^2 \d x \le C_V
 \]
by Lemma~\ref{lem:>2Rm},  there exists 
 $r_m \in [R_m m^{-1/2}, 2R_m m^{-1/2}]$ such that
 \[
 \int_{r_m \le \abs{x} \le r_m+1} \abs{u_m(x)}^2 \d x \le  \frac{C_V m^{1/2}}{R_m}.
 \]
Applying the localization estimate \eqref{eq:loc-0} to $R=r_m$, we have
\begin{align} \label{eq:loc-rm}
0 &\ge \cE_V(u_m)- \cE_V(\chi_{r_m} u_m) - \cE_0(\eta_{r_m} u_m ) \nn \\ 
	&\ge 2	D(|\chi_{r_m} u_m|^2, |\eta_{r_m}u_m|^2)  
	 +  \int_{\R^3} V(y)|\eta_{r_m} (y) u_m(y)|^2 \d y  - \frac{C_V m^{1/2}}{R_m}.
\end{align}
The first term can be bounded by using
 \[
 \abs{x-y} \le \abs{x}+\abs{y} = \abs{y} \left( \frac{|x|}{\abs{y}} + 1 \right) \le |y| (Cm^{-1/2}+1)
 \]
when $|x| \le r_m+1 \le R_m/2 \le |y|$, so
 \begin{align}
  & 2 D(|\chi_{r_m} u_m|^2, |\eta_{r_m}u_m|^2) \nn\\
   & \qquad \ge \int_{\R^3} |\chi_{r_m}(x) u_m(x)|^2 \left( \int_{\abs{y}\ge R_m / 2}   \frac{|\eta_{r_m}(y)u_m(y)|^2}{|x-y|}  \d y \right) \d x \nn\\
    & \qquad \ge a_m \left(  1 + Cm^{-1/2}\right)^{-1} \int_{\abs{y}\ge R_m / 2}\frac{|u_m(y)|^2}{|y|}  \d y. \label{eq:am-Duu}
 \end{align} 
 Moreover, 
 \begin{align}
   \int_{\R^3} |V(y)| |\eta_{r_m} (y) & u_m(y)|^2 \d y 
               \le \sup_{|x|\ge r_m} |xV(x)| \int_{\abs{y} \ge r_m} \frac{|u_m(y)|^2}{\abs y} \d y \nn \\
              &\le \sup_{|x|\ge r_m} |xV(x)|   \left( \frac{C_V}{r_m} +  \int_{ \abs{y} \ge R_m/2} \frac{|u_m(y)|^2}{\abs y} \d y \right) \label{eq:am-V},
 \end{align}  
where the last estimate follows from Lemma~\ref{lem:>2Rm} by noting
$$
\int_{r_m\le |y| \le R_m/2} \frac{|u_m(y)|^2}{|y|} \d y \le \frac{1}{r_m} \int_{|y|\le R_m/2} |u_m(y)|^2 \d y \le \frac{C_V}{r_m}.
$$
Inserting \eqref{eq:am-Duu} and \eqref{eq:am-V} into \eqref{eq:loc-rm}, we obtain
 \begin{align} \label{eq:a<Z-1}
 0 &\ge \cE_V(u_m)- \cE_V(\chi_{r_m} u_m) - \cE_0(\eta_{r_m} u_m ) \\
 &\ge  \left( \frac{a_m}{ 1 + Cm^{-1/2}} - \sup_{|x|\ge r_m} |xV(x)| \right) \int\limits_{|y|\ge R_m/2} \frac{|u_m(y)|^2}{|y|} \d y  - \frac{C_V m^{1/2}}{R_m}.  \nn  \end{align} 
On the other hand, using Lemma \ref{lem:>2Rm} we have
$$
\int_{|y|\ge R_m/2} \frac{|u_m(y)|^2}{|y|} \d y \ge \frac{1}{2R_m} \int_{R_m/2 \le |y| \le 2R_m}  |u_m(y)|^2 \d y \ge \frac{m-C_V}{2R_m}.  
$$
Thus we deduce from \eqref{eq:a<Z-1} that
\begin{align*}
 \frac{a_m}{ 1 + Cm^{-1/2}} &\le \sup_{|x|\ge r_m} |xV(x)| +  \frac{C_V m^{1/2}}{R_m} \left( \int_{|y|\ge R_m/2} \frac{|u_m(y)|^2}{|y|} \d y \right)^{-1} \\
 & \le \sup_{|x|\ge r_m} |xV(x)|  + C_V m^{-1/2},
 \end{align*}
 which is equivalent to \eqref{eq:am<=Z}. Moreover, using
\bq \label{eq:V-eta-Rm2}
\int_{|y|\ge R_m/2} \frac{|u_m(y)|^2}{|y|} \d y \le \frac{2}{R_m} \int_{\R^3}  |u_m(y)|^2 \d y = \frac{2m}{R_m}   
\eq
we also deduce \eqref{eq:local-rm-am} immediately from \eqref{eq:a<Z-1}. 
\end{proof}

The previous estimates provide sufficient tools to prove the nonexistence for short-range potentials, which will be done in the next section. We end this section by extracting a consequence of Lemma  \ref{lem:rm-am}, which will be useful when we deal with Coulomb-type potentials.  

\begin{lemma} \label{lem:Vuu<<m} When $m\ge C_V^3$, we have 
\begin{equation} \label{eq:Vuu<<m}
 \int_{\R^3} \abs{V(x)} |u_m(x)|^2 \d x \le C_{V}^3. 
\end{equation}

\end{lemma}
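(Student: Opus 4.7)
The plan is to split the integral using the standard localization $1=\chi_{r_m}^2+\eta_{r_m}^2$ at the radius $r_m$ from Lemma~\ref{lem:rm-am}, and to handle the two resulting pieces with rather different tools.

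For the outer piece, observe that for $m\ge C_V^3$ one has $r_m\ge R_m m^{-1/2}\ge m^{1/2}/C$ by Lemmas~\ref{lem:Rm>m} and~\ref{lem:rm-am}, so the hypothesis $\limsup_{|x|\to\infty}|xV(x)|<\infty$ yields $|V(x)|\le C_V/|x|$ on $\{|x|\ge r_m\}$. Splitting the resulting integral at $R_m/2$ and using Lemma~\ref{lem:>2Rm} (which gives $\int_{|x|\le R_m/2}|u_m|^2\d x\le C_V$ and $\int_{|x|\ge 2R_m+2}|u_m|^2\d x\le C_V$) together with $m/R_m\le C$ from Lemma~\ref{lem:Rm>m} for the bulk region, the outer contribution $\int|V|\,|\eta_{r_m}u_m|^2\d x$ is bounded by $C_V^2$.

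For the inner piece I would apply Lemma~\ref{lem:basic-energy} to the localized state $\chi_{r_m}u_m$, whose mass $a_m$ is at most $C_V$ by Lemma~\ref{lem:rm-am}, obtaining
\[
\int_{\R^3}|V|\,|\chi_{r_m}u_m|^2\d x \le \cE_V(\chi_{r_m}u_m)+C_1 a_m \le \cE_V(\chi_{r_m}u_m)+C_V^3,
\]
using the bound $C_1\le C_V^2$ from Lemma~\ref{lem:basic-energy}. The task thus reduces to bounding $\cE_V(\chi_{r_m}u_m)$ from above by $C_V^3$. For this, I would rearrange the localization estimate of Lemma~\ref{lem:basic-loc} applied with $\chi=\chi_{r_m}$ and drop the nonpositive term $-2D(|\chi_{r_m}u_m|^2,|\eta_{r_m}u_m|^2)$ to obtain
\[
\cE_V(\chi_{r_m}u_m)\le \bigl[\cE_V(u_m)-\cE_0(\eta_{r_m}u_m)\bigr] + \int|V|\,|\eta_{r_m}u_m|^2\d x + C_2\int_\Omega|u_m|^2\d x.
\]

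The decisive step, and the main obstacle, is bounding the bracket $\cE_V(u_m)-\cE_0(\eta_{r_m}u_m)$: naive estimates give only $O(m)$, since both $\cE_V(u_m)$ and $\cE_0(\eta_{r_m}u_m)$ are typically of linear order in $m$. The crucial cancellation comes from the binding inequality of Lemma~\ref{lem:binding}: writing $\cE_V(u_m)=I_V(m)\le I_V(a_m)+I_0(m-a_m)$ and using $\cE_0(\eta_{r_m}u_m)\ge I_0(m-a_m)$, the linear-in-$m$ contributions $I_0(m-a_m)$ cancel, yielding $\cE_V(u_m)-\cE_0(\eta_{r_m}u_m)\le I_V(a_m)\le 0$. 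The remaining two terms on the right are already controlled: the outer integral $\int|V|\,|\eta_{r_m}u_m|^2\d x\le C_V^2$ is the outer estimate from the first step, and the annular bound $C_2\int_\Omega|u_m|^2\d x\le C_V$ follows from the choice of $r_m$ in Lemma~\ref{lem:rm-am}. Hence $\cE_V(\chi_{r_m}u_m)\le C_V^2$, and assembling everything gives $\int|V|\,|u_m|^2\d x\le C_V^3$.
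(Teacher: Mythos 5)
Your argument is correct, and on the outer region it is identical to the paper's: both bound $\int |V|\,|\eta_{r_m}u_m|^2$ by $C_V\int_{|x|\ge r_m}|x|^{-1}|u_m(x)|^2\,\d x\le C_V\bigl(C_V r_m^{-1}+2mR_m^{-1}\bigr)\le C_V^2$, which is exactly \eqref{eq:am-V} combined with \eqref{eq:V-eta-Rm2}. On the inner region the two arguments genuinely diverge. The paper's proof invokes only the a priori \emph{lower} bound $\cE_V(\chi_{r_m}u_m)\ge -C_1 a_m\ge -C_V^3$ from Lemma~\ref{lem:basic-energy} together with \eqref{eq:complete_square}; but since $\int|V|\,|\chi_{r_m}u_m|^2$ enters $\cE_V(\chi_{r_m}u_m)$ with a minus sign, extracting an upper bound on it from \eqref{eq:kinetic-estimate} requires an \emph{upper} bound on $\cE_V(\chi_{r_m}u_m)$ (equivalently, control from above of the gradient and self-interaction terms of the localized piece), which the two inequalities displayed in the paper do not by themselves provide. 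You supply precisely this missing upper bound: rearranging \eqref{eq:loc-0}, discarding the nonnegative cross Coulomb term, and cancelling the $O(m)$ quantities via the binding inequality $I_V(m)\le I_V(a_m)+I_0(m-a_m)$ of \eqref{eq:non-strict-bind-ineq} together with the variational bound $\cE_0(\eta_{r_m}u_m)\ge I_0(m-a_m)$ gives $\cE_V(\chi_{r_m}u_m)\le I_V(a_m)+C_V^2\le C_V^2$, after which \eqref{eq:kinetic-estimate} applied to $\chi_{r_m}u_m$ yields the inner estimate. So your binding-inequality step is not a detour but the substantive ingredient that closes the argument; the cost is that you use the minimizing property of $u_m$ a second time (beyond its use in Lemma~\ref{lem:rm-am}), whereas the paper's shorter route would only need the mass bound $a_m\le C_V$ if its displayed inequality could be justified. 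All the auxiliary facts you quote ($r_m\ge m^{1/2}/C$, $m/R_m\le C$, $C_2\int_{r_m\le|x|\le r_m+1}|u_m|^2\le C_V$) are indeed available from Lemmas~\ref{lem:Rm>m} and~\ref{lem:rm-am}, and the exponent bookkeeping in $C_V$ is consistent with the paper's conventions.
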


\begin{proof} Let $r_m$ be as in Lemma \ref{lem:rm-am}. Then from \eqref{eq:am<=Z}, we have 
\[ a_m:=\int_{\R^3} |\chi_{r_m}(x)u_m(x)|^2 \d x \le C_V. \]
Thus, by Lemma~\ref{lem:basic-energy},
\[
\cE_V(\chi_{r_m} u_m) \ge - C_V^2 a_m \ge - C_V^3.
\]
Combining with the lower bound on the Thomas-Fermi and Dirac terms \eqref{eq:complete_square}, we get
$$
\int_{\R^3} |V(x)| |\chi_{r_m}(x)u_m(x)|^2 \d x \le - \cE_V(\chi_{r_m} u_m) + \frac{c_{\rm D}^2}{4 c_{\rm TF}} a_m \le C_V^3 + C_V.
$$
On the other hand, from \eqref{eq:am-V} and \eqref{eq:V-eta-Rm2} we get
$$
\int_{\R^3} |V(x)| |\eta_{r_m}(x) u_m(x)|^2 \d x \le C_V \left( \frac{C_V}{r_m} + \frac{2m}{R_m} \right) \le C_V^2. 
$$
Summing the last two estimates, we find the desired bound.
\end{proof}

\section{Short-range potentials: Proof of Theorem \ref{thm:short-range}} \label{sec:short-range} 

In this section, we demonstrate our general strategy in the simpler case of short-range potentials.  

\begin{proof}[Proof of Theorem \ref{thm:short-range}] The case $V\equiv 0$ has been settled in Lemma \ref{lem:nonexistence-I0}. Now let us consider the case $V\le 0$ and $V\not\equiv 0$. We will assume that $I_V(m_k)$ has a minimizer $u_{m_k}$ with $m_k\to \infty$ and then derive a contradiction. In order to keep  notations simple, we will write $m$ and $u_m$ instead of $m_k$ and $u_{m_k}$. 

Choose $r_m$ as in Lemma \ref{lem:rm-am} and denote
$$ a_m:= \int_{\R^3} |\chi_{r_m}(x)u_m(x)|^2 \d x.$$
Under the short-range assumption $|xV(x)|\to 0$ as $|x|\to \infty$, using $R_m\ge m/C$ in Lemma \ref{lem:Rm>m}, we deduce from \eqref{eq:am<=Z} and  \eqref{eq:local-rm-am}
that $\lim_{m\to \infty} a_m= 0$ and
\begin{align*}
&\liminf_{m\to \infty} \left( I_V(m)-I_V(a_m)-I_0(m-a_m) \right) \\
\ge & \liminf_{m\to \infty} \Big( \cE_V(u_m) - \cE_V(\chi_{r_m} u_m) - \cE_0 (\eta_{r_m} u_m ) \Big)  \ge 0.
\end{align*}
Since $I_V(a_m)\ge -C_V a_m \to 0$ and $I_0(m-a_m)\ge I_0(m) \ge I_V(m)$, we find that
\bq \label{eq:contra-1}
\lim_{m\to \infty} \left( I_V(m)-I_0(m) \right) = 0.
\eq

On the other hand, by Lemma \ref{lem:nonexistence-I0}, there exists a constant $M_0>0$ such that $I_0(m)$ has no minimizers for all $m\ge M_0$. Define
\bq \label{eq:contra-3}
\delta_V:=\inf_{a\in [M_0/2,M_0]} (I_0(a)-I_V(a)).
\eq 
Let us show that $\delta_V>0$. Since the function $a\mapsto I_0(a) - I_V(a)$ is continuous by Lemma~\ref{lem:basic-energy}, it suffices to show that $I_0(a)-I_V(a)>0$ when $a>0$. Indeed, by Lemma \ref{lem:mass-decomp-I0} (ii) and the binding inequality \eqref{eq:non-strict-bind-ineq}, we can write
$$
I_0(a)= I_0(b) + I_0(a-b)
$$
for some $0<b\le a$ such that $I_0(b)$ has a minimizer $v$. Since $\cE_0(u)$ is translation-invariant, $v(\dotv+y)$ is also a minimizer for $I_0(b)$ for all $y\in \R^3$. Since $V\le 0$ and $V \not\equiv 0$, there exists $y\in \R^3$ such that 
$$
I_V(b)-I_0(b) \le \cE_V(v(\dotv +y)) - \cE_0(v(.+y)) = \int_{\R^3} V(x) |v(x+y)|^2 \d x <0.
$$  
By the binding inequality \eqref{eq:non-strict-bind-ineq} again,
$$
I_V(a)\le I_V(b)+I_0(a-b) < I_0(b)+ I_0(a-b)=I_0(a).
$$
Thus $I_0(a)-I_V(a)>0$ for all $a>0$, and hence $\delta_V>0$.

Finally, we show that when $m\ge M_0$, then $I_0(m)-I_V(m)\ge \delta_V$. Indeed, by Lemma \ref{lem:mass-decomp-I0} (ii) and the binding inequality \eqref{eq:non-strict-bind-ineq}, we can write
$$
I_0(m)= I_0(m') + I_0(m-m')
$$
for some $m'\in [M_0/2,M_0]$. Using the binding inequality \eqref{eq:non-strict-bind-ineq} and the definition of $\delta_V$ in \eqref{eq:contra-3}, we find that
$$
I_0(m)-I_V(m) =  I_0(m') + I_0(m-m') - I_V(m) \ge I_0(m')-I_V(m') \ge \delta_V.
$$ 
This is a contradiction because $I_0(m)-I_V(m)\to 0$ as $m\to \infty$ by \eqref{eq:contra-1} and $\delta_V>0$ is independent of $m$.
\end{proof}

For the Coulomb potential in \eqref{eq:def-V}, $|x|V(x)$ does not vanish at infinity and the simple bound $R_m\ge m/C$ in Lemma \ref{lem:Rm>m} is not enough to control the error term in \eqref{eq:local-rm-am}. We need the stronger estimate $R_m \gg m$, which will be derived in the next section. 

\section{Improved radius estimate} \label{sec:improved}

We will use again the notations of Section \ref{sec:radius}. The following is the key estimate to treat Coulomb potentials. 

\begin{lemma}[Improved radius estimate] \label{lem:Rm>m3/2} When $m\ge C_V^3$ we have $ R_m \ge m^{2}/C_V^3$.
\end{lemma}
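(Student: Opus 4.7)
The plan is to strengthen the argument of Lemma~\ref{lem:Rm>m} by exploiting two improvements over what was available there: the newly-proved uniform bound $\int_{\R^3} |V| |u_m|^2 \,\d x \leq C_V^3$ from Lemma~\ref{lem:Vuu<<m}, and a pigeonhole/averaging selection of the cutoff radius that controls the annular error term. Compared with the bound $R_m\geq m/C$, the $C m$ contribution that was coming from $C_3 \int_{R_m\leq|x|\leq R_m+1} |u_m|^2$ and the $C_V m / R_m$ contribution coming from $\int V |\eta_{R_m} u_m|^2$ both need to be replaced by quantities that stay of order $C_V^3$ or $m/R_m$.

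Concretely, the first step is to fix $R\geq 1$ and apply Lemma~\ref{lem:basic-loc} with $\chi=\chi_R$, $\eta=\eta_R$, using $|x-y|\leq 3|y|$ for $|x|\leq R+1\leq |y|$ (exactly as in \eqref{eq:annulus-0}) to obtain
\[
\frac{1}{3}\int_{\R^3}|\chi_R u_m|^2\, \d x \int_{\R^3}\frac{|\eta_R u_m|^2}{|y|}\, \d y \;\leq\; \int_{\R^3}|V| |u_m|^2\, \d x + C_3 \Phi(R),
\]
where $\Phi(R):=\int_{R\leq |x|\leq R+1}|u_m|^2$. Crucially, by Lemma~\ref{lem:Vuu<<m}, the first term on the right is bounded by $C_V^3$, independently of $R$. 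The second step is to average over $R\in[R_m,2R_m]$: since $\int_{R_m}^{2R_m}\Phi(R)\,\d R\leq m$, there exists $R^*\in[R_m,2R_m]$ with $\Phi(R^*)\leq m/R_m$. At this $R^*$, since $R^*\geq R_m$ we have $\int|\chi_{R^*}u_m|^2\geq\int|\chi_{R_m}u_m|^2=m/2$, and by Lemma~\ref{lem:>2Rm} the mass of $\eta_{R^*}u_m$ is essentially concentrated in $[R^*,2R_m+2]$, so $\int|\eta_{R^*}u_m|^2/|y|\,\d y\geq m/(CR_m)$ provided $\int|\eta_{R^*}u_m|^2$ remains of order $m$. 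Substituting into the displayed inequality yields
\[
\frac{m^2}{C R_m} \;\leq\; C_V^3 + \frac{C m}{R_m},
\]
and using $m\geq C_V^3$ (together with the baseline $R_m\geq m/C$ from Lemma~\ref{lem:Rm>m}, which makes $m/R_m$ bounded and allows the $Cm/R_m$ term to be absorbed into the left-hand side after rearrangement), one concludes $R_m\geq m^2/(C C_V^3)$.

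The main obstacle I anticipate is ensuring the lower bound $\int|\eta_{R^*}u_m|^2/|y|\,\d y\gtrsim m/R_m$ for the pigeonhole-chosen $R^*$. Since $\int|\eta_{R^*}u_m|^2=m-\int|\chi_{R^*}u_m|^2$ could in principle be small if $|u_m|^2$ concentrated heavily in a narrow shell just past $R_m$, one must restrict the pigeonhole interval to $\{R\in[R_m,2R_m]:f(R)\leq 3m/4\}$, where $f(R):=\int_{|x|\leq R}|u_m|^2$, and show that this restricted interval still has length comparable to $R_m$ (so the pigeonhole gives the same $\Phi(R^*)\leq Cm/R_m$). For this final lower bound on the length, I expect to use Lemma~\ref{lem:annulus-est} applied at $R=R^{**}:=\sup\{R:f(R)\leq 3m/4\}$: this forces mass $\sim m$ to be spread over $[R^{**},2R^{**}]$, which combined with the baseline $R_m\geq m/C$ should rule out the pathological concentrated case and close the argument.
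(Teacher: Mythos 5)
There is a genuine gap, and it sits exactly where you anticipated trouble: the claim that the restricted pigeonhole set $\{R\in[R_m,2R_m]:f(R)\le 3m/4\}$ has length comparable to $R_m$ is false in the very regime one is trying to exclude. Step 1 of the paper's proof shows (unconditionally, whenever a minimizer exists and $m\ge C_V^3$) that all but $\eps m$ of the mass lies in an annulus $R_m-K_m\le|x|\le R_m+K_m$ of width $K_m\le CR_m/(\eps m)$; in particular $f(R_m+K^+)\ge (1-\eps/2)m>3m/4$, so your $R^{**}$ satisfies $R^{**}-R_m\le CR_m/m+1$. Consequently a mass of order $m$ (at least $(1/2-\eps/2)m$, since $f(R_m)\le m/2$) is squeezed into a shell of width $\lesssim R_m/m$, and the pigeonhole over that shell can only produce $\Phi(R^*)\le m/K^+\sim m^2/(CR_m)$ --- the same order as the main term $\tfrac13\cdot\tfrac m2\cdot\tfrac{m}{CR_m}$, with constants you do not control (a uniform distribution of the mass over the thin shell saturates this bound for every admissible $R^*$). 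Your fallback, applying Lemma~\ref{lem:annulus-est} at $R=R^{**}$, only controls the mass beyond $2R^{**}$ and says nothing about the width $R^{**}-R_m$ of the shell carrying the quarter of the mass between $f=m/2$ and $f=3m/4$, so it cannot rule out this concentration. In short, no purely radial cutoff can close the argument: every sphere near $R_m$ necessarily cuts through mass of linear density $\sim m^2/R_m$, so the localization error is never negligible against the main term.

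The paper escapes this by turning the annular concentration from an obstacle into the main tool: after Step 1 it localizes further in two orthogonal tangential directions (slabs of widths $L_m,M_m\lesssim R_m/m$), concluding that at least $m/4$ of the mass sits in a set $\Omega_{{\bf n},{\bf v}}$ consisting of two pieces each of diameter $\le CR_m/m$. The Coulomb self-repulsion of such a concentrated cloud gives $D(|u_m|^2,|u_m|^2)\ge m^3/(CR_m)$, which against the a priori bound $D(|u_m|^2,|u_m|^2)\le C_V^2m$ from \eqref{eq:kinetic-estimate} yields $R_m\ge m^2/C_V^3$. Your first ingredient --- replacing the potential term by the uniform bound $C_V^3$ from Lemma~\ref{lem:Vuu<<m} --- is correct and is indeed used in the paper's Step 2, but it is the angular localization and the self-energy lower bound, not a better choice of radial cutoff, that carry the improvement.
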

 
 \begin{proof} We will use successive localization estimates. In the following, $\eps>0$ can be taken as any small, fixed constant (we will eventually choose $\eps=1/4$).

  \proofstep (Improved localization in annulus)
  We show that
  $$ \int_{\Omega} |u_m(x)|^2 \d x \ge  (1-\eps) m $$
where
$$
\Omega=\{x \in \R^3 | R_m - K_m \le |x| \le R_m+ K_m\} \quad \text{for some \,\,} K_m \le \frac{C R_m}{\eps m}.
$$
Let us denote
$$ K^-= \min \left\{K \in \mathbb{N}: \int_{|x| \le R_m-K} |u_m(x)|^2 \d x \le  \frac{\eps}{2} m \right\}$$
and
$$ K^+ = \min \left\{K \in \mathbb{N}: \int_{|x| \ge R_m+K} |u_m(x)|^2 \d x \le \frac{\eps}{2} m \right\}.$$
Then from Lemma~\ref{lem:>2Rm}, we know that, as soon as $\eps m > C_V$, 
$$K^- \le R_m/2, \quad K^+ \le R_m+2.$$
Applying the basic localization estimate \eqref{eq:annulus-0} with $R = R_m +j$ and summing over $j= -K^-, \dots, K^+$ leads to
\begin{multline} \label{eq:sumRm-K}
\sum_{j = -K^-}^{ K^+} \frac{1}{3} \left( \int_{\R^3} |\chi_{R_m+j}(x)u_m(x)|^2 \d x \right) \left( \int_{\R^3} \frac{|\eta_{R_m+j} (y)u_m(y)|^2}{|y|}   \d y \right)   \\
\le - \sum_{j = -K^-}^{ K^+} \int_{\R^3} V(x) |\eta_{R_m+j}(x) u_m(x)|^2 \d x + C m .
\end{multline}
Since $R_m+j \geq R_m/2$ in the range we are considering, we have, similarly to \eqref{eq:Rm>=m-3}, 
\bq \label{eq:sumRm-K-a}
- \int_{\R^3} V(x) |\eta_{R_{m+j}}(x) u_m(x)|^2 \d x  \le \frac{ C_V m}{R_m} 
\eq
Moreover, when $-K^+ +1 \le j \le -1$, we have
$$
\int_{\R^3} |\chi_{R_m+j}(x)u_m(x)|^2 \d x \ge \int_{|x|\le R-K^- +1} |u_m(x)|^2 \d x \ge \frac{\eps}{2} m, 
$$
by the definition of $K^-$, and 
$$
\int_{\R^3} \frac{|\eta_{R_m+j} (y)u_m(y)|^2}{|y|}   \d y \ge \int_{\R^3} \frac{|\eta_{R_m} (y)u_m(y)|^2}{|y|} \d y \ge \frac{m- C_V}{4 R_m+4},
$$
by \eqref{eq:Rm>=m-2}. Therefore, when $m$ and $R_m$ are large,
\bq \label{eq:sumRm-K-b}
\left( \int_{\R^3} |\chi_{R_m+j}(x)u_m(x)|^2 \d x \right) \left( \int_{\R^3} \frac{|\eta_{R_m+j} (y)u_m(y)|^2}{|y|}   \d y \right) 
\ge \frac{\eps m^2}{C R_m}.
\eq
when $-K^- +1 \le j \le -1$. By the same argument, we can prove that \eqref{eq:sumRm-K-b} also holds when  $0 \le j \le K^+ -1$. Substituting \eqref{eq:sumRm-K-a} and \eqref{eq:sumRm-K-b} into \eqref{eq:sumRm-K}, we find that
$$ (K^+ + K^- -1) \frac{\eps m^2}{CR_m} \le (K^+ + K^-+1) \frac{C_V m}{R_m} +  C m.$$
We obtain
$$
K^+ + K^- - 1 \le \frac{C R_m }{\eps m} . 
$$
Thus we can choose $K_m = \max\{ K^-, K^+\}$.

%Define $r_V \equiv \max_i \abs{\mathbf r_i} $.
%For $\abs{x} > r_V$, we can bound $V(x) \le Z/(\abs{x}- r_V)$, so
%\begin{align*}
% \int\limits_{\R^3} |V(x)||u_m(x)|^2 \d x &=  \left( \int_{|x|\le C} + \int_{C\le |x| \le R_m/2} + \int_{|x| \ge R_m/2} \right) |V(x)||u_m(x)|^2 \d x \\ 
%    &\leq \int\limits_{\abs{x}\le  r_V+ 1} \abs{V(x) } u_m^2(x)\d x
%	  + Z \int\limits_{r_V+1 \le \abs x \le R_m/2} u_m^2(x) \d x  \\
%	 & \qquad \qquad \quad  + \frac{Z}{R_m/2 -r_V} \int\limits_{\abs{x} \ge R_m/2} u_m^2(x) \d x \\
%	&  \leq  \int_{\abs{x}\le  r_V+1}f^2(x)\abs{V(x)} \d x  + Z C_4 + \frac{2 Z m}{R_m-2 r_V},
%\end{align*}
%where we have used the point-wise bound from Lemma~\ref{lem:pointwise-bound} 
%in the first term and Lemma~\ref{lem:>2Rm} in the second.
%The first integral is finite, since the integrand behaves as $\abs{x-{\bf r}_i}^{-5/2}$ close to each nucleus.
%It is independent of $m$.
%By Lemma~\ref{lem:Rm>m}, the last term can be bounded by a constant independent of $m$ too. 
%

\medskip

\proofstep (Further localization in a slab)
We now show that either $R_m\ge m^2/C_V^3$, or there exists a unit vector ${\bf n} \in \sS^2$ such that
\[
\int_{\Omega_{\bf n}} \abs{u_m(x)}^2 \d x \ge (1- 2 \eps)m, 
\]
where 
\[
\Omega_{\bf n} = \{x \in \Omega |-L_m \leq {\bf n} \cdot x \leq L_m \} \quad \text{with some \, \,} L_m \le \frac{ R_m}{\eps (1- \eps) m }.
\]

Let us fix ${\bf n} \in \sS^2$ such that
\[
\int_{\left\{\substack{x \in \Omega \\ {\bf n} \cdot x \ge 0} \right\}} \abs{u_m(x)}^2 \d x 
             = \int_{\left\{\substack{x \in \Omega \\ {\bf n} \cdot x \le 0}\right\}} \abs{u_m(x)}^2 \d x.
\]
This is always possible, since the function
\[
{\bf n} \mapsto \int_{\left\{\substack{x \in \Omega \\ {\bf n} \cdot x \ge 0} \right\}} \abs{u_m(x)}^2 \d x 
       - \int_{\left\{\substack{x \in \Omega \\ {\bf n} \cdot x \le 0}\right\}}\abs{u_m(x)}^2 \d x
\]
is continuous on $\sS^2$ and it integrates to zero.
Now we will use the basic localization estimate \eqref{eq:loc} applied to {\blue half-spaces} perpendicular to $\bf n$.
Take the partition of unity $f^2+g^2 =1$ as in \eqref{eq:def-fg} and set 
$$\varphi^-_\ell (x)= f({\bf n} \cdot x -\ell), \quad \varphi^+_\ell (x) = g ({\bf n} \cdot x -\ell).$$
Also define
$$ L^-= \min \left\{L \in \mathbb{N}: \int_{\Omega} |\varphi^-_{-L}(x) u_m(x)|^2 \d x \le  \frac{\eps}{2} m \right\}$$
and
$$ L^+ = \min \left\{L \in \mathbb{N}: \int_{\Omega} |\varphi^+_L(x) u_m(x)|^2 \d x \le \frac{\eps}{2} m \right\}.$$

We now apply the localization estimate \eqref{eq:loc} with $\chi = \varphi^-_\ell$, $\eta = \varphi^+_\ell$ and sum over $\ell = -L^-, \dots , L^+$. The potential energy can be controlled by Lemma~\ref{lem:Vuu<<m}. This gives
\bq \label{eq:ell-L-L}
\sum_{\ell= L_-}^{L^+}2  D(| \varphi^-_\ell  u_m|^2, |\varphi^+_\ell u_m|^2)   \le 
(L^-+L^+ +1)C_{V}^3  + C m.
\eq
In order to bound the right-hand side, observe that
\begin{align*}
2 D(| \varphi^-_\ell  u_m|^2, & |\varphi^+_\ell u_m|^2) 
         \ge \iint_{\Omega \times \Omega} \frac{|\varphi^-_\ell (x) u_m(x)|^2 |\varphi^+_\ell (y)u_m(y)|^2}{|x-y|} \d x \d y \\
         & \ge \frac{1}{4(R_m+1)} \int_\Omega |\varphi^-_\ell (x) u_m(x)|^2 \d x \int_\Omega |\varphi^+_\ell (y)u_m(y)|^2 \d y,
\end{align*}
where in the last estimate we have used 
$$|x-y|\le |x|+|y|\le 2(R_m+K_m)\le 4(R_m+1)$$
for $x,y\in \Omega$. Now, for $-L^- + 1 \le \ell \le 0$, we have 
\[
\int_\Omega |\varphi^-_\ell(x) u_m(x)|^2 \d x \ge \frac{\eps m}{2}
\]
by the definition of $L^-$ and 
\[
\int_\Omega |\varphi^+_\ell (y)u_m(y)|^2 \d y  \ge \int_{\left\{\substack{y \in \Omega \\ n \cdot y \ge 0}\right\}} \abs{u_m(y)}^2 \d y = \frac{1}{2} \int_\Omega |u_m|^2 \ge (1-\eps) \frac{m}{2}
\]
by the choice of ${\bf n}$. Therefore, 
\[
2 D ( |\varphi^-_\ell  u_m|^2, |\varphi^+_\ell u_m|^2 )\ge \frac{\eps(1-\eps) m^2 }{16(R_m+1)}
\]
when $-L^- + 1 \le \ell \le 0$. This estimate also holds $1 \le \ell \le L^+-1$, by applying the same argument exchanging the roles of both integrals. Thus \eqref{eq:ell-L-L} reduces to 
\[
(L^-+L^+ - 1)\frac{\eps(1-\eps) m^2 }{16 (R_m+1)} \le (L^-+L^+ +1)C_{V}^3  + C m,
\]
and hence we have either $R_m\ge m^2 (C_V^3 \eps (1-\eps))^{-1}$, or 
\[
L^+ + L^- \le \frac{ R_m}{\eps (1- \eps) m }.
\]
In the latter case, we can simply choose $L_m = \max\{L^+, L^-\}$.

\medskip

\proofstep (Final localization in a perpendicular slab) We show that either $R_m\ge m^2/(C_V^3 \eps (1-2 \eps))$, or there exist a unit vector ${\bf v} \in \sS^2$ such that ${\bf v} \cdot {\bf n} = 0$ and 
\[
\int_{\Omega_{{\bf n}, {\bf v}}} \abs{u_m(x)}^2 \d x \ge (1- 3 \eps)m, 
\]
where 
\[
\Omega_{{\bf n}, {\bf v}} = \{x \in \Omega_{\bf n} |-M_m \leq {\bf v} \cdot x \leq M_m \} \quad \text{for some \,\,} M_m \le \frac{CR_m}{\eps (1- 2\eps) m}.
\]

The proof proceeds as before upon replacing $\Omega$ by $\Omega_{\bf n}$. For the first step, note that we may find ${\bf v}\in \sS^2$ such that ${\bf v} \cdot {\bf n}=0$ and
\[
\int_{\left\{ \substack{x \in \Omega_{\bf n} \\ {\bf v} \cdot x \ge 0} \right\}} \abs{u_m(x)}^2 \d x = \int_{\left\{ \substack{x \in \Omega_{\bf n} \\ {\bf v} \cdot x \le 0} \right\}} \abs{u_m(x)}^2 \d x
\]
by following the same continuity argument on the connected set 
$$\{{\bf v} \in \sS^2 | {\bf v} \cdot {\bf n} = 0\} \cong \sS^1.$$

\proofstep(Conclusion) If $R_m\ge m^2/C_V^3$, then we are done. Otherwise, by choosing $\eps = 1/4$, we conclude from the previous steps that 
\bq \label{eq:Omega-m4}
\int_{\Omega_{{\bf n}, {\bf v}}} |u_m(x)|^2 \d x \ge  \frac{m}{4}
\eq
for some ${\bf n}, {\bf v}\in \sS^2$ with ${\bf n} \cdot {\bf v} =0$ and 
$$\Omega_{{\bf n}, {\bf v}}=\{x\in \R^3| R_m -K_m \le |x| \le R_m + K_m, |{\bf n} \cdot x| \le L_m, |{\bf v} \cdot x| \le M_m\},$$
as shown in figure~\ref{fig:1}, with
\bq \label{eq:KLM}
\max\{K_m, L_m, M_m\} \le \frac{CR_m}{m}.
\eq

In order to describe $\Omega_{{\bf n}, {\bf v}}$ more easily, we can choose ${\bf w}\in \sS^2$ such that $({\bf n}, {\bf v}, {\bf w})$ forms an orthonormal basis for the Euclidean space $\R^3$. Using coordinates $(x_1,x_2,x_3) \cong x_1 {\bf n}+ x_2 {\bf v} + x_3 {\bf w}$ in this basis, we can write
\begin{align*}
\Omega_{{\bf n}, {\bf v}} = \Big\{ (x_1,x_2,x_3) \in \R^3 \,\Big|\,  R_m -K_m \le \sqrt{x_1^2+x_2^2+x_3^2} \le R_m + K_m, \\
 |x_1| \le L_m,\,\, |x_2| \le M_m \Big\}.
\end{align*}

%%%%%%%%%%%%%%
   \begin{figure}[]
\begin{tikzpicture}[
eje/.style = {-latex, thin },
measureline/.style = {latex-latex},
front/.style = { thick,dark-gray},
hind/.style = { thick, light-gray}
]

%balls
\shade[ball color=gray!10!white,opacity=0.20] (0,0) circle (4cm);
\shade[ball color=white!10!white,opacity=0.25] (0,0) circle (3cm);

%contour lines
\begin{scope}[cm={0.91,0,0,1,(0,0)}] % ad hoc shrinking in horizontal direction 
 
% horizontal lines bottom low 
\draw[hind] (0.4+2.71,0.3-0.8)--(0.4+3.79,0.3-0.8);
\draw[hind] (0.4-2.71,0.3-0.8)--(0.4-3.79,0.3-0.8);

\begin{scope}
 \clip (-4.2,-0.2-0.3) -- (-4.2,1.4-0.3) -- (5,1.4-0.3) -- (5,-0.2-0.3) -- cycle; % lines on rear plane
\draw[hind] (0.4,0.3) circle (2.82cm);
\draw[hind] (0.4,0.3) circle (3.87cm);
\end{scope}

\begin{scope}
 \clip (-5,-0.8-0.3) -- (-5,0.8-0.3) -- (4.2,0.8-0.3) -- (4.2,-0.8-0.3) -- cycle; % lines on front plane
\draw[front] (-0.4,-0.3) circle (2.82cm);
\draw[front] (-0.4,-0.3) circle (3.87cm);
\end{scope}

\begin{scope}
  \clip (-5,-0.8-0.3) -- (-4.2,-0.2-0.3) -- (5,-0.2-0.3) -- (4.2,-0.8-0.3) -- cycle; % lines on lower plane
  \begin{scope}[cm={1,0,0.4,0.3,(0,-0.5-0.3)}]
%   \draw (0,0)--(0,1);
%   \draw (0,0)--(1,0);
  \draw[hind] (0,0) circle (2.89cm ); 
    \draw[hind] (0,0) circle (3.92cm ); 
  \end{scope}
\end{scope}

\begin{scope}
   \clip (-5, 0.8-0.3) -- (-4.2,1.4-0.3) -- (5,1.4-0.3) -- (4.2,0.8-0.3) -- cycle; % lines on upper plane
  \begin{scope}[cm={1,0,0.4,0.3,(0,1.1-0.3)}]
%   \draw (0,0)--(0,1);
%   \draw (0,0)--(1,0);
  \draw[front] (0,0) circle (2.89cm ); 
    \draw[front] (0,0) circle (3.92cm ); 
  \end{scope}
\end{scope}

% remaining horizontal lines
\draw[front] (-0.4+2.71,-0.3+0.8)--(-0.4+3.79,-0.3+0.8);
\draw[front] (-0.4+2.71,-0.3-0.8)--(-0.4+3.79,-0.3-0.8);
\draw[front] (-0.4-2.71,-0.3+0.8)--(-0.4-3.79,-0.3+0.8);
\draw[front] (-0.4-2.71,-0.3-0.8)--(-0.4-3.79,-0.3-0.8);

\draw[front] (0.4+2.71,0.3+0.8)--(0.4+3.79,0.3+0.8);

\draw[front] (0.4-2.71,0.3+0.8)--(0.4-3.79,0.3+0.8);

\end{scope}

%slabs
\shade[fill=blue,opacity=0.15] (-5,-0.8-0.3) -- (-5,0.8-0.3) -- (4.2,0.8-0.3) -- (4.2,-0.8-0.3) -- cycle; %front plane
\shade[fill=blue,opacity=0.15] (-4.2,-0.2-0.3) -- (-4.2,1.4-0.3) -- (5,1.4-0.3) -- (5,-0.2-0.3) -- cycle; % rear plane

\shade[fill=blue,opacity=0.10] (-5,0.8-0.3) -- (-4.2,1.4-0.3) -- (5,1.4-0.3) -- (4.2,0.8-0.3) -- cycle;   
\shade[fill=blue,opacity=0.10] (-5,-0.8-0.3) -- (-4.2,-0.2-0.3) -- (5,-0.2-0.3) -- (4.2,-0.8-0.3) -- cycle;  

%Origin 0
\filldraw (0,0) circle (1pt);
\draw (0,0) node [below] {\footnotesize $0$};

%3 coordinates 
\draw[eje] (0,0) -- (2,0); \draw (0.65,0) node [below] {\footnotesize $\bf w$};
\draw[eje] (0,0)--(0,2); \draw (0,0.35) node [left] {\footnotesize $\bf n$};
\draw[eje] (0,0)--(0.8,0.6); \draw (0.8,0.6) node [above] {\footnotesize $\bf v$};

%R_m+K_m, K_m, L_m, R_m
\draw[-latex](0,0)-- (1,3.87) node [right] {\footnotesize $R_m+K_m$};
\draw[measureline] (0,3)--(0,4); \draw (0,3.5) node [left] {\footnotesize $2K_m$};
\draw[measureline] (-5,-0.8-0.3) -- (-5,0.8-0.3); \draw (-5,-0.3) node [left] {\footnotesize $2L_m$};
\draw[measureline] (-5,0.8-0.3) -- (-4.2,1.4-0.3); \draw (-5,1.1-0.3) node [above] {\footnotesize $2M_m$};

\draw (-3.3,0) node {\footnotesize{$\Omega_{{\bf n}, {\bf v}}^-$}};
\draw (3.5,0) node {\footnotesize{$\Omega_{{\bf n}, {\bf v}}^+$}};
\end{tikzpicture}
\caption{An illustration of $\Omega_{{\bf n}, {\bf v}}=\Omega_{{\bf n}, {\bf v}}^-\cup \Omega_{{\bf n}, {\bf v}}^+$} \label{fig:1}
\end{figure}

Note that if $(x_1,x_2,x_3)\in \Omega_{{\bf n}, {\bf v}}$, then 
$$
\sqrt{(R_m - K_m)^2 - L_m^2 - M_m^2} \le |x_3| \le \sqrt{(R_m + K_m)^2+ L_m^2 + M_m^2}.
$$ 
Therefore, the diameters of each of the sets 
$$
\Omega_{{\bf n}, {\bf v}}^+ = \{ (x_1,x_2,x_3) \in \Omega_{{\bf n}, {\bf v}} | x_3 \ge 0\}, \quad \Omega_{{\bf n}, {\bf v}}^- = \{ (x_1,x_2,x_3) \in \Omega_{{\bf n}, {\bf v}} | x_3< 0\}
$$
are not larger than
\begin{align*}
&2L_m + 2M_m + \sqrt{(R_m + K_m)^2+ L_m^2 + M_m^2} - \sqrt{(R_m - K_m)^2 - L_m^2 - M_m^2} \\
&  \quad \le  2L_m + 2M_m +  \frac{4R_mK_m + 2K_m^2+2L_m^2+2M_m^2}{R_m} \le \frac{CR_m}{m}.
\end{align*}
Here we have used \eqref{eq:KLM} and the elementary inequality
$$\sqrt{a}-\sqrt{b}=\frac{a-b}{\sqrt{a}+\sqrt{b}} \le \frac{a-b}{ \sqrt{a}}, \quad \forall a\ge b>0.$$
Therefore, 
 $$
 \iint_{\Omega_{{\bf n}, {\bf v}}^+  \times \Omega_{{\bf n}, {\bf v}}^+ }  \frac{|u_m(x)|^2 |u_m(y)|^2}{|x-y|} \d x \d y \ge \frac{m}{CR_m}  \left(\int_{\Omega_{{\bf n}, {\bf v}}^+} |u_m|^2 \right)^2 
 $$
and the same holds with $\Omega_{{\bf n}, {\bf v}}^+$ replaced by $\Omega_{{\bf n}, {\bf v}}^-$. Summing these estimates, then using $a^2+b^2\ge (a+b)^2/2$ for $a,b\ge 0$ and \eqref{eq:Omega-m4}, we find that 
\begin{align*}
D(|u_m|^2 , |u_m|^2)  & \ge \frac{m}{CR_m} \left[ \left(\int_{\Omega_{{\bf n}, {\bf v}}^+} |u_m|^2 \right)^2 + \left(\int_{\Omega_{{\bf n}, {\bf v}}^-} |u_m|^2 \right)^2 \right] \\
& \ge \frac{m}{2CR_m} \left(\int_{\Omega_{{\bf n}, {\bf v}}^+} |u_m|^2 + \int_{\Omega_{{\bf n}, {\bf v}}^-} |u_m|^2  \right)^2 \\
& = \frac{m}{2CR_m} \left(\int_{\Omega_{{\bf n}, {\bf v}}} |u_m|^2  \right)^2 \ge \frac{m^3}{32 C R_m}.
\end{align*}

On the other hand, from $\cE_V(u_m)=I_V(m)<0$ and \eqref{eq:kinetic-estimate} it follows that
\begin{align*}
D(|u_m|^2 , |u_m|^2) \le C_V^2 m.
\end{align*}
Comparing the latter two estimates, we conclude that $R_m\ge m^2/C_V^2$. Thus in all cases, we have $R_m\ge m^2/C_V^3$, which finishes the proof. 
 \end{proof}

\section{Small charges: Proof of Theorem \ref{thm:small-Z} }\label{sec:proof-small-Z} 

We are now ready to prove our main result. 
\begin{proof}[Proof of Theorem \ref{thm:small-Z}] Fix $V$ as in \eqref{eq:def-V} with
$$Z:=\sum_{j=1}^J Z_j \le Z_{\rm c},$$
where $Z_{\rm c}>0$ is a fixed constant.
Let us assume that $I_V(m)$ has a minimizer $u_m$ with $m$ sufficiently large. 
 We will use the notations from Section \ref{sec:radius}. Since we can choose $C_V \le C (Z+1) \le C(Z_{\rm c}+1)$, the $V$-dependence in the constant $C_V$ can be ignored. Recall that $C$ is a generic (large) constant independent of $Z$ and $m$.

Let $r_m$ be as in Lemma \ref{lem:rm-am}. Since $R_m\ge m^2/C$ by Lemma \ref{lem:Rm>m3/2}, we get 
$$r_m\ge \frac{R_m}{m^{1/2}} \ge \frac{m^{3/2}}{C}.$$
For $|x|\ge r_m$, by the triangle inequality we find that
\begin{align*}
\left| |xV(x)|- Z\right|
\le \sum_{j=1}^J Z_j \left| \frac{|{\bf r}_j|}{|x|-|{\bf r}_j|}  \right|   
\le \frac{CZ}{m^{3/2}} .
\end{align*}
Therefore,  from Lemma \ref{lem:rm-am}, Lemma \ref{lem:Rm>m3/2} and $Z \le Z_{\rm c}$ we deduce that
\bq \label{eq:main-proof-am}
a_m:=\int_{\R^3} |\chi_{r_m}u_m|^2 \le Z+ C (Z+1)m^{-1/2} \le Z + C m^{-1/2}
\eq
and 
\begin{align} \label{eq:main-proof-IV-IV-I0-lower}
I_V(m)-I_V(a_m)-I_0(m-a_m) &\ge  \cE_V(u_m)- \cE_V(\chi_{r_m} u_m) - \cE_0(\eta_{r_m} u_m ) \nn\\
&\ge - \frac{C}{m}.
\end{align}

Now let us find an upper bound on $I_V(m)-I_V(a_m)-I_0(m-a_m)$. Recall that by Lemma \ref{lem:nonexistence-I0}, there exist a constant $M_0>0$ such that $I_0(m)$ has no minimizers for all $m\ge M_0$. Note that $m-a_m \ge M_0$ when $m$ is sufficiently large, because $a_m$ is bounded by \eqref{eq:main-proof-am}. Therefore, by Lemma \ref{lem:mass-decomp-I0} and the binding inequality \eqref{eq:non-strict-bind-ineq}, we can decompose
\bq \label{eq:def-bm}
I_0(m-a_m)= I_0(b_m) + I_0(m-a_m-b_m)
\eq
for some $b_m\in [M_0/2,M_0]$. By \eqref{eq:non-strict-bind-ineq} again,
$$I_V(m)\le I_V(b_m) + I_0(a_m)+ I_0(m-a_m-b_m).$$
Therefore, 
\bq \label{eq:main-proof-IV-IV-I0-upper}
 I_V(m)-I_V(a_m)-I_0(m-a_m) \le I_0(a_m)-I_V(a_m) + I_V(b_m) - I_0(b_m) .
\eq
It remains to estimate $I_0(a_m)-I_V(a_m)$ and $I_0(b_m)-I_V(b_m)$ separately. 

First, we consider $I_0(a_m)-I_V(a_m)$. By the definition of $I_V(a_m)$, for every fixed $m$ and $\eps>0$, we can find $v\in H^1(\R^3)$ such that 
$$\int |v|^2=a_m\quad \text{and} \quad I_V(a_m) +\eps \ge \cE_V(v).$$
From $I_V(a_m)<0$ and Lemma \ref{lem:basic-energy}, we find that 
$$\int |\nabla v|^2 \le C(a_m+\eps). $$
Using the variational principle and Hardy's inequality, we can estimate 
\begin{align*}
I_0(a_m)- I_V(a_m)-\eps &\le \cE_0(v) - \cE_V(v)  =  - \int V |v|^2 
= \sum_{j=1}^J Z_j \int \frac{|v(x)|^2}{|x-{\bf r}_j|} \d x  \\
&\le  \sum_{j=1}^J Z_j  \left( \int |v(x)|^2 \d x \right)^{1/2} \left( \int \frac{|v(x)|^2}{|x-{\bf r}_j|^2} \d x \right)^{1/2} \\
& \le  \sum_{j=1}^J Z_j  \left( \int |v(x)|^2 \d x \right)^{1/2} \left( 4\int |\nabla v(x)|^2 \d x\right)^{1/2} \\
& \le CZ \sqrt{a_m(a_m+\eps)}.
\end{align*}
Since $\eps>0$ can be taken arbitrarily small, we conclude that
\begin{align} \label{eq:I0am-IVam}
I_0(a_m)- I_V(a_m) \le CZa_m. 
\end{align}

Now, we consider $I_0(b_m)-I_V(b_m)$. We have
\bq \label{eq:IVa-I0b}
I_0(b_m)-I_V(b_m) \ge \lambda_0 Z, \quad \forall Z>0,
\eq
where $\lambda_0$ is defined by
$$
\lambda_0 := \inf \left\{ \frac{I_0(b)-I_V(b)}{Z} : Z=\sum_{j=1}^J Z_j >0, M_0 \ge b \ge \frac{M_0}{2} \right\}. 
$$
Here, the minimization is over potentials $V$ of the form \eqref{eq:def-V} with a fixed number of nuclei $J \in \N$ 
and fixed nuclear positions $\{{\bf  r}_j\}$, but the nuclear charges $\{Z_j\}$ and the mass of the electrons $b$ are allowed to vary. We will show that $\lambda_0>0$ at the end of the proof.

Assuming this for the moment, by inserting \eqref{eq:I0am-IVam} and \eqref{eq:IVa-I0b} into \eqref{eq:main-proof-IV-IV-I0-upper}, we find that
$$
I_V(m)-I_V(a_m)-I_0(m-a_m)  \le CZa_m - \lambda_0 Z. 
$$
Combining the latter estimate with \eqref{eq:main-proof-am} and \eqref{eq:main-proof-IV-IV-I0-lower}, we arrive at
\begin{align*}
- \frac{C}{m} &\le I_V(m)-I_V(a_m)-I_0(m-a_m)  \le CZa_m - \lambda_0 Z \\
& \le CZ \Big( Z+ C m^{-1/2} \Big) - \lambda_0 Z.
\end{align*}

when  $Z < \lambda_0 / C$, this gives a contradiction when $m\ge M_{\rm c}$ with $M_{\rm c}$ large. 

\medskip

To finish the proof, we still have to show that $\lambda_0>0$. By the definition of $\lambda_0$, there exist sequences $\{b_k\}_k \subset [M_0/2,M_0]$ and $\{Z_j^{(k)}\}_{j,k}$ such that  
$$ 
\lambda_0= \lim_{k\to \infty} \frac{I_0(b_k)-I_{V_k}(b_k)}{\sum_{j=1}^J Z_j^{(k)}} \quad \text{with }\quad V_k= -\sum_{j=1}^J \frac{Z_j^{(k)}}{|x-{\bf r}_j|}.
$$ 
Recall that by Lemma \ref{lem:mass-decomp-I0} (iii), there exists a constant $m_0>0$ such that $I_0(m)$ has minimizers for all $m \le m_0$. By Lemma  \ref{lem:mass-decomp-I0} (ii) and the binding inequality \eqref{eq:non-strict-bind-ineq}, we can decompose
$$
I_0(b_k)= I_0(b'_k)+ I_0(b_k-b'_k)
$$   
for some $b_k' \in [m_0/2,b_k]$ such that $I_0(b'_k)$ has a minimizer $v_k$. By passing to a subsequence if necessary, we can assume that $b_k' \to b' \in [m_0/2,M_0]$. Since
$$ \lim_{k\to \infty}  \cE_0((b_k'/b')^{1/2}v_k) = \lim_{k\to \infty}  \cE_0(v_k)= \lim_{k\to \infty}  I_0(b_k') = I_0(b'), $$
the sequence $\{(b_k/b')^{1/2}v_k\}$ is a minimizing sequence for $I_0(b')$. By Lemma  \ref{lem:mass-decomp-I0}(i), up to subsequences and translations, we can assume that $(b'_k/b')^{1/2}v_k$, and hence  $v_k$, converges weakly in $H^1(\R^3)$ to some $v\not \equiv 0$. Thus, for every $j=1,2,...,J$,
\bq \label{eq:converge-H1-Coulomb}
\lim_{k\to \infty} \int_{\R^3} \frac{|v_k(x)|^2}{|x-{\bf r}_j|} \d x = \int_{\R^3} \frac{|v(x)|^2}{|x-{\bf r}_j|} \d x >0.
\eq
By the binding inequality \eqref{eq:non-strict-bind-ineq} and the variational principle, we have
\begin{align*}
I_0(b_k)-I_{V_k}(b_k) &= I_0(b'_k)+ I_0(b_k-b'_k) - I_{V_k}(b_k) \\
&\ge I_0(b'_k)- I_{V_k}(b'_k) \ge \cE_0(v_k)-\cE_{V_k}(v_k) =- \int V_k |v_k|^2.
% \ge \left( \sum_{j=1}^k Z_j^{(k)} \right) \min_{j=1,..,J}  \int \frac{|v_k(x)|^2}{|x-{\bf r}_j|}
\end{align*}
Therefore, we conclude that
\begin{align*}
\lambda_0 &= \lim_{k\to \infty} \frac{I_0(b_k)-I_{V_k}(b_k)}{\sum_{j=1}^J Z_j^{(k)}} \ge \lim_{k\to \infty} \frac{- \int V_k |v_k|^2}{\sum_{j=1}^J Z_j^{(k)}} \\
&\ge \lim_{k\to \infty} \, \min_{j=1,..,J}  \int_{\R^3} \frac{|v_k(x)|^2}{|x-{\bf r}_j|} \d x = \min_{j=1,..,J}  \int_{\R^3} \frac{|v(x)|^2}{|x-{\bf r}_j|} \d x >0 .
\end{align*}
This completes the proof.
\end{proof}

\section{Partial nonexistence: Proof of Theorem \ref{thm:no-stable-minimizers}} \label{sec:instability}

In this section, we will consider a general Coulomb-type potential
$$ V\in L_{\rm loc}^{3/2}(\R^3) \quad \text{and}\quad \limsup_{|x|\to \infty} |x V(x)| <\infty.$$

To prove Theorem ~\ref{thm:no-stable-minimizers} (i), we need the following analogue of Lemma \ref{lem:mass-decomp-I0} (ii).

\begin{lemma} \label{lem:mass-decomp-IV} For every $m>0$, there exists $m \ge m' > 0$ such that 
$$ I_V(m)=I_V(m')+I_0(m-m')$$
and $I_V(m')$ has a minimizer. 
\end{lemma}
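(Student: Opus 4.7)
The plan is a concentration-compactness argument on a minimizing sequence of $I_V(m)$, adapting the proof of Lemma \ref{lem:mass-decomp-I0}(ii) to account for the fact that $\cE_V$ is not translation invariant. First, I would take a minimizing sequence $\{u_n\} \subset H^1(\R^3)$ with $\int |u_n|^2 = m$; by Lemma \ref{lem:basic-energy} it is bounded in $H^1(\R^3)$, so up to subsequence $u_n \wto u$ weakly in $H^1$ and a.e., and I set $m' := \|u\|_{L^2}^2 \in [0,m]$.

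The technical heart is to establish the energy splitting
\[
\cE_V(u_n) = \cE_V(u) + \cE_0(u_n - u) + o(1).
\]
The kinetic term and the Coulomb self-interaction $D(|\dotv|^2,|\dotv|^2)$ split via standard weak-convergence identities (using $u_n - u \wto 0$); the polynomial terms $\int |u_n|^{10/3}$ and $\int |u_n|^{8/3}$ split by the Brezis--Lieb lemma, from a.e.\ convergence and the uniform $L^{10/3}$-bound in Lemma \ref{lem:basic-energy}; the potential term satisfies $\int V|u_n|^2 \to \int V|u|^2$ by decomposing $V = V_1 + V_\infty$ with $V_1 \in L^{3/2}$ compactly supported (strong $L^p_{\loc}$ convergence of $u_n$) and $V_\infty$ obeying $\sup_x |x|\,|V_\infty(x)| < \infty$ (Hardy's inequality on $H^1$). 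Combining the splitting with $\cE_V(u) \ge I_V(m')$, $\liminf_n \cE_0(u_n - u) \ge I_0(m-m')$ (the latter from the continuity of $I_0$ in Lemma \ref{lem:binding} together with $\|u_n - u\|_{L^2}^2 \to m - m'$), and the binding inequality of Lemma \ref{lem:binding}, I obtain
\[
I_V(m') + I_0(m-m') \le I_V(m) \le I_V(m') + I_0(m - m'),
\]
hence equality throughout. In particular $\cE_V(u) = I_V(m')$, so $u$ is a minimizer of $I_V(m')$ whenever $m' > 0$, and the lemma follows.

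The main obstacle is to rule out the vanishing case $m' = 0$. If the weak limit were trivial, the splitting would force $\int V|u_n|^2 \to 0$, yielding $I_V(m) = I_0(m)$. In this degenerate regime, I would invoke Lemma \ref{lem:mass-decomp-I0}(ii) on $I_0(m)$ to write $m = \sum_j m_j$ with each $I_0(m_j)$ admitting a minimizer, and observe via the binding inequality that $I_V(m_j) = I_0(m_j)$ for any index with $m_j > 0$. I would then re-run the splitting argument on a minimizing sequence for $I_V(m_j)$; iterating the procedure---and using that for sufficiently small masses Lemma \ref{lem:mass-decomp-I0}(iii) supplies $I_0$-minimizers whose translates serve as effective test functions for $I_V$---eventually produces a minimizing sequence with a non-trivial weak limit and hence the desired $H^1$-minimizer of $I_V(m')$ for some $m' \in (0,m]$. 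Controlling this iteration, and in particular showing that the decomposition cannot collapse all mass to zero, is the subtlest part of the proof.
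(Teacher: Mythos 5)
Your main argument coincides with the paper's: weak limit $u$ of a minimizing sequence, the splitting $\cE_V(u_n)=\cE_V(u)+\cE_0(u_n-u)+o(1)$ (Brezis--Lieb for the local terms, weak convergence for the kinetic and Coulomb terms, local compactness plus the decay of $V$ for the potential term), and then the squeeze between the binding inequality \eqref{eq:non-strict-bind-ineq} and the lower bounds $\cE_V(u)\ge I_V(m')$, $\liminf_n\cE_0(u_n-u)\ge I_0(m-m')$. Up to the point where you conclude that $u$ minimizes $I_V(m')$ \emph{provided} $m'>0$, the proposal is correct and is the same proof.

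The genuine gap is the vanishing case $m'=0$, which you yourself flag as unresolved. The iteration you sketch is not needed and, as described, does not close: you never explain why repeatedly re-running the splitting on $I_V(m_j)$ must eventually produce a nontrivial weak limit, and nothing in the construction prevents every such minimizing sequence from escaping to infinity again. The correct observation is that $m'=0$ forces $I_V(m)=I_0(m)$, and this already contradicts the strict inequality $I_V(s)<I_0(s)$ for every $s>0$, which is established in the proof of Theorem~\ref{thm:short-range}: by Lemma~\ref{lem:mass-decomp-I0}\,(ii) one writes $I_0(s)=I_0(b)+I_0(s-b)$ with $0<b\le s$ and $I_0(b)$ admitting a minimizer $v$; by translation invariance $v(\dotv+y)$ is also a minimizer, and choosing $y$ so that $\int V|v(\dotv+y)|^2<0$ gives $I_V(b)<I_0(b)$, whence $I_V(s)\le I_V(b)+I_0(s-b)<I_0(s)$ by \eqref{eq:non-strict-bind-ineq}. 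You had all the ingredients in hand (you even mention using translates of $I_0$-minimizers as test functions for $I_V$), but you did not assemble them into this one-line contradiction; instead you proposed an uncontrolled recursion. Replace the entire last paragraph of your proposal by this argument and the proof is complete.
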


\begin{proof} This is a typical result from the concentration-compactness method \cite{Lions-84,Lions-84b}. For the reader's convenience, we quickly sketch its proof. Let $\{v_n\}$ be a minimizing sequence for $I_V(m)$. After extracting a subsequence, we may assume that $v_n \wto v$ weakly in $H^1(\R^3)$. By the same arguments as those leading to \eqref{eq:converge-H1-Coulomb} above and \eqref{eq:energy-split}, \eqref{eq:energy-split-2} in the appendix, we find that 
\begin{align*}
\lim_{n\to \infty} \Big( \cE_V(v_n)- \cE_V(v) -  \cE_0(v_n-v) \Big)&=0,\\
\lim_{n\to \infty} \int_{\R^3} \Big| | v_n|^2 - |v|^2 - |v_n-v|^2 \Big| &=0  .
\end{align*}
The second one of these equations implies that 
$$\lim_{n\to \infty}\int |v_n-v|^2 = m-m' \quad \text{with} \quad m':=\int |v|^2 \le m.$$
Using the binding inequality \eqref{eq:non-strict-bind-ineq} and the variational principle, we get
\begin{align*}
0 &\ge I_V(m)- I_V(m')- I_0(m-m') \ge I_V(m)- \cE_V(v)- I_0(m-m') \\
&\ge \liminf_{n\to \infty} \Big( \cE_V(v_n) - \cE_V(v) -\cE_0 (v_n-v) \Big) =0.
\end{align*}
Thus we conclude that
$$ I_V(m)=I_V(m') +I_0(m-m')$$
and $\cE_V(v)= I_V(m')$,  namely $v$ is a minimizer for $I_V(m')$. Finally, we have $m'>0$ because $I_0(s)>I_V(s)$ for all $s>0$, which has been shown in the proof of Theorem \ref{thm:short-range} in Section \ref{sec:short-range}. 
\end{proof}

Now we are ready to prove the nonexistence of stable minimizers.

\begin{proof}[Proof of Theorem ~\ref{thm:no-stable-minimizers} (i)] We distinguish two cases.

\medskip

\noindent{\bf Case 1:} Assume that $I_V(m)$ has no miminizers for some $m>0$. Then by Lemma \ref{lem:mass-decomp-IV}, we can decompose
$$ I_V(m)=I_V(a)+I_0(m-a)$$
for some $m>a>0$ such that $I_V(a)$ has a minimizer. By Lemma \ref{lem:mass-decomp-I0} (ii), we can further decompose
$$ I_0(m-a)=I_0(b)+I_0(m-a-b)$$
for some $m-a\ge b>0$ such that $I_0(b)$ has a minimizer. Combining the last two equalities and using the binding inequality \eqref{eq:non-strict-bind-ineq}, we have
\begin{align*}
I_V(m)&=I_V(a)+I_0(m-a)= I_V(a)+I_0(b)+ I_0(m-a-b) \\
&\ge I_V(a+b)+I_0(m-a-b)\ge I_V(m).
\end{align*}
Therefore, we conclude that $I_V(a)+I_0(b)=I_V(a+b)$, as desired.

\medskip

\noindent{\bf Case 2:} Assume that $I_V(m)$ has minimizers for all $m>0$. Define $\widetilde Z = \limsup_{\abs x \to \infty} \abs{x V(x)}$. Then from the proof of Theorem \ref{thm:small-Z} (see equations \eqref{eq:main-proof-am},  \eqref{eq:main-proof-IV-IV-I0-lower} and \eqref{eq:def-bm}), we can find $a_m  \ge 0$ and $b_m \in [m_0/2, M_0]$ such that 
\bq \label{eq:thm-binding-1}
\liminf_{m\to \infty} \Big( I_V(m) - I_V(a_m) - I_0(b_m) - I_0(m-a_m-b_m) \Big) \ge 0.
\eq
Restricting to a subsequence and using \eqref{eq:am<=Z}, we can assume 
$$\lim_{m\to \infty} a_m = a \in [0,\widetilde Z] \quad \text{and}\quad \lim_{m\to\infty} b_m = b \in [M_0/2,M_0].$$
From \eqref{eq:thm-binding-1} and the binding inequality \eqref{eq:non-strict-bind-ineq}, we get
\begin{align*}
0&\le \liminf_{m\to \infty} \Big( I_V(m) - I_0(m-a_m-b_m)  - I_V(a_m) - I_0(b_m) \Big) \\
&\le \liminf_{m\to \infty}  \Big( I_V(a_m+b_m) - I_V(a_m) - I_0(b_m) \Big) \\
& = I_V(a+b)- I_V(a)- I_0(b) \le 0.
\end{align*}
Therefore,
$$I_V(a+b)=I_V(a)+I_0(b).$$
Since $b>0$ and $I_0(s)>I_V(s)$ for all $s>0$ (see the proof of Theorem \ref{thm:short-range}), we find that $a>0$. By the initial assumption in Case 2, we know that $I_V(a)$ has a minimizer. We do not know yet if $I_0(b)$ has a minimizer. However, by Lemma \ref{lem:mass-decomp-I0} (ii) we can always decompose
$$ I_0(b)=I_0(b')+I_0(b-b')$$
for some $b\ge b'>0$ such that $I_0(b')$ has a minimizer. Using the binding inequality \eqref{eq:non-strict-bind-ineq} again, we get
\begin{align*}
I_V(a+b) &= I_V(a)+ I_0(b) = I_V(a)+ I_0(b') + I_0(b-b')  \\
&\ge I_V(a+b') + I_0(b-b') \ge I_V(a+b).
\end{align*}
Thus 
$$
I_V(a+b')= I_V(a)+I_0(b')
$$
with $a>0,b'>0$, and both $I_V(a)$ and $I_0(b')$ have minimizers. 
\end{proof}
%{\blue In the molecular case, since $a \le \widetilde Z = \sum_j Z_j$, a minimizer exists for $I_V(a)$.  In general, since $V \not\equiv 0$, there exist $a' \in (0, a]$ such that $I_V(a')$ has a minimizer, and 
%\[
%I_V(a) = I_V(a') + I_0(a'-a).
%\]}
%Similarly, by Lemma \ref{lem:mass-decomp-I0} (ii) we can always decompose
%$$ I_0(b)=I_0(b')+I_0(b-b')$$
%for some $b\ge b'>0$ such that $I_0(b')$ has a minimizer. Using the binding inequality \eqref{eq:non-strict-bind-ineq} again, we get
%\begin{align*}
%I_V(a+b) &= I_V(a)+ I_0(b) = {\blue I_V(a')+ I_0(b') + I_0(a-a')+ I_0(b-b')}  \\
%&\ge {\blue I_V(a'+b') + I_0(a-a') + I_0(b-b')} \ge I_V(a+b).
%\end{align*}
%Thus 
%$$
%I_V(a'+b')= I_V(a')+I_0(b')
%$$
%with $a'>0,b'>0$, and both $I_V(a')$ and $I_0(b')$ have minimizers.

Now we turn to the proof of Theorem ~\ref{thm:no-stable-minimizers} (ii). The estimates in Section \ref{sec:radius} are actually sufficient for this part. The key ingredient for the proof is the following  lemma. 

\begin{lemma} \label{lem:ux>=}Assume that $m\ge C_V^3$ and $I_V(m)$ has a minimizer $u_m$. Let $R_m$ be the radius of the system defined in Section \ref{sec:radius}. Then 
$$ \esssup_{|x|\ge R_m/2} |u_m(x)| \ge \frac{1}{C}.$$
\end{lemma}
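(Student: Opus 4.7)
The plan is a contradiction argument. Set $\epsilon := \esssup_{|x|\ge R_m/2}|u_m(x)|$ and assume $\epsilon$ is very small; I will derive a lower bound on $\cE_V(u_m)=I_V(m)$ that degenerates as $\epsilon\to 0$, and compare it with a $V$-independent linear upper bound $I_V(m)\le -cm$, getting a contradiction once $m$ is large.

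For the lower bound, I drop the non-negative kinetic and Hartree terms, use Lemma~\ref{lem:Vuu<<m} to estimate $\int V|u_m|^2\ge -C_V^3$, and split the Thomas-Fermi--Dirac integral at $|x|=R_m/2$. On the outer region $\{|x|\ge R_m/2\}$ the assumption $|u_m|\le \epsilon$ yields the improved pointwise estimate
\[
c_{\rm TF}|u_m|^{10/3}-c_{\rm D}|u_m|^{8/3}\ \ge\ -c_{\rm D}|u_m|^{8/3}\ \ge\ -c_{\rm D}\epsilon^{2/3}|u_m|^2,
\]
whose integral is at least $-c_{\rm D}\epsilon^{2/3}m$. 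On the inner region $\{|x|<R_m/2\}$, the universal pointwise bound \eqref{eq:complete_square} combined with $\int_{|x|<R_m/2}|u_m|^2\le C_V$ from Lemma~\ref{lem:>2Rm} gives a contribution at least $-C_V^2$. Altogether,
\[
\cE_V(u_m)\ \ge\ -c_{\rm D}\epsilon^{2/3}m - C_V^3.
\]

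For the upper bound, I first use $I_V(m)\le I_0(m)$ from the binding inequality \eqref{eq:non-strict-bind-ineq} and then exploit the translation invariance of $\cE_0$: Lemma~\ref{lem:mass-decomp-I0}~(iii) furnishes a minimizer $v_0$ of $I_0(m_0)$ at some $V$-independent mass $m_0>0$, with $E_0:=-I_0(m_0)>0$. Using $N\approx m/m_0$ widely separated translates of $v_0$ as a trial state (so the interaction energies vanish in the separation limit) and applying \eqref{eq:non-strict-bind-ineq} to the residual mass $m-Nm_0\in[0,m_0)$ yields $I_0(m)\le -(E_0/m_0)m + O(1)$, hence $I_V(m)\le -cm$ for large $m$, with $c := E_0/(2m_0)$ a genuinely universal constant. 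Comparing the two bounds gives $c_{\rm D}\epsilon^{2/3}\ge c-C_V^3/m$, which for $m\ge C_V^3$ (after enlarging the implicit constant in $C_V$) forces $\epsilon\ge(c/(2c_{\rm D}))^{3/2}\ge 1/C$, as claimed. The main subtlety is to maintain the $V$-independence of the upper-bound constant $c$, which relies precisely on Lemma~\ref{lem:mass-decomp-I0}~(iii) producing a minimizer of the translation-invariant problem at a fixed positive mass.
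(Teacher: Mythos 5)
Your proposal is correct and follows essentially the same route as the paper: a universal linear upper bound $I_V(m)\le -m/C$ obtained from the binding inequality and the negativity of the translation-invariant energy, combined with a lower bound in which the potential is controlled by Lemma~\ref{lem:Vuu<<m}, the Thomas--Fermi--Dirac term is split at $|x|=R_m/2$ with the inner part controlled via Lemma~\ref{lem:>2Rm}, and the outer exchange term is bounded by $c_{\rm D}\lambda^{2/3}\int|u_m|^2$. The only cosmetic difference is that the paper derives the upper bound directly from $I_V(m)\le I_V(m-\lfloor m\rfloor)+\lfloor m\rfloor I_0(1)$ rather than from separated translates of a minimizer of $I_0(m_0)$, which spares the appeal to Lemma~\ref{lem:mass-decomp-I0}~(iii).
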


\begin{proof} First, by the binding inequality \eqref{eq:non-strict-bind-ineq} we find that
$$ I_V(m) \le I_V(m- \lfloor m \rfloor ) + \lfloor m \rfloor I_0(1),$$
where $\lfloor m \rfloor$ is the largest integer that is not bigger than $m$. Since $I_V(m- \lfloor m \rfloor) \le 0$, $I_0(1)<0$ and $\lfloor m \rfloor > m-1$, we obtain
$$ \cE_V(u_m)= I_V(m) \le (m-1) I_0(1) \le -\frac{m}{C}.$$
Combining with the bound on the potential energy in Lemma \ref{lem:Vuu<<m}, we get 
\bq \label{eq:u10-u8}
\int_{\R^3} \Big(c_{\rm TF} |u_m|^{10/3}-c_{\rm D} |u_m|^{8/3} \Big) \le \cE_V(u_m) - \int V|u_m|^2 \le   -\frac{m}{C}
\eq
for $m$ large (say, $m \ge C_V^3$). Moreover, from the pointwise estimate \eqref{eq:complete_square}
and Lemma \eqref{lem:>2Rm}, it follows that
\begin{align*}
\int_{|x| \le R_m/2}   \Big(c_{\rm TF} |u_m(x)|^{10/3}- &c_{\rm D} |u_m(x)|^{8/3} \Big) \d x \\
&\ge  - \frac{c_{\rm TF}^2}{4c_{\rm D}} \int_{|x| \le R_m/2}  |u_m(x)|^2 \d x \ge - C_V.
\end{align*}
Therefore, \eqref{eq:u10-u8} implies that  
$$
\int_{|x| \ge R_m/2} -c_{\rm D} |u_m|^{8/3} \le \int_{|x|\ge R_m/2} \Big(c_{\rm TF} |u_m|^{10/3}-c_{\rm D} |u_m|^{8/3} \Big) \le  -\frac{m}{C}
$$
for $m$ large. On the other hand, if we denote $ 
\lambda:= \esssup_{|x|\ge R_m/2} |u_m(x)|,
$
then 
$$
\frac{m}{C} \le \int_{|x|\ge R_m/2} c_{\rm D} |u_m|^{8/3} \le  c_{\rm D} \lambda^{2/3} \int_{|x|\ge R_m/2} |u_m(x)|^2 \d x \le C \lambda^{2/3}m.
$$
Thus $\lambda \ge 1/C$. 
\end{proof} 

% For the upper bound in \eqref{eq:energy_bounds}, we compare the energy to the one obtained from a competitor 
% consisting of $m/m_{c,2}$ disjoint pieces moving away from the origin and from eachother. 
% Here, $m_{c,2}$ is chosen such that $I_0(m_{c,2}) < 0$. 
% Each of the pieces can have an energy $-e_0$.
%  By the minimizing property of $u_m$, we obtain 
% \[
%\cE (u_m) \leq - \lfloor m / m_{c,2} \rfloor e_0 \leq - \beta (m-m_{c,2}) .
% \]

Now we are ready to conclude the proof.

\begin{proof}[Proof of Theorem \ref{thm:no-stable-minimizers} (ii)] Assume that $I_V(m)$ has a radially symmetric minimizer $u_m$ for $m$ large (say, $m\ge C_V^3$). Note that (see \cite[Theorem 7.8]{LieLos-01})
$$\cE_V(u)-\cE_V(|u|)=\int_{\R^3} |\nabla u|^2 - \int_{\R^3} |\nabla |u||^2 \ge 0, \quad \forall u\in H^1(\R^3).$$
Therefore, by replacing $u_m$ by $|u_m|$ if necessary, we can assume that $u_m\ge 0$. Since $u_m$ is radially symmetric, there exists $v_m:[0,\infty)\to [0,\infty]$ such that
$$
u_m(x)= v_m(|x|) \quad \text{for a.e.\,} x\in \R^3.
$$
We have
\bq \label{eq:kinetic-vm-0}
m =  \int_{\R^3} | u_m(x)|^2 \d x = 4\pi \int_{0}^\infty |v_m(r)|^2 r^2 \d r.
\eq
Moreover, from \eqref{eq:kinetic-estimate} and $\cE_V(u_m)=I_V(m)<0$, it follows that 
\bq \label{eq:kinetic-vm-1}
C_V^2 m \ge  \int_{\R^3} |\nabla u_m(x)|^2 \d x = 4\pi \int_{0}^\infty |v_m'(r)|^2 r^2 \d r. 
\eq
In particular, since $v_m\in H^1([a,\infty))$ for all $a>0$, the function $r\mapsto v_m(r)$ is continuous for $r>0$.  From Lemma \eqref{lem:ux>=}, there exists $r_1\ge R_m/2$ such that $v_m(r_1)\ge 1/C$. Since $v_m$ is continuous and it vanishes at infinity, there exists $r_2 > r_1$ such that
\bq  \label{eq:kinetic-vm-2}
v_m(r_2)=\frac{v_m(r_1)}{2} \quad \text{ and } \quad v_m(r) \ge v_m(r_2) \quad \text{for all\,\,} r\in [r_1,r_2]. 
\eq
From \eqref{eq:kinetic-vm-0} and \eqref{eq:kinetic-vm-2}, we obtain
\begin{align} \label{eq:kinetic-vm-3}
\frac{m}{ 4 \pi} &\ge  \int_{r_1}^{r_2} |v_m(r)|^2 r^2 \d r \ge \frac{|v_m(r_1)|^2}{4} \int_{r_1}^{r_2} r^2 \d r  = \frac{|v_m(r_1)|^2}{12} (r_2^3-r_1^3). 
\end{align}
On the other hand, from \eqref{eq:kinetic-vm-1}, \eqref{eq:kinetic-vm-2} and H\"older's inequality,
\begin{align} \label{eq:kinetic-vm-4}
C_V^2 m \ge \int_{r_1}^{r_2} |v_m'(r)|^2 r^2 \d r & \ge \left( \int_{r_1}^{r_2} v_m'(r) \d r \right)^2 \left( \int_{r_1}^{r_2} r^{-2}\d r \right)^{-1} \nn\\
& =  \left( v_m(r_2)- v_m(r_1)\right)^2 \frac{r_1r_2}{r_2-r_1} \nn \\ 
& =\frac{|v_m(r_1)|^2}{4} \cdot \frac{r_1r_2}{r_2-r_1} .
\end{align}
Finally, we multiply \eqref{eq:kinetic-vm-3} with \eqref{eq:kinetic-vm-4}, then use $v_m(r_1)\ge 1/C$ and $r_2>r_1\ge R_m/2 \ge m/C$ by Lemma \ref{lem:Rm>m}. This yields
$$ C_V^2 m^2 \ge \frac{|v_m(r_1)|^4}{48} r_1 r_2 (r_1^2+r_1 r_2 + r_2^2) \ge \frac{m^4}{C}.$$
Thus $m\le C_V$, which finishes the proof. 
\end{proof}

\begin{remark} If we use the improved estimate in Lemma \ref{lem:Rm>m3/2}, then we can conclude faster from \eqref{eq:kinetic-vm-4} and $r_1r_2/(r_2 - r_1) \ge r_1 \ge R_m/2 \ge m^2/C_V^3$.
\end{remark}

\begin{appendix}

\section*{Appendix. Existence for $I_0(m)$}

\begin{proof}[Proof of Lemma \ref{lem:mass-decomp-I0}] (i) Let $\{v_n\}$ be a minimizing sequence for $I_0(m)$. Then $\{v_n\}_{n=1}^\infty$ is bounded in $H^1(\R^3)$. Following \cite{Lions-84}, we define 
\begin{align*}
\mathfrak{M}(\{v_n\}) &= \lim_{R\to \infty} \limsup_{n\to \infty} \sup_{y\in \R^3} \int_{|x-y|\le R}|v_n(x)|^2 \d x,
\end{align*}
which can be understood as the largest chunk of mass which stays in a bounded region (up to subsequences and translations). 
An equivalent characterization is
\begin{align*}
\mathfrak{M}(\{v_n\}) &= \sup \bigl\{ \norm{v}_{L^2} \bigl|
              \exists \{y_k\} \subset \R^3 : v_{n_k}(\dotv + y_k) \rightharpoonup  v \text{ weakly in } H^1(\R^3) \bigr\}.
\end{align*}
From the proof of Lemma I.1 in \cite{Lions-84b} (see also \cite{Lewin-10}), we have
\bq \label{eq:M>=Lp}
 \limsup_{n\to \infty} \int_{\R^3} |v_n|^{10/3} \le C \big(\mathfrak{M}(\{v_n\})\big)^{2/3}    \limsup_{n\to \infty} \| v_n\|_{H^1(\R^3)}^2 .
 \eq
for some universal constant $C>0$ independent of $\{v_n\}$. 
By Lemma~\ref{lem:basic-energy}, $ \| v_n\|_{H^1(\R^3)}^2 \le C m$ .
Combining this bound with $I_0(m) < 0$ for all $m > 0$, we see that
\begin{align*}
0 > I_0 (m)  = \lim_{n \to \infty} \cE_0(v_n)  
      &\ge - c_{\rm D} \limsup_{n \to \infty} \int \abs{v_n}^{8/3} \\
      &\ge  - c_{\rm D} m^{1/2} \limsup_{n \to \infty}  \Bigl(\int \abs{v_n}^{10/3} \Bigr)^{1/2}\\ 
      & \ge -C m \big(\mathfrak{M}(\{v_n\})\big)^{1/3} ,
\end{align*}
and therefore $\mathfrak{M}(\{v_n\}) > 0$.

Now from the characterization of $\mathfrak{M}(\{v_n\})$, by passing to a subsequence if necessary, we can find $\{y_n\}\subset \R^3$ and $v^{(1)} \in H^1(\R^3)$ such that 
$$ v_n(.+y_n) \wto v^{(1)} \text{ weakly in } H^1(\R^3) \quad \text{ and } \quad \int_{\R^3} |v^{(1)}|^2 \ge \frac{1}{2}\mathfrak{M} (\{v_n\}).$$ 
This proves (i).

\medskip

\noindent (ii) Since $\cE_0(u)$ is translation-invariant, the sequence $\{v_n(.+y_n)\}$ is also a minimizing sequence for $I_0(m)$. Therefore, we can assume $y_n=0$ for all $n$. Thus $ v_n = v^{(1)}+ v^{(1)}_n$ with $v^{(1)}_n \wto 0$ weakly in $H^1(\R^3)$. From Sobolev's embedding theorem \cite[Corollary 8.7]{LieLos-01}, by passing to a subsequence if necessary, we have $v_n^{(1)}(x) \to 0$ for a.e. $x\in \R^3$. Let us show that
\bq \label{eq:energy-split}
\lim_{n\to \infty} \Big( \cE_0(v_n)- \cE_0(v^{(1)}) -  \cE_0(v^{(1)}_n) \Big)=0. 
\eq
First, since $\nabla v_n^{(1)} \wto 0$ weakly in $L^2$, we have
\bq \label{eq:energy-split-1}
\lim_{n\to \infty} \int \Big( |\nabla v_n|^2 - |\nabla v^{(1)}|^2 - |\nabla v^{(1)}_n|^2 \Big)  
= \lim_{n\to \infty} 2\Re (\nabla v^{(1)}, \nabla v_n^{(1)}) =0. 
\eq
By Fatou's lemma with remainder term of Brezis and Lieb \cite{BreLie-83}, we get
\bq \label{eq:energy-split-2}
 \lim_{n\to \infty} \int_{\R^3} \Big| | v_n|^p - |v^{(1)}|^p - |v^{(1)}_n|^p \Big| =0 \quad \forall 2\le p \le 6. 
\eq
Combining with the Hardy-Littewood-Sobolev inequality \cite[Theorem 4.3]{LieLos-01}, 
we obtain
\begin{align} \label{eq:energy-split-3}
 \lim_{n\to \infty} D(|v_n|^2, |v_n)|^2)  
=& \lim_{n\to \infty} \Bigl( D(|v^{(1)}|^2, |v^{(1)}|^2) + D( |v^{(1)}_n|^2, |v^{(1)}_n|^2) \Bigr)  \nn \\
& \qquad + \lim_{n\to \infty} 2 D(|v^{(1)}|^2, |v^{(1)}_n|^2) .
\end{align}
But the last term tends to $0$,
since $|v^{(1)}_n|^2*|\dotv|^{-1}$ is bounded in $L^\infty(\R^3)$ and converges to $0$ pointwise. 
From \eqref{eq:energy-split-1},\eqref{eq:energy-split-2}  and \eqref{eq:energy-split-3}, we obtain the decomposition of the energy \eqref{eq:energy-split}.  Consequently, 
$$I_0(m)=\lim_{n\to \infty} \cE_0(v_n)=\cE_0(v^{(1)})+ \lim_{n\to \infty} \cE(v_n^{(1)}) \ge I_0(m_1)+ I_0(m-m_1)$$
where
$$m_1:=\int |v^{(1)}|^2 \ge \frac 1 2 \mathfrak{M}(\{v_n\}).$$
Combining with the binding inequality \eqref{eq:non-strict-bind-ineq}, we conclude that 
$$I_0(m)=I_0(m_1)+I_0(m-m_1).$$
Moreover, $v^{(1)}$ is a minimizer for $I_0(m_1)$ and $\{v_n^{(1)}\}$ is a minimizing sequence for $I_0(m-m_1)$. By repeating the above argument with $v_n$ replaced by the remainder term $v_n^{(1)}$, we find a subsequence and translations such that $v_n^{(1)}=v^{(2)}+ v_n^{(2)}$. Iterating this procedure, we construct sequences $\{v^{(j)}\}_j$ and $\{v^{(j)}_n\}_{j,n}$ such that 
\begin{itemize}
\item  $v^{(j)}$ is a minimizer for $I_0(m_j)$ with $m_j=\int |v_j|^2$ for all $j \ge 1$,

\item $I_0(m) =  \sum_{j=1}^K I_0(m_j) + \lim_{n\to \infty} \cE_0(v^{(K)}_n)$ for all $K \ge 1$,

\item $m=  \sum_{j=1}^K m_j + \lim_{n\to \infty} \int |v^{(K)}_n|^2$,

\item $m_{j+1}  \ge \frac{1}{2}\mathfrak{M} (\{v^{(j)}_n\})$ for all $j \ge 1$. 
\end{itemize}
The last two points imply that $m_j\to 0$ and $\mathfrak{M} (\{v^{(j)}_n\}) \to 0$ as $j\to \infty$. Moreover, $\limsup_{n\to \infty} \|v^{(j)}_n\|_{H^1(\R^3)}$ is bounded uniformly in $j$ because $ \liminf_{n\to \infty} \cE_0(v^{(j)}_n)$ is bounded uniformly in $j$. Therefore we deduce from \eqref{eq:M>=Lp} that
$$ \lim_{j\to \infty} \limsup_{n\to \infty} \|v_n^{(j)}\|_{L^{p}} =0, \quad \text{ for all } 2<p\le 10/3. $$
Consequently 
$$
\liminf_{K\to \infty} \liminf_{n\to \infty}\cE_0(v^{(K)}_n) \ge 0,$$
and hence
$$I_0(m) = \lim_{K\to \infty} \Big( \sum_{j=1}^K I_0(m_j) + \lim_{n\to \infty} \cE_0(v^{(K)}_n) \Big) \ge \sum_{j=1}^\infty I_0(m_j) \ge I_0\Big(\sum_{j=1}^\infty m_j\Big). $$
The last estimate follows from the binding inequality \eqref{eq:non-strict-bind-ineq}. Since $s\mapsto I_0(s)$ is strictly decreasing by Lemma~\ref{lem:binding}, we conclude that 
$$m=\sum_{j=1}^\infty m_j \quad \text{ and } \quad I_0(m) = \sum_{j=1}^\infty I_0(m_j).$$

\noindent (iii) It suffices to prove that when $m$ is sufficiently small,
\bq \label{eq:strict-binding-smallm}
I_0(m)< I_0(m')+ I_0(m-m'), \quad \forall 0<m'< m.
\eq
Then the existence of minimizers for $I_0(m)$ follows from part (ii). The following proof is adapted from the treatment of S\'anchez and Soler \cite{SanSol-04b} for the Schr\"odinger-Poisson-Slater model (which is  the TFDW theory with $c_{\rm TF}= 0$). 

First, note that it suffices to prove \eqref{eq:strict-binding-smallm} when $I_0(m')$ has a minimizer. Otherwise, by part~(ii), we can decompose further $I_0(m')=I_0(m'')+I_0(m'-m'')$ for some $0<m''<m'$ such that $I_0(m'')$ has a minimizer and then prove $I_0(m)< I_0(m'')+I_0(m-m'') \le I_0(m')+ I_0(m-m')$.
 
Let us introduce the short-hand notations
\begin{align*}
A_u &= c_{\rm W} \int  |\nabla u|^2, \quad B_u=c_{\rm TF} \int |u|^{10/3}, \\
C_u &= c_{\rm D} \int |u|^{8/3}, \quad D_u= D(\abs{u}^2, \abs{u}^2)
\end{align*}
such that
$$ \cE_0(u)= A_u + B_u - C_u + D_u.$$ 
Following the argument in \cite[Prop. 2.8]{SanSol-04b}, we have the identity 
\begin{align}
I_0(m) &=  \inf_{\int |u|^2 =m} \mathcal{E}_0(u) = \inf_{\int |u|^2 =1} \mathcal{E}_0(m^2 u(m \dotv )) = \inf_{\int |u|^2=1} \inf_{\ell>0} \mathcal{E}_0(m^2\ell^{3/2}  u( m\ell \dotv)) \nn\\
&= \inf_{\int |u|^2=1} \inf_{\ell>0} \left\{ \ell^2 \Big( m^3 A_u + m^{11/3} B_u \Big) - \ell \Big( m^{7/3} C_u - m^3 D_u \Big) \right\} \nn\\
& = \inf_{\int |u|^2=1} -\frac{m^{5/3}\Big( C_u - m^{2/3} D_u \Big)_+^2 }{ 4\Big(A_u + m^{2/3}B_u\Big) }. \label{eq:I0=frac}
\end{align}
The last equality follows from the variational formula 
$$ \inf_{\ell>0} \Big( \ell^2 a - \ell b \Big) = - \frac{b_+^2}{4a}\quad \text{with}\quad a>0, \, b\in \mathbb{R}, \, b_+=\max\{b,0\}.$$
For every $u\in H^1(\R^3)$ with $\int |u|^2=1$, let us consider the function 
$$
h_u(s)= \frac{s\Big( C_u - s D_u\Big)_+^2 }{A_u + sB_u}.
$$
By H\"older's inequality, Sobolev's inequality and the Hardy-Littlewood-Sobolev inequality, there is a universal constant $C>0$ such that for all $ u\in H^1(\R^3)$,
\bq \label{eq:ABCD}
B_u \le C A_u \Big( \int |u|^2\Big)^{2/3}, \quad D_u \le C C_u \Big( \int |u|^2\Big)^{2/3}.
\eq
Therefore, when $\int |u|^2=1$ and $s>0$ is sufficiently small (independently of $u$), we find that
\begin{align*}
h_u(s) &=  \frac{s\Big( C_u - s D_u\Big)^2 }{ A_u + sB_u}, \\
\frac{dh_u(s)}{d s} &= \frac{( C_u - s D_u ) (A_u C_u - 3 s A_u D_u - 2s^2 B_u D_u) }{(A_u + sB_u)^2}>0.
\end{align*}

Thus $s\mapsto h_u(s)$ is strictly increasing when $s>0$ is small. We are now ready to derive the strict binding inequality \eqref{eq:strict-binding-smallm} for $m$ small. From \eqref{eq:I0=frac} and the monotonicity of $s\mapsto h_s(u)$ we have that
\bq \label{eq:strict-binding-smallm-a}
\frac{I_0(m-m')}{m-m'} =  \inf_{\int |u|^2} - h_u((m-m')^{2/3}) \ge \inf_{\int |u|^2} - h_u(m^{2/3}) = \frac{I_0(m)}{m} . 
\eq
Moreover, since $I_0(m')$ has a minimizer, we can strengthen \eqref{eq:I0=frac} to
$$
\frac{I_0(m')}{m'} = \inf_{\int |u|^2=1} - h_u((m')^{2/3}) = -  h_{v}((m')^{2/3})  
$$
for $v\in H^1(\R^3)$ a suitably rescaled minimizer of $I_0(m')$ with $\int |v|^2=1$. Therefore, using the strict monotonicity of $s\mapsto h_{v}(s)$ we get
\bq \label{eq:strict-binding-smallm-b}
\frac{I_0(m')}{m'} = -  h_{v}((m')^{2/3}) >  -  h_{v}((m)^{2/3}) \ge  \inf_{\int |u|^2} - h_u(m^{2/3}) = \frac{I_0(m)}{m}. 
\eq
Combining \eqref{eq:strict-binding-smallm-a} and \eqref{eq:strict-binding-smallm-b}, we obtain \eqref{eq:strict-binding-smallm}, which finishes the proof.\end{proof}

\begin{remark}
From \eqref{eq:I0=frac} and \eqref{eq:ABCD}, we obtain the interesting identity
$$\lim_{m\to 0^+} \frac{I_0(m)}{m^{5/3}} = \inf_{\int |u|^2 =1} -\frac{C_u^2}{4A_u} = -\frac{c_{\rm D}^2}{4c_{\rm W}}\sup_{\int |u|^2=1} \frac{\left(\int |u|^{8/3} \right)^2}{\int |\nabla u|^2} \in (-\infty, 0).
$$
The fact that $I_0(m)/m\to 0$ as $m\to 0^+$ has been proved by Lions \cite{Lions-87}.  
\end{remark}
\end{appendix}

\end{document}